\newtheorem{definition}{Definition}
\newtheorem{theorem}{Theorem}
\newtheorem{lemma}{Lemma}
\newtheorem{remark}{Remark}
\newcommand{\ind}{\mathds{1}}
\newcommand{\indep}{\rotatebox[origin=c]{90}{$\models$}}
\newcommand{\sign}{\mathrm{sign}}
\newcommand{\fdr}{\textnormal{FDR}}
\newcommand{\eqd}{\stackrel{\textnormal{d}}{=}}
\def \defn {\,:=\,}
\newcommand{\kl}{\textnormal{KL}}
\newcommand{\eps}{\varepsilon}
\newcommand{\stepa}[1]{\overset{\rm (a)}{#1}}
\newcommand{\given}{{\,|\,}}
\newcommand{\biggiven}{\,\big{|}\,}
\newcommand{\Biggiven}{\,\Big{|}\,}
\newcommand{\bigggiven}{\,\bigg{|}\,}
\def\@#1\@{\begin{align}#1\end{align}}
\def\$#1\${\begin{align*}#1\end{align*}}
\definecolor{myblue}{rgb}{.8, .8, 1}
\definecolor{mathblue}{rgb}{0.2472, 0.24, 0.6} 
\definecolor{mathred}{rgb}{0.6, 0.24, 0.442893}
\definecolor{mathyellow}{rgb}{0.6, 0.547014, 0.24}
\newcommand{\tX}{{\tilde{X}}}
\newcommand{\calH}{{\mathcal{H}}}
\newcommand{\calN}{{\mathcal{N}}}
\newcommand{\calS}{{\mathcal{S}}}
\newcommand{\calV}{{\mathcal{V}}}
\newcommand{\calW}{{\mathcal{W}}}
\newcommand{\EE}{\mathbb{E}}
\newcommand{\PP}{\mathbb{P}}
\newcommand{\RR}{\mathbb{R}}
\newcommand{\bX}{\mathbf{X}}
\newcommand{\bY}{\mathbf{Y}}
\newcommand{\kn}{\textnormal{kn}}
\newcommand{\ebh}{\textnormal{ebh}}
\newcommand{\ci}{\textnormal{ci}}
\newcommand{\cst}{\textnormal{cst}}
\newcommand{\bW}{{\bm W}}
\let\emptyset\varnothing
\let\hat\widehat
\let\tilde\widetilde
\title[Derandomized knockoffs]{
Derandomized knockoffs: leveraging e-values for\\
false discovery rate control}
\author{Zhimei Ren}
\address{Department of Statistics and 
Data Science, Wharton School, 
University of Pennsylvania, PA, USA.}
\email{zren@wharton.upenn.edu}
\author[Zhimei Ren and Rina Foygel Barber]{Rina Foygel Barber}
\address{Department of Statistics, University of Chicago, IL, USA.}
\email{rina@uchicago.edu}
\begin{document}
\begin{abstract}
Model-X knockoffs is a flexible wrapper method for high-dimensional
regression algorithms, which provides guaranteed control of the 
false discovery rate (FDR).
Due to the randomness inherent to the method, different runs of model-X knockoffs on the same dataset
often result in different sets of selected variables, which is undesirable
in practice. In this paper, we introduce a methodology for derandomizing 
model-X knockoffs with provable FDR control. The key insight of our proposed method lies
in the discovery that the knockoffs procedure is 
in essence an e-BH procedure. We make use of 
this connection, and derandomize model-X knockoffs by
aggregating the e-values resulting from 
multiple knockoff realizations.
We prove
that the derandomized procedure controls the FDR
at the desired level, without any additional
conditions (in contrast, previously proposed methods for
derandomization are not able to guarantee FDR control).
The proposed method is evaluated with
numerical experiments, where we find that the
derandomized procedure achieves comparable power and
dramatically decreased selection variability when compared
with model-X knockoffs. 
\end{abstract}
\keywords{Multiple hypothesis testing; Knockoffs; Variable selection; Stability;
False discovery rate.}

\section{Introduction}
\label{sec:intro}
In high-dimensional datasets, it is common to have measurements of a large number of potential explanatory features,
of which relatively few are informative for predicting the target response. The problem of identifying these few relevant features among
the many candidates, also known as the {\em variable selection problem}, is often framed as a test of conditional independence:
for which features $X_j$ is it true that $Y\!\not\!\!\!\indep X_j \mid X_{-j}$? Here, $Y$ denotes the response variable and $X_{-j}$ 
denotes all measured features aside from $X_j$---effectively, this question asks whether $X_j$ carries information for predicting $Y$,
beyond what is already contained in the set of remaining features $X_{-j}$. 

The {\em knockoff filter} \citep{barber2015controlling,candes2018panning} is a framework for selecting a set of $X_j$'s 
that are likely relevant, with guaranteed control of the false discovery rate (FDR). 
It operates by constructing a  knockoff copy $\tilde{X}_j$ of each candidate feature $X_j$, after which the response $Y$
and (original and knockoff) features $X_1,\dots,X_p,\tilde{X}_1,\dots,\tilde{X}_p$ are given as input to an arbitrary variable selection procedure.
By examining whether the procedure chooses substantially more original variables (the $X_j$'s) than knockoffs (the $\tilde{X}_j$'s), 
we may infer whether the procedure is successfully controlling the FDR.

Since the $\tilde{X}_j$'s are drawn randomly, the resulting output $\calS_{\kn}$ of the method is random as well---that is, multiple
runs of knockoffs on  the same observed dataset can in general lead to different selected sets $\calS_{\kn}$. Empirically, it has been observed
that the output can be highly variable from one run to another, which is potentially an undesirable property. 
To address this, \cite{ren2021derandomizing} propose the ``derandomized knockoffs''. After running knockoffs $M$
times on the given dataset, their procedure computes
\begin{align}\label{eqn:old_prob}
\Pi_j = \frac{1}{M} \sum_{m=1}^M \ind \big\{j \in \calS_{\kn}^{(m)}\big\}
\end{align}
for each feature $X_j$, where $\calS_{\kn}^{(m)}$ is the selected set of features on the $m$-th run of the knockoffs method. 
The final selected set is then given by $\calS_{\kn\textnormal{--derand}} = \{j : \Pi_j\geq \eta\}$, all features $X_j$ exceeding some threshold probability of selection for a random run of knockoffs.
 For this method, \cite{ren2021derandomizing} establish a guaranteed bound on the expected number of false discoveries,
but it appears impossible to prove a bound directly on the FDR.

\paragraph{Our contributions} In this work, we find that, with a simple and natural modification in the construction of the derandomized knockoffs procedure, we can restore the guarantee of
FDR control. Specifically, we will consider a {\em weighted} probability of selection, replacing $\Pi_j$ with
\begin{equation}\label{eqn:informal_intro}\frac{1}{M} \sum_{m=1}^M 
\textnormal{weight}_j^{(m)} \cdot \ind \big\{j \in \calS_{\kn}^{(m)}\big\},\end{equation}
where, informally, we choose a lower weight if
many features were selected in the $m$-th run.
In the case of a single run of knockoffs ($M=1$), our new procedure reduces to the original knockoff filter, while for large $M$,
the output is derandomized (i.e., as $M\rightarrow\infty$, the set $\calS_{\kn\textnormal{--derand}}$ becomes a deterministic function of the observed data).
Our method builds on the recent e-BH procedure of \cite{wang2022false} (a generalization of the Benjamini--Hochberg algorithm for FDR control \citep{benjamini1995controlling}), which allows us to find an adaptive, FDR-controlling threshold for the weighted selection probabilities
computed in~\eqref{eqn:informal_intro}. 
 By leveraging these tools, we are able to provide 
a version of the knockoff filter that offers derandomization without losing the benefit of a rigorous guarantee on the FDR.

\subsection{Problem setup}
Suppose there are $p$ explanatory variables $X = (X_1,X_2,\ldots,X_p)$
and a response $Y$, where $(X,Y)$ are jointly sampled from some distribution $P_{XY}$.
For each $j\in  [p] \defn \{1,2,\ldots,p\}$, we wish to test the hypothesis
\@\label{eq:ci_test}
H_j: Y~\indep~X_j \given X_{-j}.
\@
We call a feature $j$ a {\em null} if $H_j$ 
is true, and a {\em non-null} otherwise. We will write $\calH_0 = \{j :H_j \mbox{ is true}\}$
and $\calH_1 = [p]\backslash\calH_0$ to denote the set of nulls and the set of non-nulls, respectively.

Imagine now we have $n$ samples $(X_i,Y_i)\stackrel{\textnormal{iid}}{\sim}P_{XY}$,
 and we assemble the 
covariates into a matrix $\bX \in \RR^{n \times p}$
and the responses into a vector $\bY \in \RR^n$. 
A multiple testing procedure applied to $(\bX,\bY)$
then produces a set $\calS$ of selected variables. The goal here 
is to include in this set as many non-nulls as possible
while controlling the false discovery rate
\[
\fdr \defn \EE \Big[\frac{|\calS \cap \calH_0|}{|\calS| \vee 1}\Big],
\]
where $a\vee b = \max(a,b)$. In this work, we consider the {\em model-X framework},
where we (approximately) know the marginal distribution $P_X$ of the covariates $X$, but do not assume any knowledge of the model of 
$Y \given X$. For example, in many applications, we may have ample unlabeled data (i.e., 
observations of $X=(X_1,\dots,X_p)$) which may be used for estimating $P_X$, but relatively little labeled data (i.e., observations of labeled
pairs $(X,Y)$); see \citet{candes2018panning} for more discussion of this framework.

\subsection{Background: model-X knockoffs}
The model-X knockoff filter, or knockoffs for short, is a multiple 
testing procedure that provably controls the FDR under the model-X 
framework. Given the dataset $(\bX,\bY)$ as well as knowledge of $P_X$,
the knockoffs procedure starts by generating a knockoff copy $\tilde{\bX}$
satisfying
\begin{equation}\label{eqn:pairwise_exch}
\big(\bX_j, \tilde{\bX}_j, \bX_{-j},\tilde{\bX}_{-j}\big)
 \eqd
\big(\tilde{\bX}_j, \bX_j, \bX_{-j},\tilde{\bX}_{-j}\big)
\end{equation}
for each $j$,
where $\eqd$ denotes equality in distribution. This condition requires $\tilde{\bX}$ 
to depend on $\bX$ (so that any dependence between features $\bX_j$ and $\bX_k$ is mimicked
by dependence between, e.g., $\tilde{\bX}_j$ and $\bX_k$), but $\tilde{\bX}$ is constructed
independently of $\bY$, i.e., $\bY ~\indep~ \tilde{\bX} \given \bX$.
(See~\citet{candes2018panning,
sesia2019gene,romano2020deep,bates2021metropolized,spector2022powerful}
for discussion and examples of knockoff generation.)
Having sampled $\tilde \bX$, the knockoffs procedure proceeds
to compute feature importance statistics $W \in \RR^p$ 
using the augmented dataset $([\bX,\tilde{\bX}],\bY)$:
\[
W = \calW\big( [\bX, \tilde{\bX}], \bY \big),
\]
where $\calW(\cdot)$ is an algorithm evaluating the 
importance of the features, with the property that 
swapping $\bX_j$ and $\tilde{\bX}_j$ flips the sign 
of $W_j$, and a larger value of $W_j$ suggests  
evidence against the null. For instance, to compute
the Lasso coefficient-difference (LCD) statistic proposed by \citet[Eqn.~(3.7)]{candes2018panning},
we can run
the cross-validated Lasso on $\bY \sim [\bX,\tilde{\bX}]$;
let $\beta_j$ denote the resulting coefficient of $\bX_j$ 
and $\tilde{\beta}_j$ that of $\tilde{\bX}_j$, and define
$W_j = |\beta_j| - |\tilde{\beta}_j|$ for each $j\in [p]$. 
Since the knockoffs act as a control group for the real features,
if $\bX_j$ is a null then the Lasso is equally likely to select $\bX_j$ or $\tilde{\bX}_j$,
and moreover, $W_j$'s distribution is symmetric around 0.

The final selected set of features is then given by 
\begin{equation}\label{eqn:define_knockoff}
\calS_{\kn} \defn \{j: W_j \ge T\}\textnormal{ where }T \defn \inf\bigg\{t >0: \frac{1 + \sum_{j\in[p]} \ind \{W_j \le -t\}}
{\sum_{j\in[p]} \ind\{W_j \ge t\}} \le \alpha\bigg\}.
\end{equation}
\citet{barber2015controlling,candes2018panning} prove that this selected set satisfies $\fdr\leq \alpha$.

\subsection{Background: e-values and the e-BH procedure}
The concept of e-values~\citep{vovk2021values} is another 
useful tool for statistical inference and multiple testing
in general. Given a null hypothesis, we call a non-negative
random variable $E$ an ``e-value'' if $\EE[E]\le 1$ 
under the null (in contrast, a p-value is a random variable 
$P \in [0,1]$ such that $\PP(P \le t) \le t$ for any $t\in[0,1]$
under the null). For the e-value, a large value shows evidence 
against the null, and hence the null hypothesis is rejected when 
the e-value exceeds a threshold. For example, if the goal is
to test a hypothesis at level $\alpha$, we can reject the null
hypothesis when $E \ge 1/\alpha$. The probability of making a 
type-I error is then 
\[
\PP(E \ge 1/\alpha) \le
\alpha \cdot \EE[E] \le \alpha,
\]
where the first step applies Markov's inequality and the second
 follows from the definition of
e-values.

In the context of multiple hypothesis testing, let $e_j$ be an e-value
associated with a null hypothesis $H_j$.
With e-values $e_1,e_2,\ldots,e_p$,~\citet{wang2022false}
propose the e-BH procedure that achieves FDR control when testing
$H_1,H_2,\ldots,H_p$ simultaneously. The e-BH procedure operates
in a similar way to the BH procedure~\citep{benjamini1995controlling}: the 
rejection set (i.e., the selected set of discoveries) is given by 
\begin{equation}\label{eqn:define_ebh}\calS_{\ebh} = \Big\{j : e_j \geq \frac{p}{\alpha\hat{k}}\Big\}\textnormal{ where }
\hat{k} = \max\Big\{k \in[p]: e_{(k)} \ge \frac{p}{\alpha k}\Big\},\end{equation}
where $e_{(1)}\ge \dots\ge e_{(p)}$ denotes the order statistics of the $e_j$'s, and where 
we take the convention
that if this latter set is empty then we set $\hat{k}=0$ (and so $\calS_{\ebh} = \emptyset$).
\citet{wang2022false} prove that the e-BH procedure satisfies
 $\fdr \leq \alpha \cdot |\calH_0|/p\leq \alpha$. Importantly, this result allows for arbitrary dependence among the $e_j$'s.

\subsection{Additional related work}
Previously, many attempts have been made to derandomize
knockoffs. As mentioned earlier,~\citet{ren2021derandomizing}
propose a derandomizing scheme controlling the expected 
number of false discoveries;~\citet{nguyen2020aggregation}
introduce a aggregation method aiming at FDR control, but
under strong assumptions such as that
the null feature importance statistics are i.i.d.
In a parallel line of work,~\citet{emery2019controlling,gimenez2019improving} consider constructing multiple 
{\em simultaneous} knockoffs to improve the stability of 
knockoffs---this is in contrast to our attempt to
aggregate multiple {\em independent} knockoff copies.
The idea of using e-values for derandomization can also
be found in~\citet{vovk2020note,wasserman2020universal}.

Broadly speaking, the process of derandomization via computing each 
feature's selection probability over random runs
of the knockoffs method, can be viewed as a special case 
of stability selection \citep{meinshausen2010stability} (see also
\citet{liu2010stability,shah2013variable} for related methods). 
More recently,~\citet{dai2022false,dai2023scale} also consider derandomization
(over multiple sample splits) with FDR control via computing each feature's
weighted selection frequency (this is termed the ``inclusion rate'' in 
their paper). Their procedure also takes the form of~\eqref{eqn:informal_intro}
but with a different choice of the weight terms, $\textnormal{weight}_j^{(m)}$,
and with asymptotic rather than finite-sample FDR control guarantees (we will compare
the definitions of the procedures, and the different results, in more detail in Section~\ref{sec:procedure} below).

\section{Knockoffs as an e-BH procedure}
\label{sec:equivalence}
Our first main result shows that 
the two multiple testing procedures introduced
in Section~\ref{sec:intro} can be unified through a certain
perspective---the knockoffs procedure is in fact {\em equivalent}
to a (relaxed) e-BH procedure with a class of properly
defined e-values.

\subsection{The equivalence between the knockoffs and the e-BH}
To see why knockoffs is equivalent to an e-BH 
procedure, we first define a set of relaxed
e-values. Recall that in the knockoffs procedure, 
we have the feature importance statistics $W$
and the stopping time $T$. For each $j \in [p]$, 
define
\@\label{eq:defn_e_vals}
e_j \defn p \cdot \frac{ \ind \{W_j \ge T\}}
{1 + \sum_{k \in [p]} \ind \{W_k \le -T\}}.
\@
We will now see that running e-BH on the $e_j$'s is exactly equivalent to running knockoffs.
\begin{theorem}\label{thm:equiv}
Let $\calS_{\kn}$ be the set of selected features for the knockoff procedure~\eqref{eqn:define_knockoff},
and let $\calS_{\ebh}$ be the set of selected features for the e-BH procedure~\eqref{eqn:define_ebh}
applied to $e_1,\dots,e_p$, where $e_j$ is defined in~\eqref{eq:defn_e_vals}. Then $\calS_{\kn}=\calS_{\ebh}$.
\end{theorem}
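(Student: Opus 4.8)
The plan is to exploit the very rigid structure of the e-values in~\eqref{eq:defn_e_vals}: each $e_j$ takes only one of two values. Writing $R \defn |\calS_{\kn}| = \sum_{j}\ind\{W_j\ge T\}$ for the number of knockoff rejections and $D \defn 1 + \sum_{k}\ind\{W_k\le -T\}$ for the denominator appearing in~\eqref{eq:defn_e_vals}, we have $e_j = p/D$ whenever $W_j \ge T$ and $e_j = 0$ otherwise; so exactly $R$ of the $e_j$'s equal the positive number $p/D$ and the remaining $p-R$ equal $0$. From here, proving $\calS_{\ebh} = \calS_{\kn}$ is a short deterministic computation with the e-BH rule~\eqref{eqn:define_ebh}, and the whole argument really amounts to pinning down the index $\hat k$ produced by that rule.

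First I would dispose of the degenerate case $R = 0$ (that is, $\calS_{\kn} = \emptyset$): then every $e_j$ equals $0$, so $e_{(k)} = 0 < p/(\alpha k)$ for all $k\in[p]$, the set in~\eqref{eqn:define_ebh} is empty, $\hat k = 0$, and $\calS_{\ebh} = \emptyset = \calS_{\kn}$. Now suppose $R \ge 1$. Since $D \ge 1$, the common nonzero value $p/D$ is a well-defined strictly positive number, so the order statistics are $e_{(1)} = \dots = e_{(R)} = p/D$ and $e_{(R+1)} = \dots = e_{(p)} = 0$. The key input is that, by the definition of the knockoff threshold in~\eqref{eqn:define_knockoff}, the estimated FDP at threshold $T$ is at most $\alpha$, i.e.\ $D/R \le \alpha$, equivalently $p/D \ge p/(\alpha R)$. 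Consequently $k = R$ belongs to $\{k : e_{(k)} \ge p/(\alpha k)\}$, since $e_{(R)} = p/D \ge p/(\alpha R)$; whereas for any $k > R$ we have $e_{(k)} = 0 < p/(\alpha k)$, so no larger index qualifies. Hence $\hat k = R$.

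It remains to substitute $\hat k = R$ into~\eqref{eqn:define_ebh}, which gives $\calS_{\ebh} = \{j : e_j \ge p/(\alpha R)\}$. For $j$ with $W_j \ge T$ we have $e_j = p/D \ge p/(\alpha R)$, so $j \in \calS_{\ebh}$; for $j$ with $W_j < T$ we have $e_j = 0 < p/(\alpha R)$, so $j \notin \calS_{\ebh}$. Therefore $\calS_{\ebh} = \{j : W_j \ge T\} = \calS_{\kn}$, as claimed. The only step that genuinely uses the knockoffs construction (beyond the definition of the $e_j$'s themselves) is the identity $\hat k = R$: the lower bound $e_{(R)} \ge p/(\alpha R)$ is precisely the defining inequality of the stopping time $T$, while the complementary bound $e_{(R+1)} < p/(\alpha (R+1))$ is automatic because those order statistics vanish. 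Everything else is routine bookkeeping on a two-valued array, so I do not anticipate any real obstacle.
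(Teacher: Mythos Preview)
Your proof is correct and follows essentially the same approach as the paper's: both use that the $e_j$'s are two-valued, invoke the defining inequality $D/R\le\alpha$ of the threshold $T$ to show $e_{(R)}\ge p/(\alpha R)$, and note that $e_j=0$ for $j\notin\calS_{\kn}$. The only cosmetic difference is that the paper argues $\hat k\ge K$ (and then handles the reverse inclusion via $e_j=0$), whereas you pin down $\hat k=R$ exactly before reading off both inclusions at once; these are the same argument organized slightly differently.
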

\begin{proof}
Taking $K = |\calS_{\kn}|=\sum_{k\in[p]}\ind\{W_k \ge T\}$, we see that $\frac{1 + \sum_{k\in[p]}\ind\{W_k \le -T\}}{K}\leq \alpha$
by definition of the knockoffs threshold $T$~\eqref{eqn:define_knockoff},
and so for all $j\in\calS_{\kn}$, $e_j$ takes the value
$\frac{p}{1 + \sum_{k\in[p]}\ind\{W_k \le -T\}} \geq p/(\alpha K)$. Therefore,
 $e_{(K)}  \geq p/(\alpha K)$, and so we must have $\hat{k} \geq K$
  when we run the e-BH procedure~\eqref{eqn:define_ebh}. This proves that $j\in\calS_{\ebh}$ for all
$j\in\calS_{\kn}$. Conversely, if $j\not\in\calS_{\kn}$, then $W_j < T$ and so $e_j=0$, which means that $j$ cannot be selected by the e-BH procedure, i.e., $j\not\in\calS_{\ebh}$.
\end{proof}

\subsection{A relaxation of the e-BH procedure}
While the theorem above shows that knockoffs gives identical output to the e-BH procedure,
we note that it does not yet give an alternative explanation for why knockoffs controls the FDR---this is because
the quantities $e_j$ defined in~\eqref{eq:defn_e_vals} have not been shown to be e-values. Indeed, while $e_j\geq 0$ holds by definition,
it may not be the case that $\EE[e_j]\leq 1$ for all $j\in\calH_0$. Instead, the $e_j$'s satisfy a more relaxed criterion.
A key step in the proof of the FDR control property of the knockoff filter is the bound \citep{barber2015controlling,candes2018panning}
\begin{equation}\label{eqn:kn_ratio}
\EE\left[\frac{\sum_{j\in\calH_0} \ind\{W_j\ge T\}}{1 + \sum_{j\in\calH_0} \ind\{W_j\le -T\}}\right] \leq 1.\end{equation}
Intuitively, this arises from the fact that
each null $j\in\calH_0$ is equally likely to have $W_j\ge T$ as to have $W_j \le -T$,
because the knockoff copy $\tilde{\bX}_j$ acts as a control group for the null feature $\bX_j$. As a result, we can immediately see that
the $e_j$'s defined in~\eqref{eq:defn_e_vals} satisfy
\begin{equation}\label{eqn:evals_relaxed}
\sum_{j\in\calH_0} \EE[e_j] \leq p.\end{equation}
Clearly, this condition is strictly weaker than requiring $\EE[e_j]\leq 1$ for all $j\in\calH_0$.
Nonetheless, as pointed out by~\citet{wang2022false}, this condition is sufficient
to bound FDR in the e-BH procedure: 
the $e_j$'s satisfying~\eqref{eqn:evals_relaxed} can be thought of as weighted e-values; that is, we 
can write $e_j = \EE[e_j]\cdot \bar{e}_j$,
where  $\bar{e}_j$'s are strict e-values with 
$\EE[\bar{e}_j] = 1$ and $\EE[e_j]$'s are weights.
This is summarized in the following theorem.
\begin{theorem}[{\citealt{wang2022false}}]
\label{thm:relaxed_ebh}
Suppose the values $e_1,e_2,\ldots,e_p$ satisfy condition~\eqref{eqn:evals_relaxed}.
Then the selected set  $\calS_{\ebh}$ of the e-BH procedure~\eqref{eqn:define_ebh} satisfies $\fdr\leq \alpha$.
\end{theorem}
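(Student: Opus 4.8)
The plan is to bound the false discovery proportion pointwise by a linear functional of the $e_j$'s in which the random index $\hat{k}$ no longer appears, and only then take expectations, invoking condition~\eqref{eqn:evals_relaxed} at the very end. First I would record the elementary deterministic fact that the e-BH procedure rejects at least $\hat{k}$ hypotheses: since $\hat{k}$ belongs to the set $\{k : e_{(k)} \geq p/(\alpha k)\}$ whenever that set is nonempty, we have $e_{(\hat{k})} \geq p/(\alpha\hat{k})$, and hence at least $\hat{k}$ of the $e_j$'s meet the rejection threshold $p/(\alpha\hat{k})$, i.e.\ $|\calS_{\ebh}|\geq \hat{k}$. (One can in fact check that equality holds, but the inequality is all that is needed; and when $\hat{k}=0$ we have $\calS_{\ebh}=\emptyset$ by convention, so there is nothing to prove.)

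Next, on the event $\{\hat{k}\geq 1\}$ we have $|\calS_{\ebh}|\geq 1$, so $|\calS_{\ebh}|\vee 1 = |\calS_{\ebh}|$, and I would write
\[
\fdp = \frac{|\calS_{\ebh}\cap\calH_0|}{|\calS_{\ebh}|} = \frac{1}{|\calS_{\ebh}|}\sum_{j\in\calH_0}\ind\{e_j \geq p/(\alpha\hat{k})\},
\]
then apply the ``Markov-type'' pointwise bound $\ind\{e_j\geq p/(\alpha\hat{k})\}\leq \frac{\alpha\hat{k}}{p}\,e_j$, which is valid because $e_j\geq 0$. Summing over $\calH_0$ yields $\fdp\leq \frac{\alpha\hat{k}}{p\,|\calS_{\ebh}|}\sum_{j\in\calH_0}e_j\leq \frac{\alpha}{p}\sum_{j\in\calH_0}e_j$, where the last step uses $|\calS_{\ebh}|\geq \hat{k}$. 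On the complementary event $\{\hat{k}=0\}$ we have $\fdp=0$, so the same inequality holds trivially; therefore $\fdp\leq \frac{\alpha}{p}\sum_{j\in\calH_0}e_j$ holds surely.

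Finally I would take expectations and use~\eqref{eqn:evals_relaxed}:
\[
\fdr = \EE[\fdp]\leq \frac{\alpha}{p}\sum_{j\in\calH_0}\EE[e_j]\leq \frac{\alpha}{p}\cdot p = \alpha.
\]
The proof is short, and I do not anticipate a genuine obstacle; the one point that requires care is that $\hat{k}$ is a random quantity depending on \emph{all} the $e_j$'s (nulls and non-nulls alike), so it cannot be factored out of the expectation directly — the argument circumvents this by deriving the deterministic bound above, in which $\hat{k}$ has been eliminated, before integrating. (The same reasoning also gives the sharper conclusion $\fdr\leq \alpha|\calH_0|/p$ under the stronger normalization $\sum_{j\in\calH_0}\EE[e_j]\leq|\calH_0|$, although this is not needed for the statement as given.)
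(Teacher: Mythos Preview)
Your proof is correct and follows essentially the same argument the paper relies on: the paper does not give a self-contained proof of Theorem~\ref{thm:relaxed_ebh} (it is quoted from \citet{wang2022false}), but the key inequality you use, $\ind\{e_j \ge p/(\alpha\hat{k})\}\le \tfrac{\alpha\hat{k}}{p}\,e_j$, is precisely the one the paper highlights in its sharpness remark (equation~\eqref{eq:ebh_ineq}) and deploys in the appendix proof of Theorem~\ref{thm:robustness}. The only cosmetic difference is that the paper writes the threshold as $p/(\alpha|\calS_{\ebh}|)$ rather than $p/(\alpha\hat{k})$, which spares the intermediate step $|\calS_{\ebh}|\ge\hat{k}$; as you note, in fact $|\calS_{\ebh}|=\hat{k}$, so the two formulations coincide.
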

From this point on, then, we will refer to any $e_1,\dots,e_p$ satisfying~\eqref{eqn:evals_relaxed} as e-values,
even though the original condition (i.e., $\EE[e_j]\leq 1$ for $j\in\calH_0$) is strictly stronger and may not be satisfied.

Before proceeding, we remark that the knockoff 
e-values are close to sharp in the following sense.

\begin{remark}[Sharpness of the knockoff e-values]
Given e-values $e_1,e_2,\ldots,e_p$, let $\calS_{\ebh}$
denote the set of features selected by the e-BH procedure, 
and its FDR is given by 
\$
\fdr = \EE\Big[ 
\frac{\sum_{j \in \calH_0} 
\ind\{e_j \ge \frac{p}{\alpha |\calS_{\ebh}|}\}}
{|\calS_{\ebh}| \vee 1}\Big].
\$
The proof of FDR control for the e-BH procedure 
makes use of the inequality
\@\label{eq:ebh_ineq}
\ind\Big\{e_j \ge \frac{p}{\alpha|\calS_{\ebh}|}\Big\}
\le \frac{e_j \alpha |\calS_{\ebh}|}{p},
\@
which is tight when $e_j \in \{p/(\alpha |\calS_{\ebh}|),0\}$.
Meanwhile, the knockoff e-values defined in~\eqref{eq:defn_e_vals} are 
either $0$ or
\@\label{eq:sharpness_1}
\frac{p}{1+\sum_{j\in[p]} \ind\{W_j \le -T\}} \ge 
\frac{p}{\alpha|\calS_{\kn}|} = 
\frac{p}{\alpha|\calS_{\ebh}|},
\@
where the first inequality is by the definition of $T$.
For any $t < T$, again by the definition of $T$, we have that
\@\label{eq:sharpness_2}
\frac{p}{1+\sum_{j\in[p]} \ind\{W_j \le -t\}} <
\frac{p}{\alpha\sum_{j\in[p]} \ind \{W_j \ge t\}} \le 
\frac{p}{\alpha|\calS_{\kn}|} = 
\frac{p}{\alpha|\calS_{\ebh}|}.
\@
Combining~\eqref{eq:sharpness_1} and~\eqref{eq:sharpness_2}, we can see 
when $\sum_{j \in [p]}\ind\{W_j \le -T\}$ is reasonably large, 
the nonzero e-values will be close to $p/(\alpha|\calS_{\ebh}|)$, and 
therefore the inequality in~\eqref{eq:ebh_ineq} is close to tight.

\end{remark}
\section{Derandomizing knockoffs}
\label{sec:method}
One major advantage of using e-values for multiple testing
is that validity depends only on the expected values $\EE[e_j]$---in particular,
no assumptions are needed on the dependence structure among the $e_j$'s \citep{wang2022false}.
As observed by \cite{vovk2021values}, 
the fact that the average of multiple e-values is still an e-value is a very favorable property,
as it allows us to pool results from multiple runs or multiple analyses of an experiment.
Since Theorem~\ref{thm:equiv} finds an equivalent e-value based formulation
of the knockoffs method, we can therefore combine results from 
different knockoff realizations by averaging the corresponding
 e-values, to achieve a derandomized procedure.

\subsection{The procedure}\label{sec:procedure}

Suppose we have used our knowledge of $P_X$ to construct
a valid distribution for $\tilde{\bX} \given \bX$, i.e., so that~\eqref{eqn:pairwise_exch} is satisfied for each $j$.
We then
generate $M$ copies of the knockoff matrix, drawing $\tilde{\bX}^{(1)},
\tilde{\bX}^{(2)},\ldots,\tilde{\bX}^{(M)}$ from this distribution (drawn i.i.d.~conditional on the observed data $(\bX,\bY)$). 
Let $W^{(m)}$ denote the feature importance statistics
computed with the $m$-th knockoff matrix,
$
W^{{(m)}} = \calW\big([\bX, \tilde{\bX}^{(m)}],\bY\big)
$.
Choosing some $\alpha_{\kn}\in(0,1)$, we compute the threshold
\@
\label{eq:defn_stopping_time}
T^{(m)} = \inf\bigg\{t>0: \frac{1+\sum_{j\in [p]} \ind\{W_j^{(m)}\le -t\}}
{\sum_{k\in[p]} \ind\{W_k^{(m)} \ge t\}} \le \alpha_{\kn}\bigg\}
\@
for each $m$, so that $\calS^{(m)}_{\kn} = \{j : W_j^{(m)}\ge T^{(m)}\}$ is the selected set for the knockoff filter
when run with the $m$-th copy $\tilde{\bX}^{(m)}$ of the knockoff matrix. Let
\@\label{eq:e_val_m}
e_j^{(m)} = p \cdot\frac{ \ind\{W_j^{(m)} \ge T^{(m)}\}}
{1+\sum_{k \in [p]}\ind\{W_k^{(m)} \le -T^{(m)}\}}
\@
be the corresponding e-value, so that as proved in Theorem~\ref{thm:equiv}, the $m$-th selected set $\calS^{(m)}_{\kn}$ is equivalent to
running e-BH on the e-values $e_1^{(m)},\dots,e_p^{(m)}$.
For each $j\in[p]$, we aggregate the e-values
obtained from these $M$ knockoff copies by taking 
the average 
\$
e_j^{\textnormal{avg}} = \frac{1}{M} \sum^M_{m=1} e_j^{(m)}.
\$
Finally, we obtain the selected set of discoveries, $\calS_{\kn\textnormal{--derand}}$, by applying the 
e-BH procedure at level $\alpha_{\ebh}$ to  the e-values $e_1^{\textnormal{avg}},\dots,e_p^{\textnormal{avg}}$. 
Note that the parameters $\alpha_{\kn}$ and $\alpha_{\ebh}$ may be different---we will discuss
this more below.

The following theorem proves that this derandomized  
procedure
controls the FDR at level $\alpha_{\ebh}$.
\begin{theorem}\label{thm:fdr}
For any $\alpha_{\kn},\alpha_{\ebh} \in (0,1)$, and any number of knockoff copies $M\geq 1$,
the selected set $\calS_{\kn\textnormal{--derand}}$ computed in Algorithm~\ref{alg:aggregate_knockoff} satisfies $\fdr\leq\alpha_{\ebh}$.
\end{theorem}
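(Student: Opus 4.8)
The plan is to reduce the claim to Theorem~\ref{thm:relaxed_ebh}, which asserts that the e-BH procedure run at level $\alpha_{\ebh}$ controls FDR at $\alpha_{\ebh}$ whenever its input values $e_1,\dots,e_p$ satisfy the relaxed condition $\sum_{j\in\calH_0}\EE[e_j]\le p$. By construction, $\calS_{\kn\textnormal{--derand}}$ is exactly the output of e-BH at level $\alpha_{\ebh}$ applied to the averaged values $e_1^{\textnormal{avg}},\dots,e_p^{\textnormal{avg}}$, so it suffices to verify
\[
\sum_{j\in\calH_0}\EE\big[e_j^{\textnormal{avg}}\big]\le p .
\]

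First I would check that each individual knockoff copy already produces a relaxed e-value vector. Fix $m\in[M]$. The statistics $W^{(m)}$ together with the threshold $T^{(m)}$ from~\eqref{eq:defn_stopping_time} satisfy the core knockoff inequality of the form~\eqref{eqn:kn_ratio}, namely
\[
\EE\!\left[\frac{\sum_{j\in\calH_0}\ind\{W_j^{(m)}\ge T^{(m)}\}}{1+\sum_{j\in\calH_0}\ind\{W_j^{(m)}\le -T^{(m)}\}}\right]\le 1 .
\]
This is precisely the estimate of~\citet{barber2015controlling,candes2018panning}; it uses only the pairwise exchangeability~\eqref{eqn:pairwise_exch} of $\tilde{\bX}^{(m)}$, together with the fact that $T^{(m)}$ is a stopping time for the natural filtration, and in particular it holds verbatim for whichever value $\alpha_{\kn}\in(0,1)$ is used to define $T^{(m)}$. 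Enlarging the denominator of $e_j^{(m)}$ from a sum over $\calH_0$ to a sum over all of $[p]$ only increases it, so
\[
\sum_{j\in\calH_0}\EE\big[e_j^{(m)}\big]
\le p\cdot\EE\!\left[\frac{\sum_{j\in\calH_0}\ind\{W_j^{(m)}\ge T^{(m)}\}}{1+\sum_{j\in\calH_0}\ind\{W_j^{(m)}\le -T^{(m)}\}}\right]\le p ,
\]
which is exactly~\eqref{eqn:evals_relaxed} applied to the $m$-th copy.

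Then I would pass to the average using only linearity of expectation:
\[
\sum_{j\in\calH_0}\EE\big[e_j^{\textnormal{avg}}\big]
=\frac{1}{M}\sum_{m=1}^M\sum_{j\in\calH_0}\EE\big[e_j^{(m)}\big]
\le\frac{1}{M}\sum_{m=1}^M p = p ,
\]
so the averaged values meet the hypothesis of Theorem~\ref{thm:relaxed_ebh} at level $\alpha_{\ebh}$, and $\fdr\le\alpha_{\ebh}$ follows immediately. It is worth stressing that the $e_j^{(m)}$ are strongly dependent across $m$ (all built from the same $(\bX,\bY)$) and the $e_j^{\textnormal{avg}}$ are dependent across $j$; none of this matters, because Theorem~\ref{thm:relaxed_ebh}, i.e.\ the e-BH guarantee of~\citet{wang2022false}, imposes no dependence assumptions whatsoever on the e-values. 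This is exactly the ``an average of e-values is an e-value'' principle of~\citet{vovk2021values} that makes the derandomization legitimate.

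Given the machinery already assembled in Sections~\ref{sec:equivalence} and~\ref{sec:procedure}, there is no deep obstacle here: the only point needing a moment's care is that the per-copy inequality of the type~\eqref{eqn:kn_ratio} is insensitive to the choice of $\alpha_{\kn}$, so that the thresholds $T^{(m)}$ may legitimately be computed at a level different from the reporting level $\alpha_{\ebh}$, and that replacing the $\calH_0$-sum by the $[p]$-sum in the denominator preserves the inequality. Everything else is linearity of expectation and a single invocation of Theorem~\ref{thm:relaxed_ebh}.
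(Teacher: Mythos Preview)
Your proposal is correct and follows essentially the same route as the paper: verify the relaxed e-value condition $\sum_{j\in\calH_0}\EE[e_j^{(m)}]\le p$ for each copy via the knockoff martingale bound~\eqref{eqn:kn_ratio}, average over $m$ by linearity, and invoke Theorem~\ref{thm:relaxed_ebh}. The only minor point is that Algorithm~\ref{alg:aggregate_knockoff} actually uses the early-stopping threshold~\eqref{eq:early_stopping_time} rather than~\eqref{eq:defn_stopping_time}, but as Remark~\ref{remark:generalized_stopping_time} notes this is still a valid stopping time for the masked filtration, so the same inequality applies.
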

\begin{proof}
Applying the bound~\eqref{eqn:kn_ratio} with the $m$-th knockoff copy $\tilde{\bX}^{(m)}$ in place of $\tilde{\bX}$, and with the threshold 
$\alpha_{\kn}$ in place of $\alpha$, we see that
$\EE\left[\frac{\sum_{j\in\calH_0} \ind\{W_j^{(m)}\ge T^{(m)}\}}{1 + \sum_{j\in\calH_0} \ind\{W_j^{(m)}\le -T^{(m)}\}}\right] \leq 1$.
This implies that the $m$-th set of e-values satisfies $\sum_{j\in\calH_0} \EE[e_j^{(m)}] \leq p$, for each $m$. Taking the average
over $m=1,\dots,M$, we have $\sum_{j\in\calH_0}\EE[e_j^{\textnormal{avg}}]\leq p$ as well.
The result then follows from Theorem~\ref{thm:relaxed_ebh}.
\end{proof}

\begin{remark}
A special case of Algorithm~\ref{alg:aggregate_knockoff} is when
$\alpha_{\kn}=\alpha_{\ebh} = \alpha$ and $M=1$. In this case, the derandomized procedure
reduces to the original knockoffs procedure at level $\alpha$, and 
$\alpha_{\ebh}$ is also the optimal choice for $\alpha_{\kn}$.

To see the optimality, let $T(\alpha)$ and $\calS(\alpha)$ denote 
the selection threshold and the selection set of knockoffs with 
target FDR level $\alpha$; let $e_j(\alpha)$ 
be the e-value by plugging $T(\alpha)$ in~\eqref{eq:defn_e_vals}.
When $\alpha_{\kn} < \alpha_{\ebh}$, we have by construction that
$|\{j:e_j(\alpha_{\kn})>0\}| = |\calS(\alpha_{\kn})|
\le |\calS(\alpha_{\ebh})|$. Since only positive e-values can 
possibly be selected, the number of selections 
made by e-BH applied to $\{e_j(\alpha_{kn})\}_{j\in[p]}$
is no larger than $|\calS(\alpha_{\ebh})|$. On the other hand,
when $\alpha_{\kn} > \alpha_{\ebh}$, we assume $|\calS(\alpha_{\kn})|
> |\calS(\alpha_{\ebh})|$, since otherwise $e_j(\alpha_{\kn})
\le e_j(\alpha_{\ebh})$ for any $j\in[p]$ and the result is immediate.
By the definition of $T(\alpha_{\kn})$, 
\$ 
\frac{1+\sum_{j\in[p]}\ind\{W_j \le -T(\alpha_{\kn})\}}
{\sum_{j\in[p]} \ind\{W_j \ge T(\alpha_{\kn})\}}
> \alpha_{\ebh}.
\$
As an implication, for any integer $K$ such that 
$|\calS(\alpha_{\ebh})| + 1 \le K \le |\calS(\alpha_{\kn})|$,
the $K$-th largest element in $\{e_j(\alpha_{\kn})\}_{j\in[p]}$ is
\$ 
\frac{p}{1+\sum_{j \in [p]} \ind 
\{W_j \le -T(\alpha_{\kn})\}} < \frac{p}{\alpha_{\ebh}
\cdot |\calS(\alpha_{\kn})|}
\le \frac{p}{\alpha_{\ebh}\cdot K}.
\$
By definition, the e-BH procedure cannot make more than 
$|\calS(\alpha_{\ebh})|$ selections. Collectively, we 
show that $\alpha_{\kn} = \alpha_{\ebh}$ is optimal.
\end{remark}

\begin{remark}
In fact, we are allowed to use different methods for constructing the knockoffs $\tilde{\bX}^{(m)}$
 and/or different functions $\calW$ for defining the feature importance statistics $W^{(m)}$, for each run $m=1,\dots,M$;
the FDR control result will still hold as long as the conditions for validity of the knockoff procedure are satisfied
for each $m$. This could be the case when two different labs are 
using different knockoffs generating mechanisms and/or different $\calW$ 
for their data analysis and wish to combine their results. More generally, 
combining different knockoff configurations across
multiple runs can potentially improve the robustness of our procedure, and 
this is an interesting direction for future research.
\end{remark}

\begin{remark}[Generalization of $T^{(m)}$]
\label{remark:generalized_stopping_time}
More generally, we are free to choose the $m$-th threshold $T^{(m)}$ in a different way
as long as it is still a stopping time with respect to the filtration generated
by a masked version of $W^{(m)}$, and the FDR guarantee for derandomized knockoffs will still hold.

To be specific, for each run $m$ of knockoffs,
the values $e_j^{(m)}$ defined in~\eqref{eq:e_val_m}
are valid e-values (according to the relaxed definition~\eqref{eqn:evals_relaxed}),
as long as $T^{(m)}$ is a stopping time with respect to the filtration generated
by a masked version of $W^{(m)}$, meaning that for each $t>0$,
the event $\ind\{T^{(m)}\ge t\}$
is determined by (1) the magnitudes $|W^{(m)}_j|$ for all $j$, (2) the values $W^{(m)}_j$
for all $j$ with $|W_j^{(m)}|< t$, and (3) $\sum_{j:|W_j^{(m)}|\ge t} \ind\{W_j^{(m)} >0\}$. 
For example, for any $c \ge 0$, one can define 
\$
T^{(m)} \defn
\inf\bigg\{t>0: \frac{c + \sum_{j\in[p]}\ind\{W_j^{(m)} \le -t\}}
{\sum_{k\in[p]} \ind\{W_k^{(m)} \ge t\}} \le \alpha_{\kn}\bigg\}.
\$
\end{remark}

\begin{remark}
In practice, we recommend letting $M$ be 
the largest acceptable number given the computational 
constraint of the user.
\end{remark}
\paragraph{A uniform improvement on $T^{(m)}$ in~\eqref{eq:defn_stopping_time}}
For any choice of $\alpha_{\kn}$, we can uniformly 
improve the power of our procedure by slightly 
modifying the stopping time defined in~\eqref{eq:defn_stopping_time}.
This improvement is inspired by~\citet{luo2022improving}, and 
specifically, we define an alternative stopping time as
\@\label{eq:early_stopping_time}
T^{(m)} \defn  \inf\bigg\{t>0: 
\frac{1+\sum_{j\in[p]}\ind\{W_j^{(m)} \le -t\}}
{\sum_{k \in [p]}\ind\{W_k^{(m)} \ge t\}} \le \alpha_{\kn}
\textnormal{ or } \sum_{j\in[p]} \ind\{W_j \ge t\}<1/\alpha_{\kn}  \Big\}.
\@
Note that \eqref{eq:early_stopping_time} differs 
from~\eqref{eq:defn_stopping_time} 
only when 
\$\inf\bigg\{t>0: 
\frac{1+\sum_{j\in[p]}\ind\{W_j \le -t\}}
{\sum_{k \in [p]}\ind\{W_k \ge t\}} \le \alpha_{\kn}\bigg\}
> \inf\Big\{t>0: \sum_{j\in[p]} \ind\{W_j^{(m)} \ge t\}<1/\alpha_{\kn}  \Big\},
\$
that is, the ``hopeless'' case where~\eqref{eq:defn_stopping_time}
gives an infinite stopping time and all the corresponding 
e-values are $0$'s. The alternative definition
of $T^{(m)}$ only \emph{increases} the value of e-values by stopping earlier in the
powerless case, therefore leading to a uniformly more powerful procedure.
Plugging~\eqref{eq:early_stopping_time} in~\eqref{eq:defn_e_vals}
still yields valid e-values as a result of Remark~\ref{remark:generalized_stopping_time}. The improvement is particularly significant
when $\alpha_{\kn}$ and/or the fraction of non-nulls is small.
The complete procedure with the improved stopping time 
is described in Algorithm~\ref{alg:aggregate_knockoff}.

\subsubsection{Comparison to~\citet{ren2021derandomizing}}
Recall that the procedure of~\citet{ren2021derandomizing} 
computes the (unweighted) selection probability defined 
in~\eqref{eqn:old_prob}, then picking the features whose 
selection probability passes a constant threshold.
In contrast, our procedure instead computes a weighted 
selection probability, and then 
thresholding it with a data-dependent cutoff---here, 
the weight corresponding to the $m$-th run is given by 
\$ 
\frac{1/(1+\sum_{k \in [p]} \ind\{W_k^{(m)} \le -T^{(m)}\})}
{\sum_{l \in [M]} 1/(1+\sum_{k \in [p]}\ind\{W_k^{(l)} \le -T^{(l)}\} )}.
\$

\subsubsection{Comparison to~\citet{dai2022false,dai2023scale}}
Our procedure uses e-values of the form $e_j = \frac{1}{M} \sum_{m=1}^M 
\textnormal{weight}_j^{(m)} \cdot \ind\{j \in \calS_{\kn}^{(m)}\}$,
where we have now seen that the weights are chosen as
\[ \textnormal{weight}_j^{(m)}= p \cdot\frac{1}
{1+\sum_{k \in [p]}\ind\{W_k^{(m)} \le -T^{(m)}\}}.\]
In contrast, the method proposed in~\citet{dai2022false,dai2023scale}
can be rewritten in our notation by taking
\begin{equation}\label{eqn:weights_Dai} \textnormal{weight}_j^{(m)}= p \cdot\frac{1}
{\alpha_{\kn}\cdot \sum_{k\in[p]} \ind\{W_k^{(m)} \ge t\}}.\end{equation}
By definition of the knockoff procedure, comparing the denominators, the weights used by~\citet{dai2022false,dai2023scale} 
are always less than or equal to the weights in our proposed method; consequently,
our method is strictly more powerful. In addition, 
the results of~\citet{dai2022false,dai2023scale} establish asymptotic control of the FDR
and require additional strong conditions (e.g., an assumption that the ranking of the baseline 
algorithm is consistent asymptotically).
In contrast, our result in Theorem~\ref{thm:fdr} offers finite-sample FDR control without any additional
assumptions.
(Indeed, since the key
to establish finite-sample FDR control is showing that the weighted selection 
frequencies we construct in~\eqref{eqn:informal_intro} is a class of (relaxed) e-values,
and since this is also true when the weights $\textnormal{weight}_j^{(m)}$ are chosen
according to~\citet{dai2022false,dai2023scale}'s more conservative rule~\eqref{eqn:weights_Dai},
our analysis can also be applied to show a finite-sample FDR control result for their method.)

\begin{algorithm}[t]
\caption{Derandomized knockoffs with e-values}\label{alg:aggregate_knockoff}
\begin{algorithmic}[1]
\REQUIRE Data $(\bX,\bY)$; parameters $\alpha_{\kn},\alpha_{\ebh} \in (0,1)$, 
$M \in \mathbb{N}_+$.\;\\ 
\vspace{0.05in}
\FOR{$m = 1,\ldots,M$}
\STATE Sample the knockoff copy $\tilde{\bX}^{(m)}$.\;
\STATE Compute the feature importance statistics: $W^{(m)} = \calW\big([\bX,\tilde{\bX}^{(m)}],\bY)$.\;
\STATE Compute the knockoffs threshold $T^{(m)}$ according to~\eqref{eq:early_stopping_time}.\; 
\STATE Compute the e-values $e_j^{(m)}$ according to~\eqref{eq:e_val_m} for each $j\in[p]$.\; 
\ENDFOR
\STATE Compute the averaged e-values $e_j^{\textnormal{avg}} = \frac{1}{M}\sum^M_{m=1} e_j^{(m)}$ for each $j\in[p]$.\;
\STATE Compute $\hat{k} = \max\{k: e^{\textnormal{avg}}_{(k)} \ge p/(\alpha_{\ebh} k)\}$, or $\hat{k} = 0$ if this set is empty.\;
\vspace{0.05in}
\ENSURE The selected set of discoveries $\calS_{\kn\textnormal{--derand}} \defn \{j \in[p]: e^{\textnormal{avg}}_j \ge p/(\alpha_{\ebh}\hat{k})\}$.
\end{algorithmic}
\end{algorithm}

\subsection{Choices of parameters}\label{sec:choices_of_parameters}
In Algorithm~\ref{alg:aggregate_knockoff}, we have the freedom 
to choose two potentially different parameters $\alpha_{\ebh}$ and $\alpha_{\kn}$.
While $\alpha_{\ebh}$ determines the ultimate FDR guarantee of the procedure (as established in Theorem~\ref{thm:fdr}),
the parameter $\alpha_{\kn}$ does not affect this FDR guarantee, but may instead affect the power of the method. 
Should we simply choose $\alpha_{\kn} = \alpha_{\ebh}$ (particularly given that
this choice reduces to the original knockoffs method when $M=1$)? 

In fact, it turns out that choosing  $\alpha_{\kn} < \alpha_{\ebh}$ is preferable when $M>1$. To understand why, we will sketch a simple scenario 
where choosing  $\alpha_{\kn} = \alpha_{\ebh}$ might lead to zero power, while choosing 
a smaller value of $\alpha_{\kn}$
can lead to high power.
Suppose that the data contains $s$ many non-nulls, which each have extremely strong signals, so that 
any single run of the knockoff filter is highly likely to select all $s$ non-nulls. 
For each run $m$, we will expect a false discovery proportion of approximately $\alpha_{\kn}$, meaning that we have
$|\calS_{\kn}^{(m)}\cap\calH_0| \approx \frac{\alpha_{\kn}}{1 - \alpha_{\kn}} s$ false discoveries together
with the $\approx s$ true discoveries. In particular, we should then have
\[1 + \sum_{j\in\calH_0}\ind\big\{W_j^{(m)} \le -T^{(m)}\big\} \approx \sum_{j\in\calH_0}\ind\big\{W_j^{(m)} \ge T^{(m)}\big\}  = 
\big|\calS_{\kn}^{(m)}\cap\calH_0\big| \approx \frac{\alpha_{\kn}}{1 - \alpha_{\kn}} s,\]
where the first step holds since null features $j$ have statistics $W_j^{(m)}$ that are symmetric around zero. 
Since each non-null $j$ is selected in (nearly) every run of knockoffs, 
while for a null $j\in\calH_0$
it may be the case that $j$ is selected for only a small proportion of the runs,
we therefore expect to see
\[e_j^{\textnormal{avg}}\approx \frac{p}{\frac{\alpha_{\kn}}{1 - \alpha_{\kn}} s}\textnormal{ for }  j\in\calH_1, \quad e_j^{\textnormal{avg}}\approx 0 \textnormal{ for } j\in\calH_0.\]
Applying the e-BH procedure at level $\alpha_{\ebh}$ to these e-values, we see that power can be high only
 if $p /(\frac{\alpha_{\kn}}{1 - \alpha_{\kn}} s)  \gtrapprox p / (\alpha_{\ebh} s)$, while otherwise power will be zero as we will not
 be able to make any discoveries. In other words, $\alpha_{\kn}=\alpha_{\ebh}$ will likely lead to zero power but 
 $\alpha_{\kn} \lessapprox \frac{\alpha_{\ebh}}{1+\alpha_{\ebh}}$ will lead to (nearly) perfect recovery.

The above derivation provides some intuition 
for choosing $\alpha_{\kn} < \alpha_{\ebh}$ in order to obtain high power.
As pointed out by Remark~\ref{remark:generalized_stopping_time},
we also have the freedom to tune the offset parameter $c$ in defining
the knockoff threshold. To provide a guidance on parameter selection, 
we numerically investigate the power of our proposed procedure with different choices of 
$\alpha_{\kn}$ and $c$ in the simulated experiments (see Section~\ref{sec:simulation} and 
Section~\ref{sec:add_simulations} of the appendix). Empirically, we find that the combination of 
$c = 1$ and $\alpha_{\kn}=\alpha_{\ebh}/2$ consistently works well in different 
scenarios;
whether these parameters are optimal, or can be chosen in a better way,
remains an open question for future work.

\subsection{Derandomization: a closer look}
In Algorithm~\ref{alg:aggregate_knockoff}, conditional on the data $(\bX,\bY)$, for each feature $j$ the $e_j^{(m)}$'s are i.i.d.~for $m=1,\dots,M$.
By the strong law of large numbers, we then have
\begin{equation}\label{eqn:LLN}e_j^{\textnormal{avg}}=
\frac{1}{M}\sum_{m=1}^M e_j^{(m)} {\longrightarrow}
~e_j^{\infty} \defn 
\EE\big[e_j^{(1)} \biggiven \bX,\bY\big],
\end{equation}
almost surely as $M\rightarrow\infty$. 
Let $\calS_{\kn\textnormal{--}\infty}$ denote the selected set obtained by applying e-BH
to these limiting e-values $e_1^{\infty},\dots,e_p^\infty$. Note 
that $\calS_{\kn\textnormal{--}\infty}$ is now a deterministic function of the data $(\bX,\bY)$.
 By~\eqref{eqn:LLN}, we would then expect that the selected set $\calS_{\kn\textnormal{--derand}}$
obtained by averaging over $M$ runs of knockoffs, should eventually stabilize to this completely derandomized 
selected set $\calS_{\kn\textnormal{--}\infty}$. 

To quantify this intuition, let 
$
\Delta = \min_{k\in[p]} |e_{(k)}^{\infty} - p/(\alpha k)|
$.
Applying Hoeffding's inequality, we have with probability at least
$1 - 2p \cdot \exp(-2\Delta^2 M/p^2)$ that 
$\max_{j\in[p]}|e_j^{\textnormal{avg}} - e_j^{\infty}|< \Delta$, which implies $\max_{k\in[p]}|e_{(k)}^{\textnormal{avg}} - e_{(k)}^{\infty}|< \Delta$.
On this event, then, the definition of the e-BH procedure implies that
 $\calS_{\kn\textnormal{--derand}} =\calS_{\kn\textnormal{--}\infty}$.

\section{Simulations}
\label{sec:simulation}
We next illustrate the performance of our proposed derandomized knockoffs method,
and compare to the original knockoffs procedure,
with numerical experiments.\footnote{Code to reproduce all experiments on simulated and real data is available at 
\url{https://github.com/zhimeir/derandomized_knockoffs_fdr}. 
The knockoffs method is implemented via the R-package \texttt{knockoff} \citep{patterson2018knockoff}.}

\paragraph{Settings} We generate data for two different models: the Gaussian linear model
and the logistic model. 
In both settings, we choose sample size 
$n=1000$. In the linear case, the feature dimension
$p = 800$ with $|\calH_1| = 80$ non-nulls, and 
in the logistic case, $p = 600$ with $|\calH_1| = 50$ non-nulls.
The marginal distribution $P_X$ of the feature vector $X\in\RR^p$
is given by $X\sim\calN(0,\Sigma)$, where the covariance matrix has entries $\Sigma_{jk}=0.5^{|j-k|}$.
The distribution of $Y \given X$ is defined as follows for our two models:
\[\textnormal{Gaussian linear model: }Y\given X \sim\calN(X^\top\beta , 1), \quad
\textnormal{Logistic model: }Y\given X \sim\textnormal{Bernoulli}\bigg(\frac{e^{X^\top\beta}}{1+e^{X^\top\beta}}\bigg),\]
where $\beta \in \RR^p$ is the coefficient vector. 
To determine $\beta$, we first construct $\bar{\beta} \in \RR^{|\calH_1|}$,
where each entry of $\bar{\beta}$ is independently generated from 
$\calN(A,1)$, with $A$ denoting the overall signal amplitude. Once generated, 
$\bar{\beta}$ is fixed across the experiments. We then determine the 
coefficient vector $\beta$ by letting 
\$
& \beta = 
\Big(\underbrace{0,\ldots,0}_{z},\frac{\bar{\beta}_1}{\sqrt{n}},
\underbrace{0,\ldots,0}_{z },-\frac{\bar{\beta}_2}{\sqrt{n}},
\ldots\Big),
\$
where $z = 9$ in the linear case and $z=11$ in the logistic case.
The signal amplitude $A\in\{4, 5,6,7,8\}$ for the Gaussian linear model or 
$A\in\{10, 15, 20, 25,30,35\}$ for the logistic model. 

\paragraph{Methods} 
For the original model-X knockoff filter, the knockoffs $\tilde{\bX}$ are constructed using
the multivariate Gaussian distribution $P_X$ of the features, as in \citet[Eqn.~(3.2),~(3.13)]{candes2018panning},
and the feature importance statistics $W\in\RR^p$ are computed via the Lasso coefficient-difference (LCD)
statistic as in \citet[Eqn.~(3.7)]{candes2018panning}. Finally, the selection
set $\calS_{\kn}$ is computed via the knockoff filter run with
 target FDR level $\alpha = 0.1$.

For the derandomized knockoffs method (Algorithm~\ref{alg:aggregate_knockoff}),
we draw $M=50$ copies of the knockoffs, and compute the $m$-th selected set $\calS_{\kn}^{(m)}$
using the selection threshold defined in~\eqref{eq:early_stopping_time}
with $\alpha_{\kn} \in \{0.01,0.02,\ldots,0.2\}$, 
and then apply e-BH with target FDR level $\alpha_{\ebh} = 0.1$.
When compared with the original knockoffs, $\alpha_{\kn}$ is 
taken to be $0.05$.

\paragraph{Measures of performance}
For each model and each signal amplitude setting, the data
set $(\bX,\bY)$ is generated independently 100 times. For each draw of the dataset $(\bX,\bY)$,
we also replicate the knockoffs procedure (i.e., original or derandomized) 20 times.
(When investigating the effect of different parameters, the derandomized knockoffs procedure is implemented once for each draw 
of the dataset.)
 For each model, each signal amplitude, and each of the two methods,
we measure performance in several different ways.
First, we estimate the power and FDR
as
\[\textnormal{Power} = \frac{1}{100\cdot 20}\sum_{d = 1}^{100} \sum_{k = 1}^{20} \frac{| \calS_{d,k} \cap \calH_1|}{|\calH_1|}\textnormal{\ \ and \ }
\fdr = \frac{1}{100\cdot 20}\sum_{d = 1}^{100} \sum_{k = 1}^{20} \frac{| \calS_{d,k} \cap\calH_0|}{|\calS_{d,k} | \vee 1},\]
where $\calS_{d,k}$ denotes the selected set for the $d$-th draw of the dataset $(\bX,\bY)$
and the $k$-th run of (original or derandomized) knockoffs for this dataset. 

Next, we would like to quantify the ``randomness'' of each selection
procedure.
Note 
that there are two sources of variability:
the randomness conditional on the dataset (i.e., arising from the knockoffs construction), and 
that from the dataset itself. 
We define two different measures, to capture the overall (``marginal'') variability,
and to capture the ``conditional'' variability that arises from the knockoffs construction
(note that it is this latter measure that derandomized knockoffs aims to reduce).
Define
\[\hat{p}_j = \frac{1}{100\cdot 20}\sum_{d=1}^{100}\sum_{k=1}^{20} \ind\{j \in \calS_{d,k}\}\textnormal{ \ \ and \ }\hat{p}_{j,d} = \frac{1}{20}\sum_{k=1}^{20} \ind\{j \in \calS_{d,k}\}\]
for each $j\in[p]$, where $\hat{p}_j$ is the empirical probability of selection on average over a random draw of the datasets and the knockoffs,
while $\hat{p}_{j,d}$ is
the empirical probability of selection conditional
on the $d$-th draw of the dataset. Let
$\hat{s}$ be the average number of selections over all $d,k$, and let $\hat{s}_d$ be the average number over all runs $k$ of knockoffs for a fixed dataset $d$.
We then define
\[\textnormal{Marginal selection variability} = \frac{\sum_{j\in[p]} \hat{p}_j(1-\hat{p}_j)}{ p \cdot \frac{\hat{s}}{p}(1-\frac{\hat{s}}{p})}.\]
To motivate this definition, if the selected set $\calS_{d,k}$ contains the same $\hat{s}$ features for each $d,k$ 
then we will have $\hat{p}_j \in\{0,1\}$ for all $j$ and so $\text{Var}_{\textnormal{marg}} =0$,
while if $\calS_{d,k}$ contains a random selection of $\hat{s}$ features for each $d,k$, then we will have $\hat{p}_j \approx \hat{s}/p$
for all $j$ and so $\text{Var}_{\textnormal{marg}} \approx1$.
Similarly, we define
\[\textnormal{Conditional selection variability} = \frac{\sum_{d=1}^{100}\sum_{j\in[p]} \hat{p}_{j,d}(1-\hat{p}_{j,d})}{ \sum_{d=1}^{100}p \cdot \frac{\hat{s}_d}{p}(1-\frac{\hat{s}_d}{p})}.\]

\begin{figure}[h]
\centering 
\rotatebox{90}{Gaussian}
\begin{minipage}{0.45\textwidth}
\centering
\includegraphics[width = 0.8\textwidth]{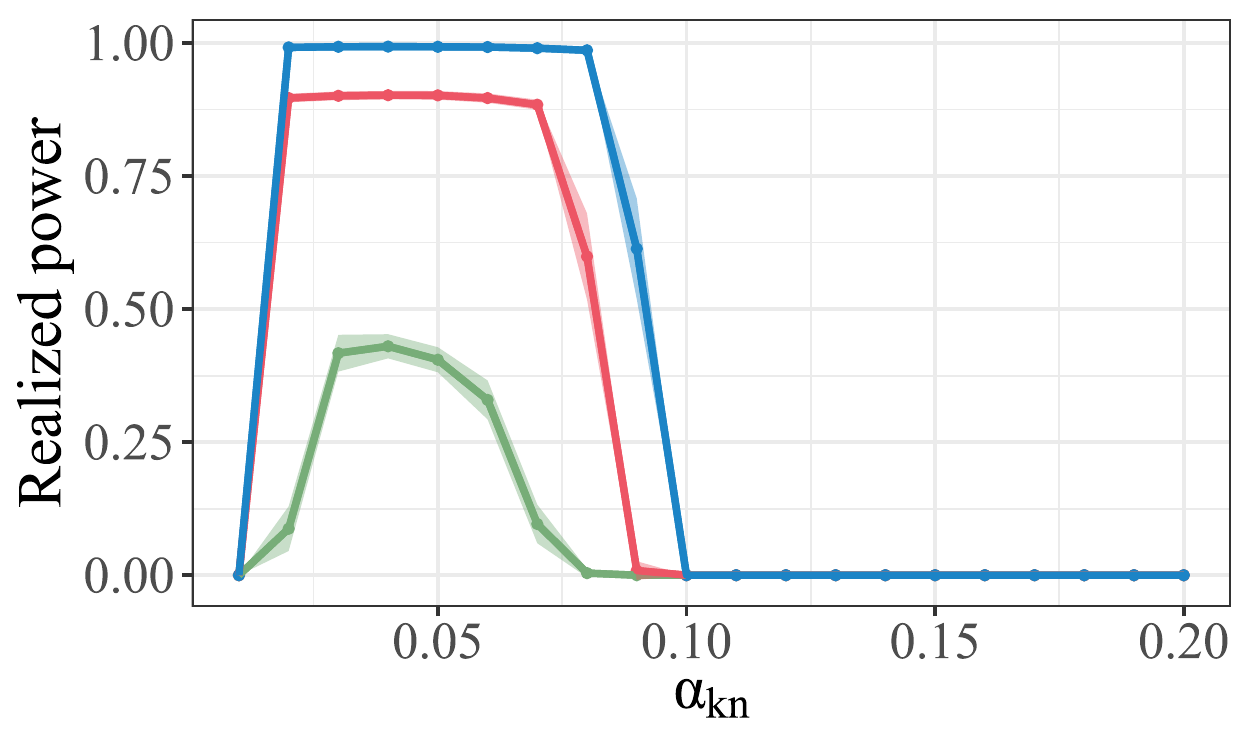}
\end{minipage}
\begin{minipage}{0.45\textwidth}
\centering
\includegraphics[width = 0.8\textwidth]{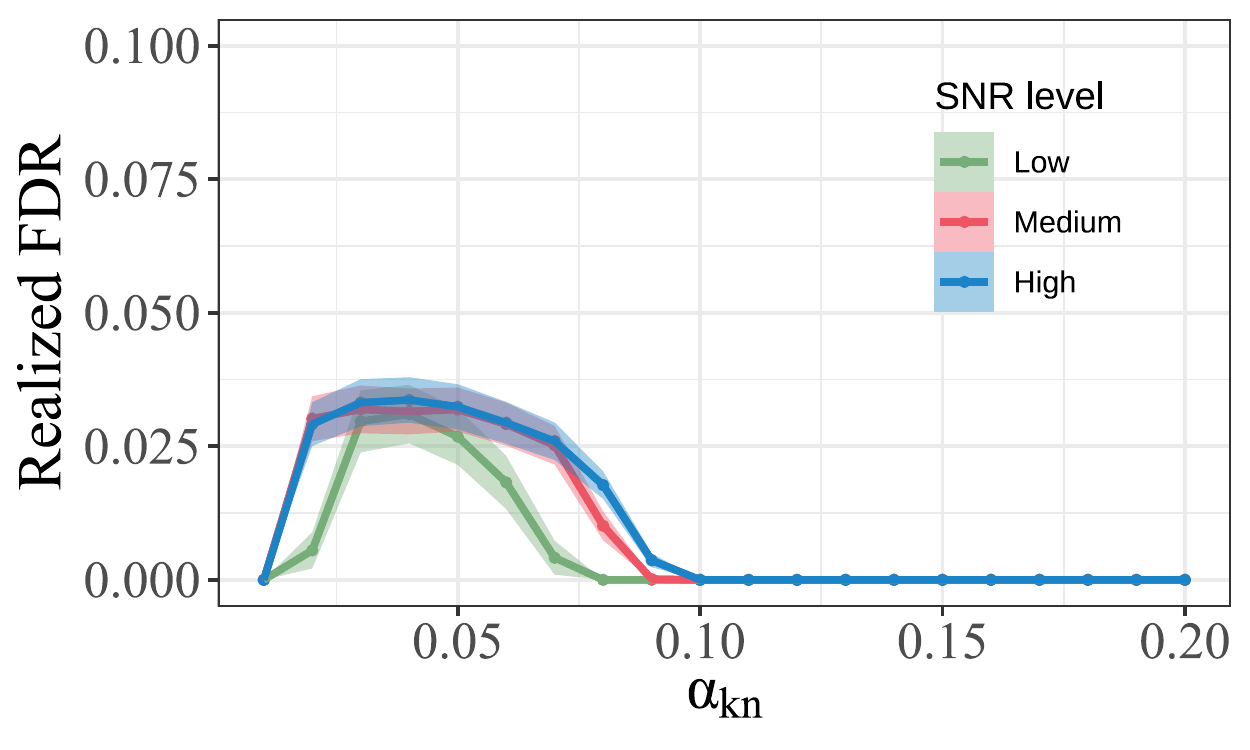}
\end{minipage}\\
\rotatebox{90}{Logistic}
\begin{minipage}{0.45\textwidth}
\centering
\includegraphics[width = 0.8\textwidth]{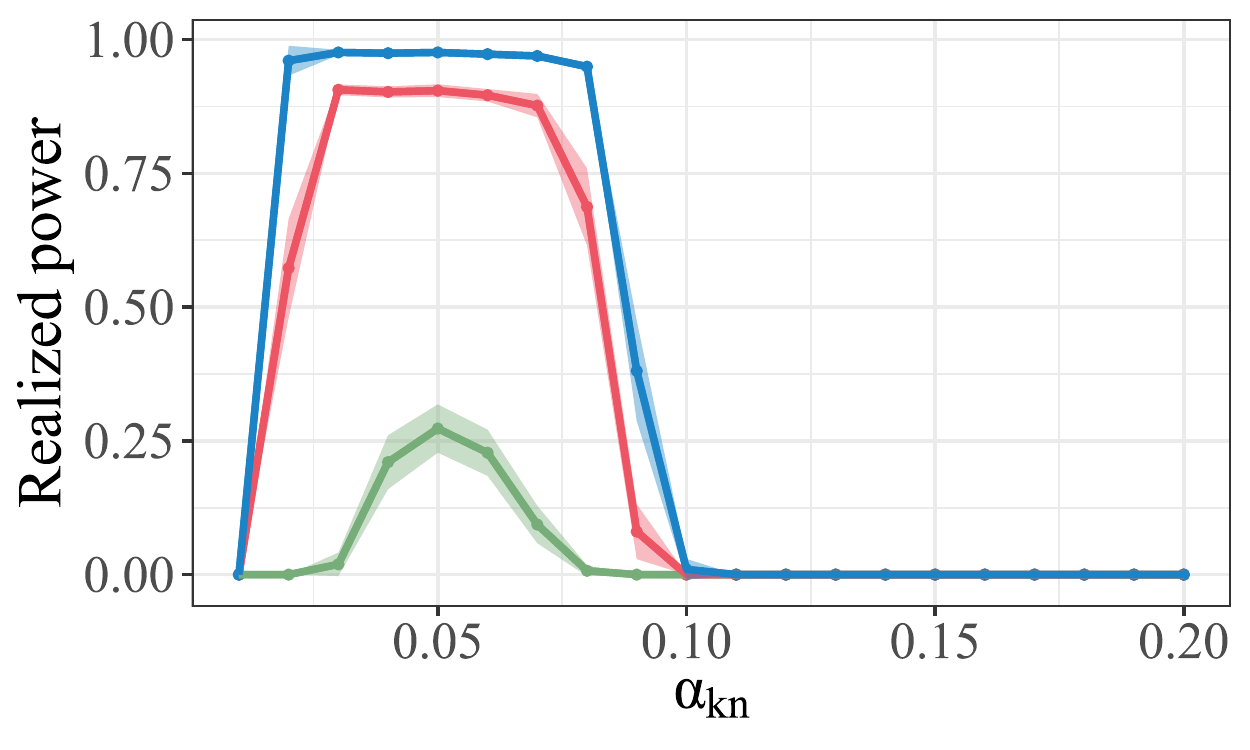}
\end{minipage}
\begin{minipage}{0.45\textwidth}
\centering
\includegraphics[width = 0.8\textwidth]{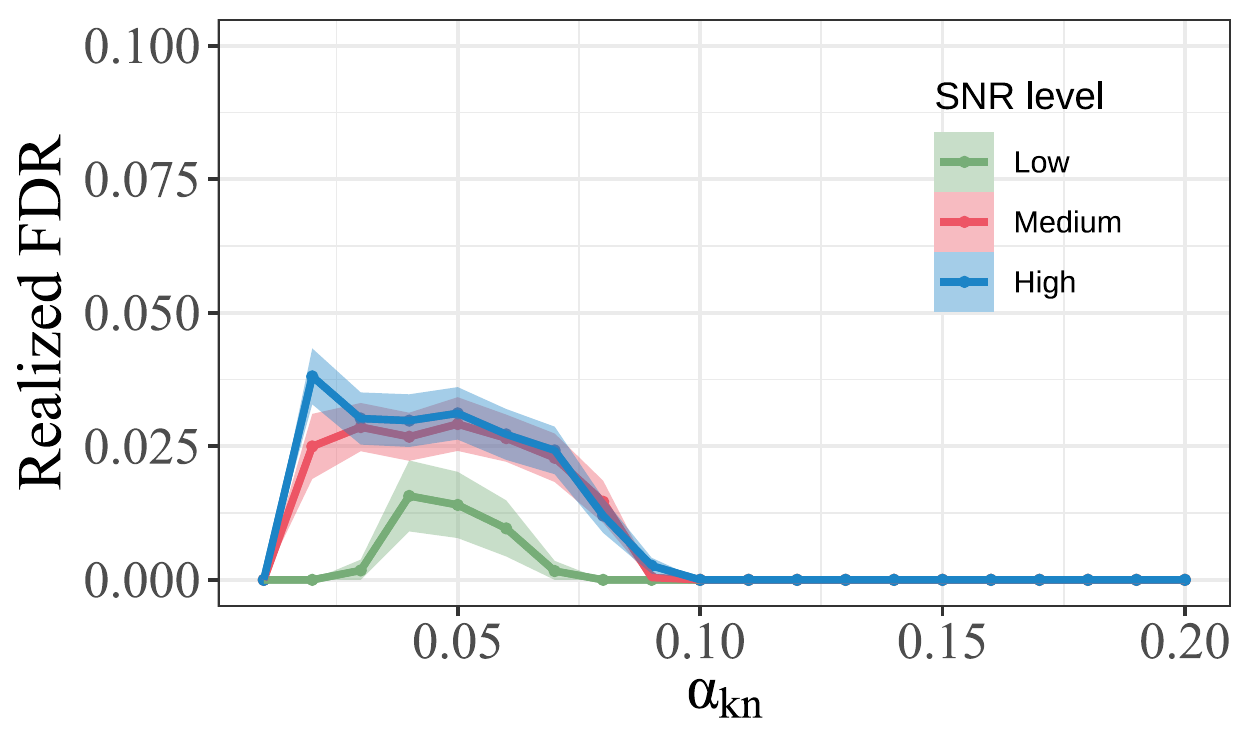}
\end{minipage}
\caption{Realized power (left) and FDR (right) of derandomized 
knockoffs as a function of the parameter $\alpha_{\kn}$ for the simulation 
data experiments. The offset parameter $c=1$. 
Shading for the power and FDR plots indicates error bars.
The target FDR level $\alpha_{\ebh} = 0.1$.
Results are averaged over $100$ independent trials.}
\label{fig:alpha}
\end{figure}

\begin{figure}[h]
\centering
\rotatebox{90}{Gaussian}
~
\begin{minipage}{0.3\textwidth}
\centering
\includegraphics[width = \textwidth]{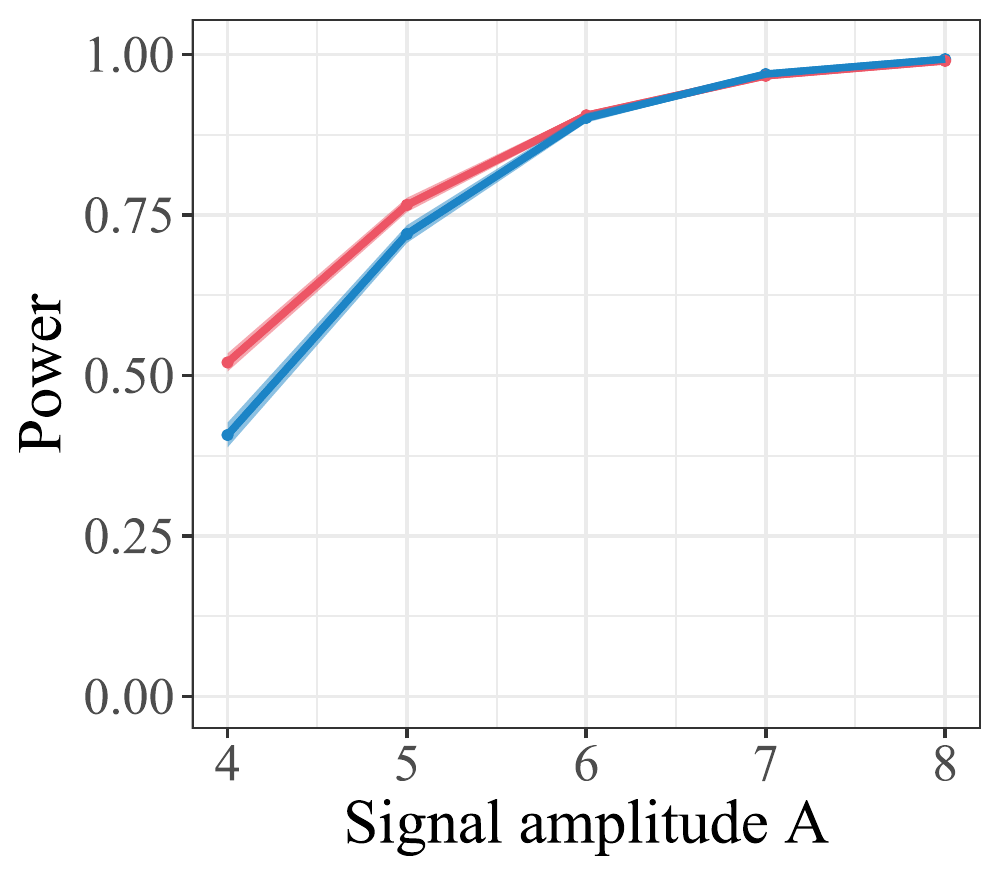}
\end{minipage}
\begin{minipage}{0.3\textwidth}
\centering
\includegraphics[width = \textwidth]{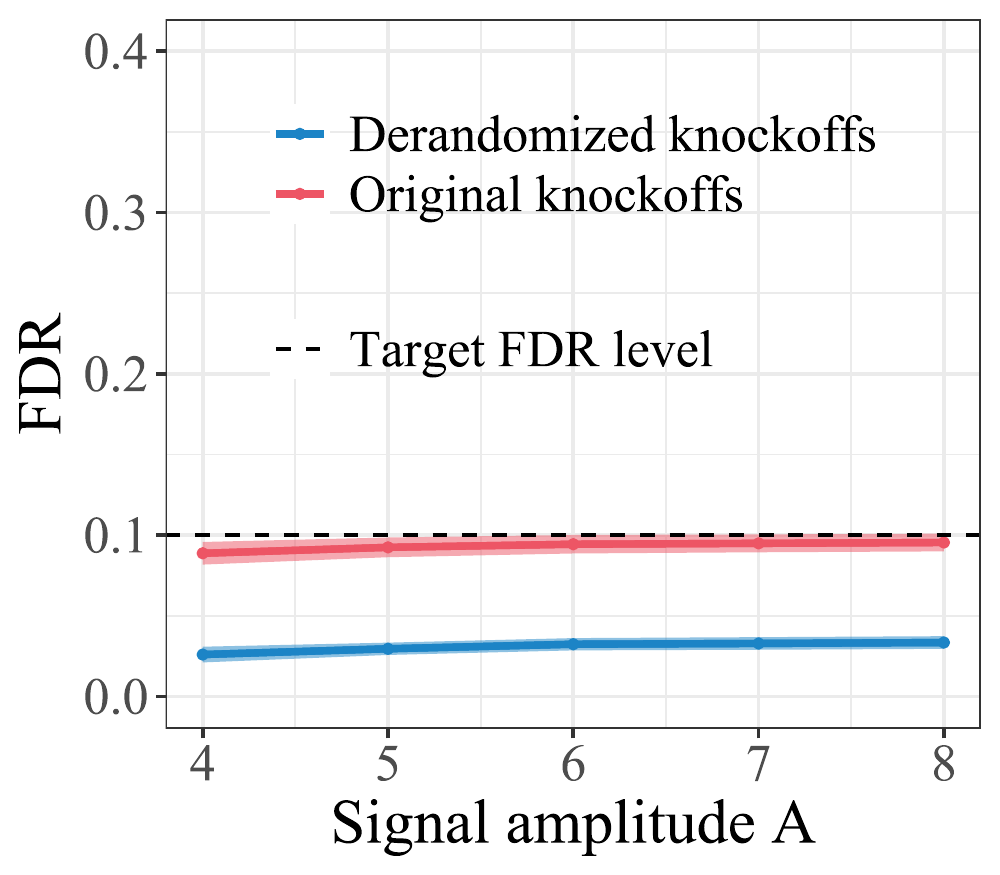}
\end{minipage}
\begin{minipage}{0.3\textwidth}
\centering
\includegraphics[width = \textwidth]{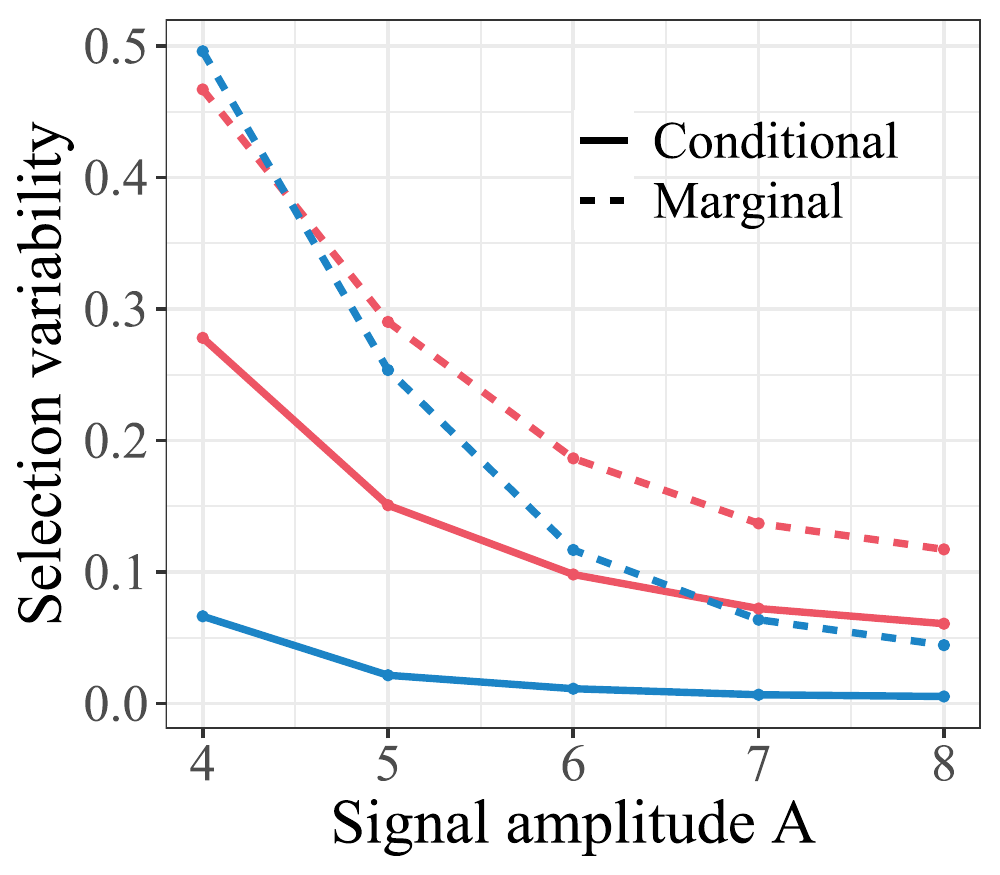}
\end{minipage}\\
\centering
\rotatebox{90}{Logistic}
~
\begin{minipage}{0.3\textwidth}
\centering
\includegraphics[width = \textwidth]{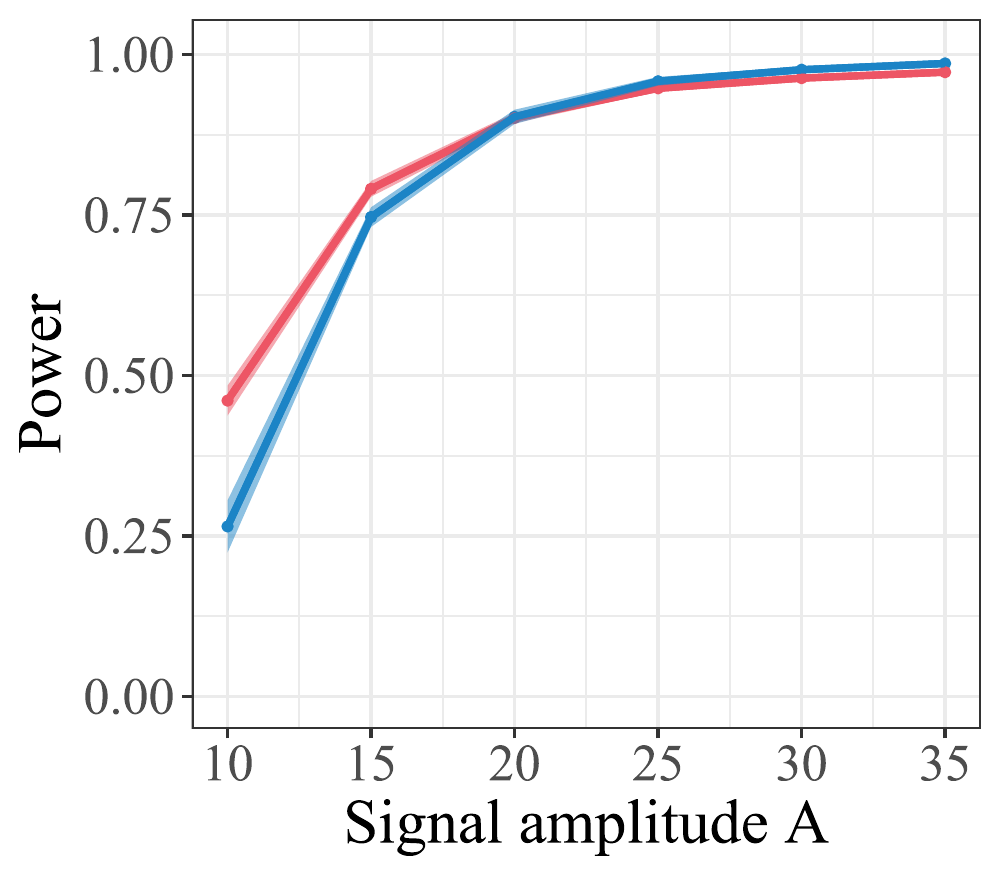}
\end{minipage}
\begin{minipage}{0.3\textwidth}
\centering
\includegraphics[width = \textwidth]{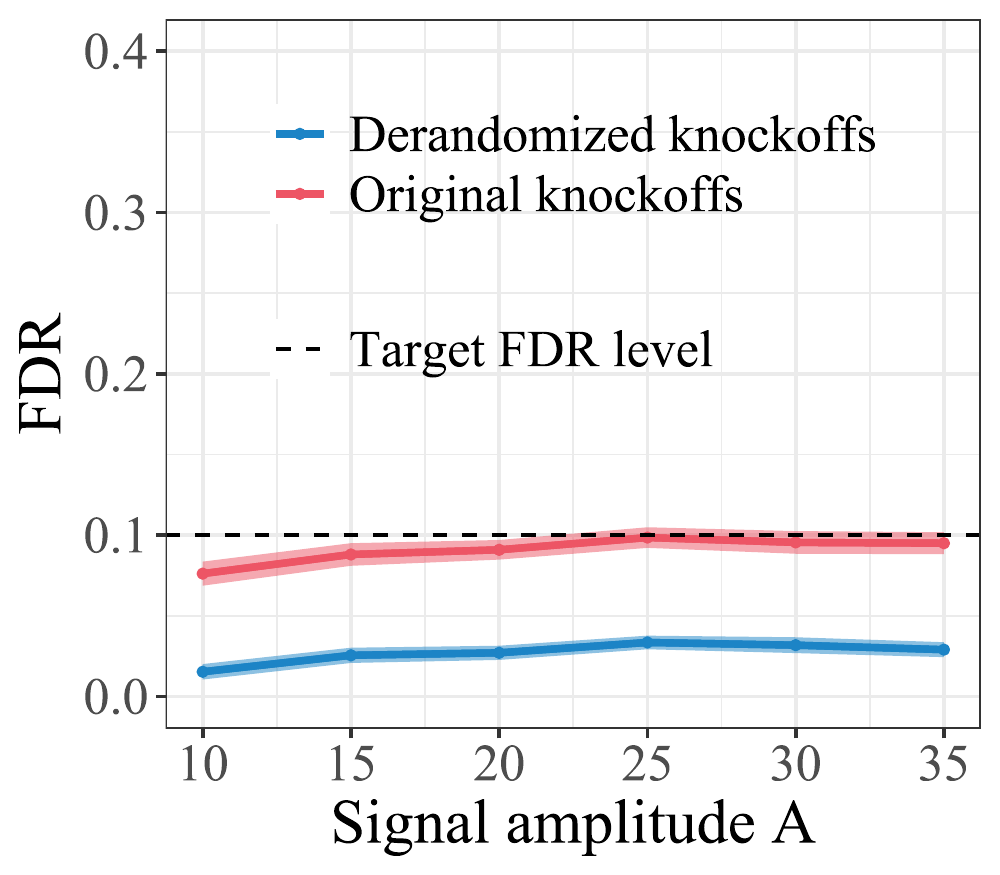}
\end{minipage}
\begin{minipage}{0.3\textwidth}
\centering
\includegraphics[width = \textwidth]{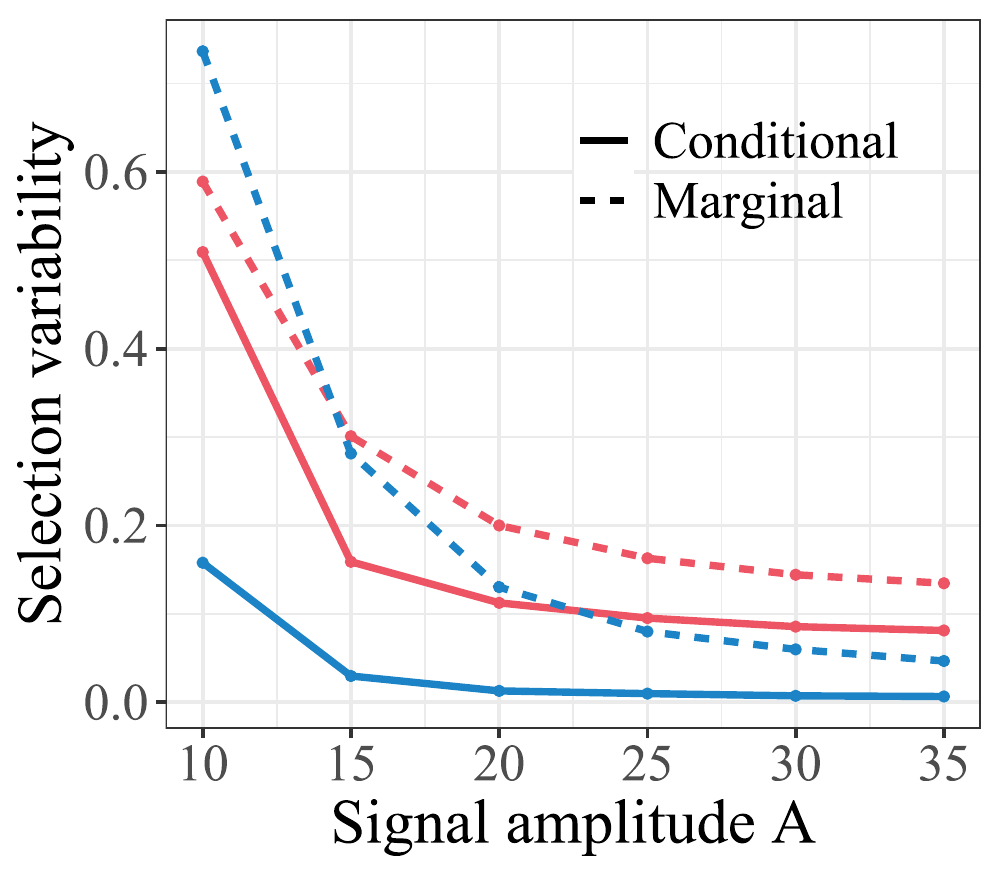}
\end{minipage}
\caption{Power, FDR,
and selection variability, for the simulated data experiments.
Shading for the power and FDR plots indicates error bars.
Results are averaged over $100$ independent trials.}
\label{fig:simulation}
\end{figure}

\begin{figure}[h]
\centering 
\includegraphics[width = \textwidth]{./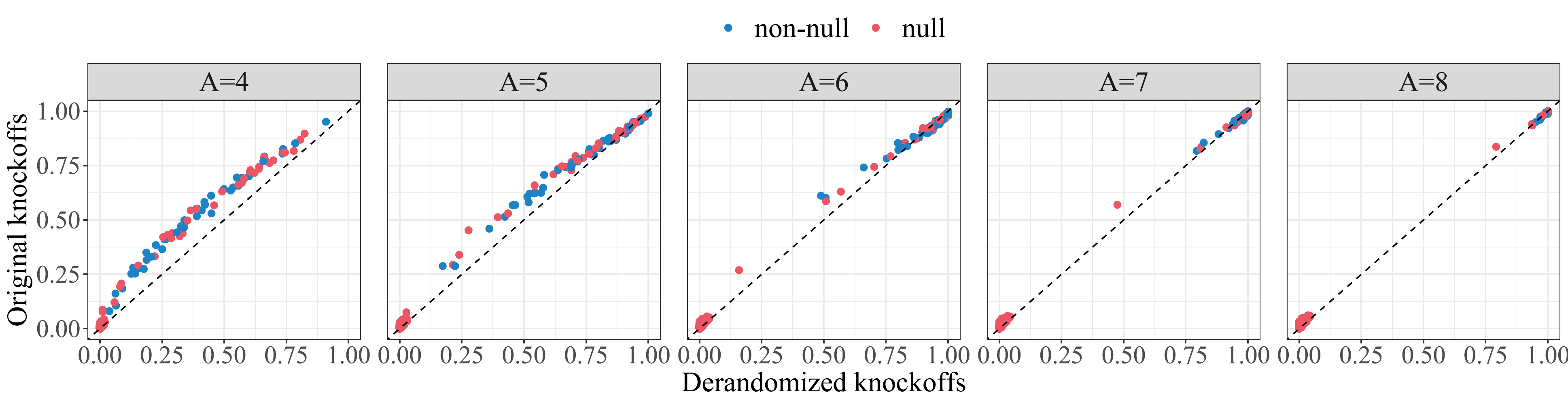}
\includegraphics[width = \textwidth]{./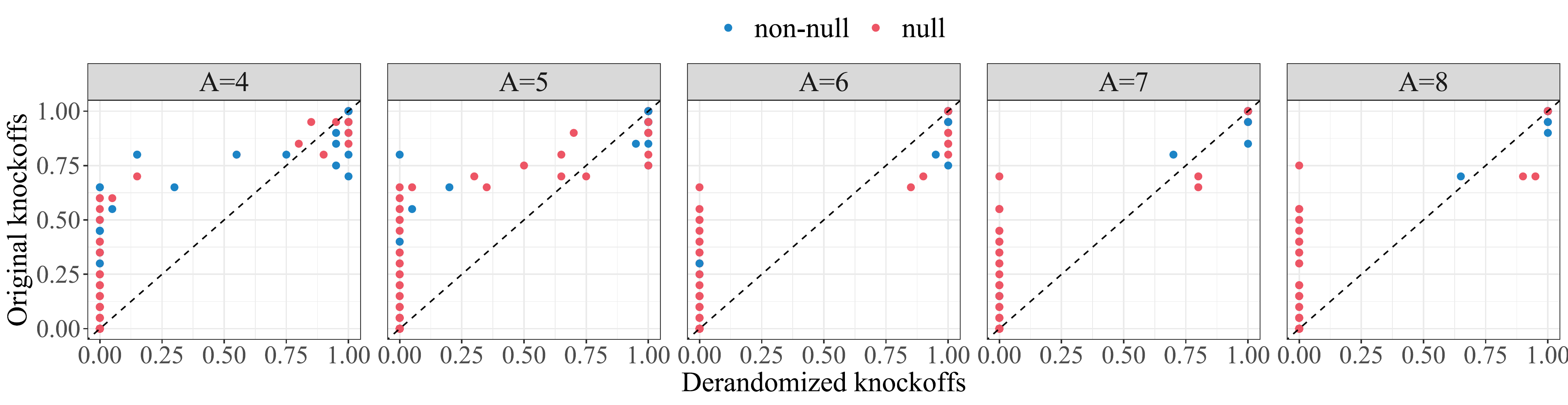}
\caption{Top: the marginal selection probability $\hat{p}_j$
by original knockoffs versus that by derandomized knockoffs.
Bottom: the conditional selection probability $\hat{p}_{j,1}$
by original knockoffs versus that by derandomized knockoffs. 
The results are from simulations under the Gaussian linear model.
Each point corresponds to a feature, where the blue ones are 
non-nulls and the red ones are nulls. For the given dataset, 
many null features never selected by derandomized knockoffs have
large selection probability by the original knockoffs.}
\label{fig:sel_linear}
\end{figure}

\begin{figure}[h]
\centering 
\includegraphics[width = \textwidth]{./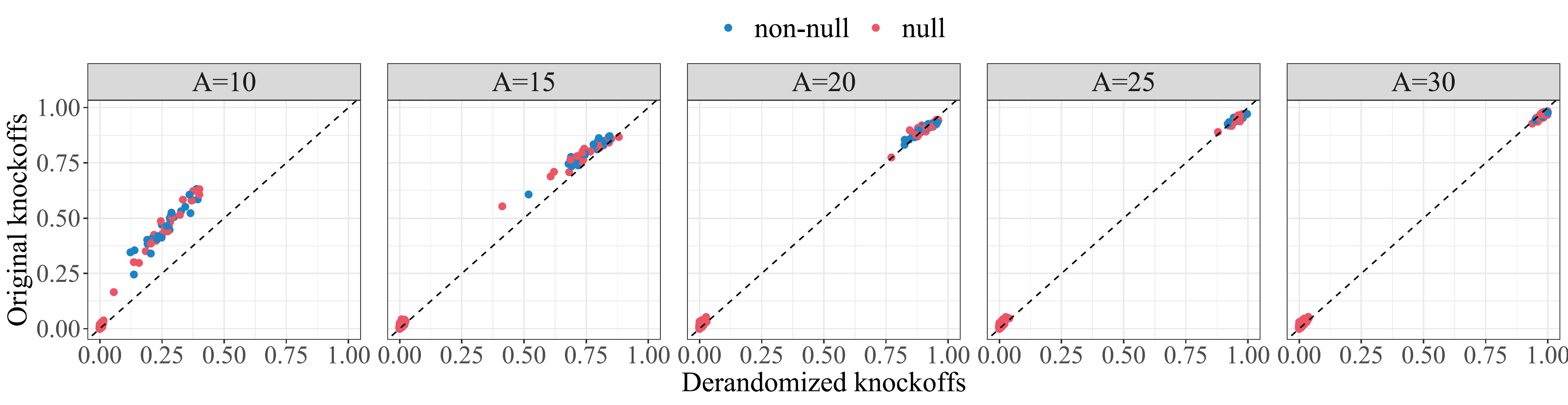}
\includegraphics[width = \textwidth]{./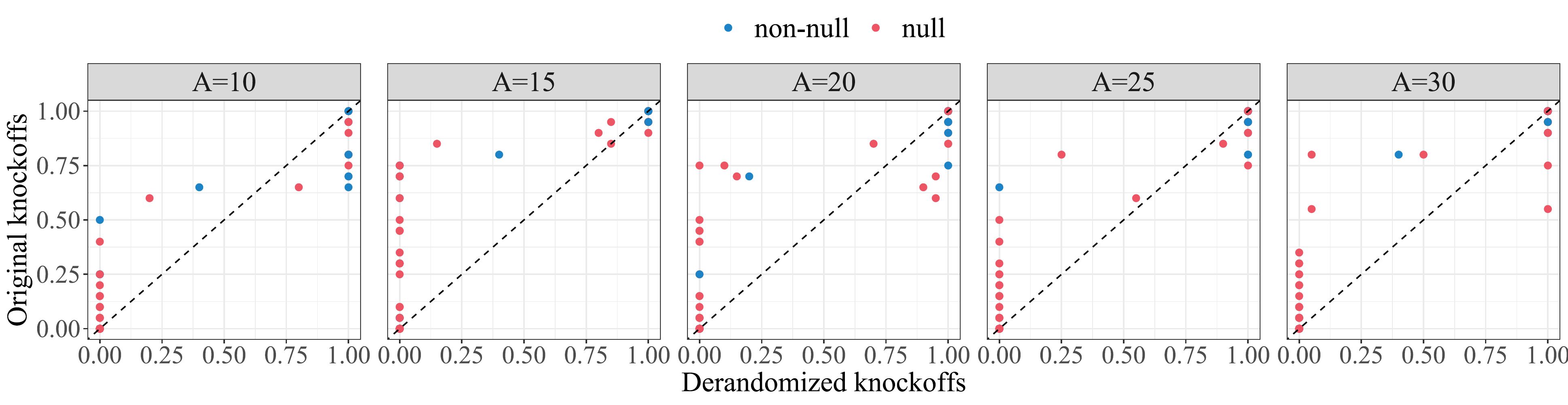}
\caption{ 
The results are from simulations under the logistic model.
The other details are the same as in Figure~\ref{fig:sel_logistic}.}
\label{fig:sel_logistic}
\end{figure}

\paragraph{Results} 
Figure~\ref{fig:alpha} plots the power and FDR of 
derandomized knockoffs as a function of $\alpha_{\kn}$ 
under both models with different signal amplitude. 
For the Gaussian linear model, the low, medium and high
signal amplitude correspond to $A = 4$, $6$ and $8$, 
respectively; for the logistic model,
the three cases correspond to $A=10$, $20$ and $30$, 
respectively. In all the settings, we observe that 
derandomized knockoffs consistently performs well with
$\alpha_{\kn} = \alpha_{\ebh}/2 = 0.05$.

Figure~\ref{fig:simulation} plots the power, FDR, and selection variability
as functions of the signal amplitude, for both settings.
We see that derandomized knockoffs
achieves comparable power as the original knockoffs method when the signal 
strength is reasonably large, while there is some power loss 
when the signal strength is weak (as we shall see in 
Figure~\ref{fig:sel_linear} and~\ref{fig:sel_logistic}, 
the original knockoffs achieves higher power in this scenario with an increased 
chance of selecting null features). For logistic regression, 
derandomized knockoffs achieves a slightly higher
power than the original knockoffs method in the 
strong-signal setting. In all cases, 
both methods have FDR bounded by the target level $\alpha=0.1$,
whereas derandomized knockoffs consistently achieves an FDR substantially lower than 
the original knockoffs. Finally, derandomized knockoffs in general exhibits
lower selection variability, both marginally and conditionally.
In particular, the conditional variability (that is, the variability arising from the randomness in the knockoffs
method, while conditioning on a fixed dataset) is 
substantially decreased for the derandomized method, as expected.

In Figure~\ref{fig:sel_linear} and~\ref{fig:sel_logistic},
we take a closer look at the marginal and conditional selection 
probability of individual features by the two methods. Especially in weak-signal settings, 
we can see that the original knockoffs often has a high selection 
probability for the nulls that are rarely selected by derandomized knockoffs 
(i.e., many red points are above the 45-degree line); this phenomenon 
is more pronounced when we focus on the conditional selection probability.

\section{Application to HIV mutation data}
\label{sec:application}
We next compare original and derandomized knockoffs on a study of detecting mutations
associated with drug resistance in Human Immunodeficiency
Virus Type 1 (HIV-1)~\citep{rhee2006genotypic}. 

\paragraph{Data} The 
dataset\footnote{Available at 
\url{https://hivdb.stanford.edu/download/GenoPhenodatasets/PI_dataset.txt}}
 documents  drug resistance measurements (for several
classes of drugs) and genotype information on samples of HIV-1.
Our analysis focuses on a particular  protease inhibitor (PI)
named lopinavir. For sample $i$ and mutation $j$, 
$Y_i$ denotes the log-fold increase in resistance to the drug,
and $X_{ij}$ is a binary variable indicating whether or not 
mutation $j$ is present in sample $i$.

\paragraph{Methods}
After cleaning the data (removing rows with missing values 
and mutations that appear less than three times in the data), we have
a dataset of $n = 1840$ samples and $p=219$ mutations.
The data matrix $\bX$ is normalized such that
each column has zero mean and unit variance.
We use the second-order method~\citep[Sec.~3.4.2]{candes2018panning}
to generate knockoffs via the first and second moments of $\bX$
(as discussed in~\citet{romano2020deep} and the references therein,
it is empirically observed that this approximate knockoffs construction often affects the
power but leaves the FDR intact). The original and derandomized knockoffs methods
are implemented with the LCD as the feature importance statistic, as for our simulations. 
We set $\alpha = 0.1$ for knockoffs, and $\alpha_{\kn}=0.05$ and $\alpha_{\ebh} =0.1$ for derandomized
knockoffs, so that the target FDR level is $0.1$ for both methods.
The derandomized method is run with $M=100$ copies of the knockoff matrix.
For both original and derandomized knockoffs, we repeat the experiment $50$ times
in order to examine the variability of the selected set across different generations of the knockoffs.

\begin{figure}[t]
\centering\hspace{-.2in}
\begin{minipage}{0.4\textwidth}
\centering
\includegraphics[width = \textwidth]{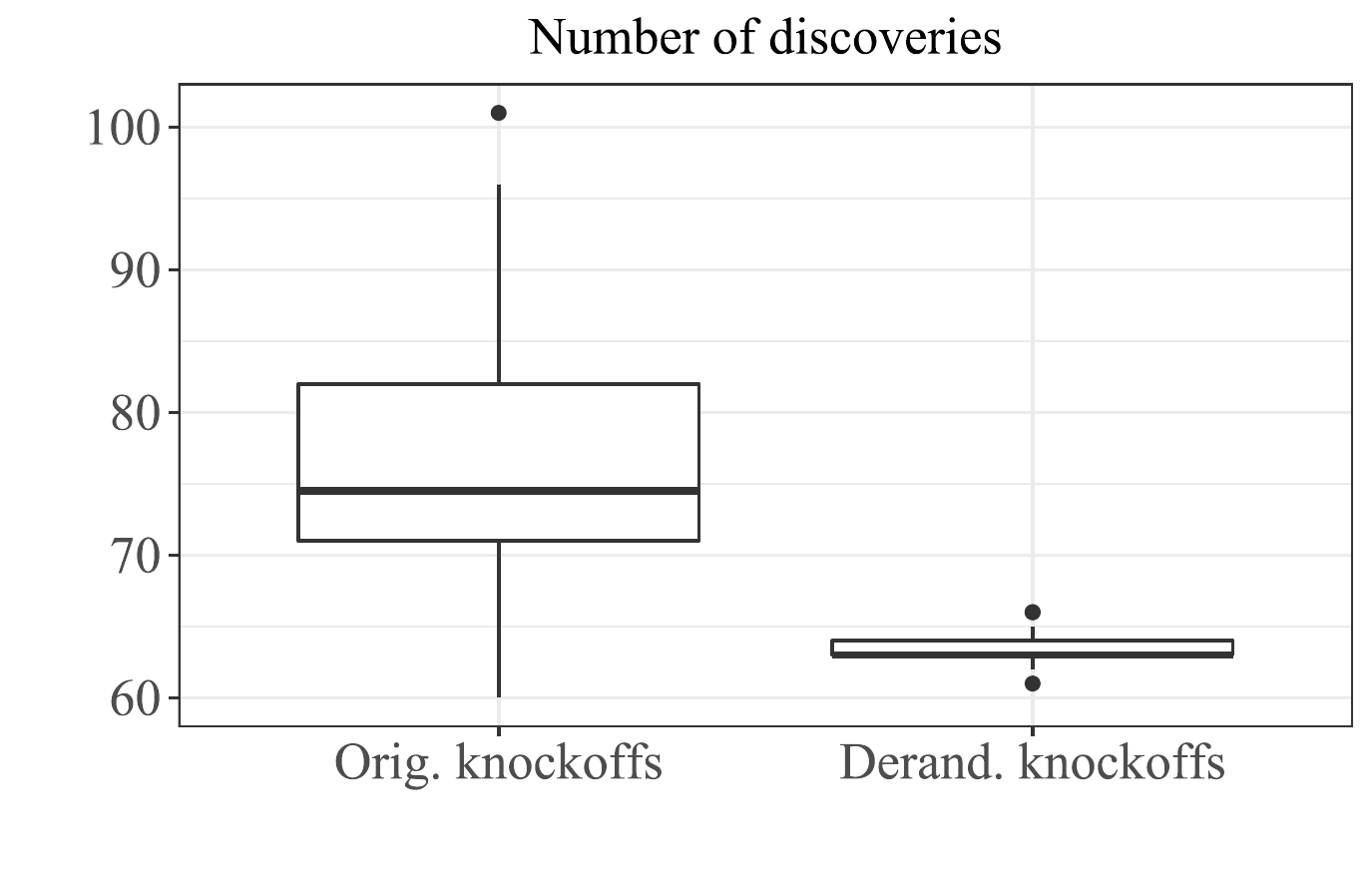}\\
\end{minipage}\hspace{.2in}
\begin{minipage}{0.58\textwidth}
\centering
\includegraphics[width = \textwidth]{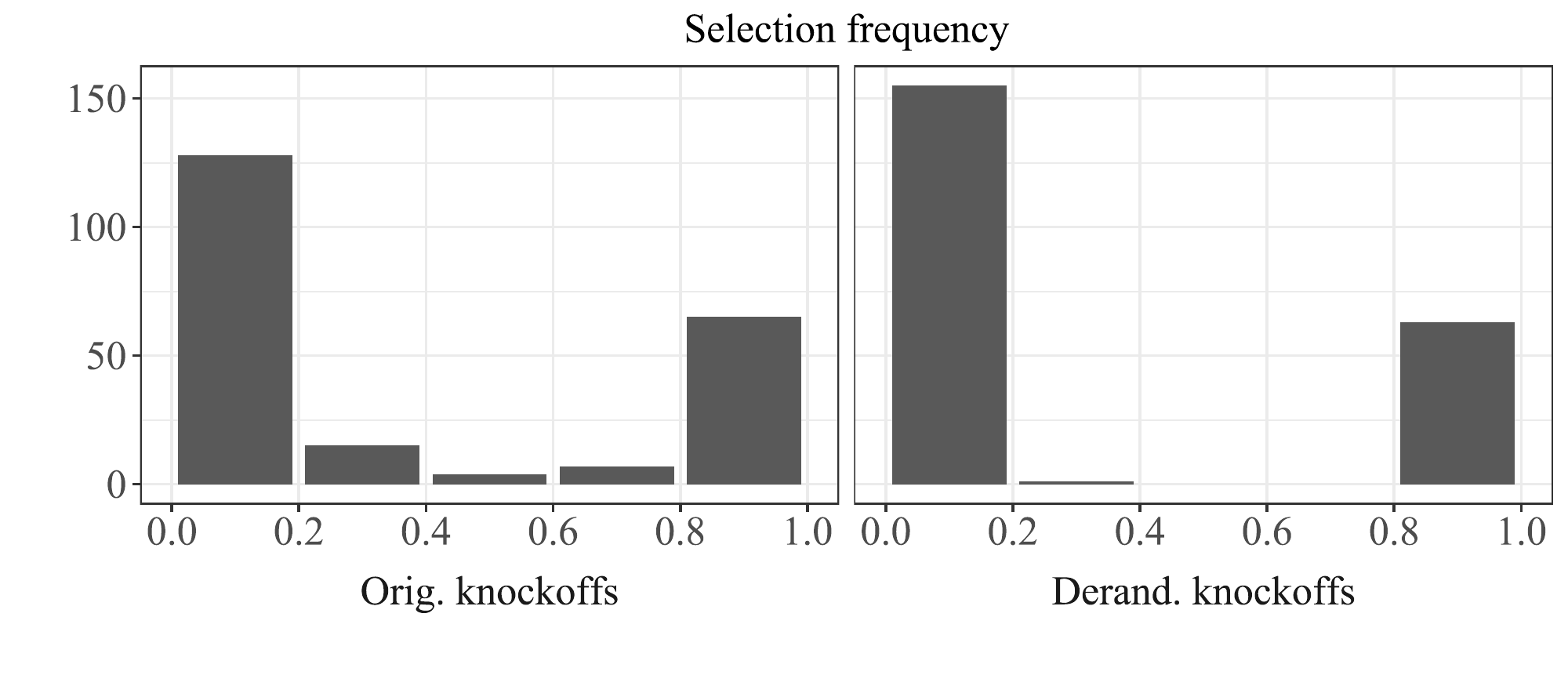}
\end{minipage}
\caption{Results for the HIV data experiment. Left: boxplot of the number of discoveries for original and derandomized knockoffs,
over 50 independent trials. Right: histogram of the selection probability for each feature $j\in[p]$, over 50 independent
trials, for original and derandomized knockoffs.}
\label{fig:realdata}
\end{figure}

\paragraph{Results} 
The results of this experiment are shown in Figure~\ref{fig:realdata}, summarizing the results
from 50 independent runs of both methods.
In the histograms, we display the selection frequency (across the 50 trials) for each mutation $j\in[p]$, for both methods.
For the original knockoffs method, we see that a substantial number of mutations have selection probability
bounded away from 0 and from 1, exhibiting some instability. For derandomized knockoffs,
on the other hand, nearly all the mass is at 0 and 1, meaning that each mutation is either (almost) always
selected or (almost) never selected.
In the boxplot, we see that the original knockoffs method makes more discoveries on average
than derandomized knockoffs, but the number of discoveries is quite variable.
When the discoveries from each method are cross-referenced
with the verified discoveries reported at~\url{https://hivdb.stanford.edu/dr-summary/comments/PI/},
we find evidence of a higher FDR for the original knockoff method, indicating that derandomized knockoffs
may retain the same power despite the lower average number of discoveries (see Appendix~\ref{sec:discovery_list}
for a more detailed analysis, including which specific mutations were selected).

\section{Discussion}
\label{sec:discussion}
In this work, we discover a connection between
knockoffs and e-values, showing that the knockoff filter can be reframed as an e-BH
procedure. This equivalence in turn allows us to pool information across multiple
runs of knockoffs to achieve a derandomized procedure without compromising the guarantee
of FDR control.
Our method shows desirable performance in practice, retaining the power of the original method
while reducing variability. Additionally, derandomized knockoffs
achieves an empirical FDR that is lower than the target level in our experiments;
as an open question for future work, it would be interesting
to see under what mild conditions can we derive a sharper
FDR bound (for example, as in \citet{meinshausen2010stability,shah2013variable}
for stability selection, or as in \citet{ren2021derandomizing} for the previously proposed version of
derandomized knockoffs).

Many variants and extensions of the knockoffs method have appeared in the literature,
including the ``fixed-X'' setting \citep{barber2015controlling}, robustness results
for the model-X setting when $P_X$ is not known exactly \citep{barber2020robust}, 
a multienvironment knockoff filter \citep{li2021searching}, and knockoffs that incorporate side information \citep{ren2020knockoffs}.
In Appendix~\ref{sec:extensions_appendix}, we give derandomized versions of each of these,
together with accompanying theoretical guarantees. 
In addition to these extensions, our proposed derandomization scheme 
could potentially also be combined with other existing extensions of the knockoffs 
methodology---for example, the {\em group knockoff filter}~\citep{dai2016knockoff} for discovering 
a group-sparse structure, or the {\em missing value knockoffs method}~\citep{koyuncu2022missing} 
to handle the presence of missing data.

\subsection*{Data availability statement}
The code for reproducing the numerical results 
in this paper can be found at \url{https://github.com/zhimeir/derandomized_knockoffs_fdr}.

\subsection*{Acknowledgment}
Z.R.~and R.F.B~were supported by the Office of Naval Research via grant N00014-20-1-2337.
R.F.B.~was additionally supported by the National Science Foundation 
via grants DMS-1654076 and DMS-2023109.

\appendix

\numberwithin{equation}{section}
\numberwithin{theorem}{section}
\numberwithin{lemma}{section}
\numberwithin{definition}{section}
\numberwithin{figure}{section}
\numberwithin{table}{section}
\numberwithin{algorithm}{section}

\section{Extensions}\label{sec:extensions_appendix}

\subsection{Fixed-X knockoffs}
The derandomization procedure can also
be applied to aggregating multiple runs of
the fixed-X knockoffs procedure~\citep{barber2015controlling}. 
The fixed-X knockoffs procedure considers
a setting where the design matrix $\bX$ is fixed
and the response is generated from a Gaussian linear model:
\$
\bY = \bX\beta + \varepsilon,
\$
where $\beta \in \RR^p$ is an unknown coefficient
vector and $\varepsilon \sim \calN(0,I_n)$.
The test of conditional independence (defined in~\eqref{eq:ci_test}) 
is equivalent to testing if $\beta_j = 0$.
The (fixed-X) knockoffs $\tilde{X}$ for $\bX$ is any 
matrix $\tilde{\bX} \in \RR^{n\times p}$
satisfying  
\$
\tilde{\bX}^\top \tilde{\bX} = \mathbf{\Sigma},
\qquad
\bX^\top \tilde{\bX} = \mathbf{\Sigma} - \textnormal{diag}\{s\},
\$
where $\mathbf{\Sigma} = \bX^\top \bX$ is the gram matrix,
and $s$ is a $p$-dimensional nonnegative vector; given $\tilde{\bX}$,
one can compute the feature importance statistics
with $([\bX,\tilde{\bX}],\bY)$ and apply the knockoffs
filter as in the model-X case.
Specifically, $\tilde{\bX}$ can be generated as
\[\tilde{\bX} = \bX(I_p - \mathbf{\Sigma}^{-1}\textnormal{diag}\{s\}) - \mathbf{U} (2\textnormal{diag}\{s\} - \textnormal{diag}\{s\}\Sigma^{-1}\textnormal{diag}\{s\})^{1/2},\]
where $\mathbf{U} \in\RR^{n\times p}$ is an orthonormal matrix orthogonal to $\bX$ (i.e., $\mathbf{U}^\top\mathbf{U} = I_p$ and $\mathbf{U}^\top\bX=0$).
When generating $\tilde{\bX}$, the choice of $\mathbf{U}$ is arbitrary, and may be constructed
deterministically or randomly (c.f.~\citet[Section 2]{barber2015controlling})---in either case, this arbitrary choice will generally affect the set of discoveries,
and thus it might be desirable to derandomize in order to obtain a more stable and meaningful selected set.
Our derandomization 
scheme can be applied here to aggregating results from fixed-X knockoffs
with different methods of implementation, and the resulting set of 
discoveries has guaranteed FDR control (the proof is exactly the 
same as in the model-X case).
The complete procedure for derandomized fixed-X
knockoffs is described in Algorithm~\ref{alg:aggregate_fix_x},
and its validity proved in the following theorem.
\begin{theorem}
\label{thm:fix_kn}
For any $\alpha_{\kn} ,\alpha_{\ebh} \in (0, 1)$, and any number of knockoff copies $M \ge1$, the selected set
 offered by 
Algorithm~\ref{alg:aggregate_fix_x}
satisfies $\fdr\leq \alpha_{\ebh}$.
\end{theorem}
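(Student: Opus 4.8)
The plan is to reuse the proof of Theorem~\ref{thm:fdr} essentially verbatim, since the derandomized fixed-X procedure in Algorithm~\ref{alg:aggregate_fix_x} is assembled in exactly the same way (generate $M$ knockoff matrices, form per-run e-values, average, run e-BH at level $\alpha_{\ebh}$); only the source of the per-run validity guarantee changes. First I would fix a single run $m\in[M]$ and note that, whatever orthonormal $\mathbf{U}^{(m)}$ is used, $\tilde{\bX}^{(m)}$ satisfies $(\tilde{\bX}^{(m)})^\top\tilde{\bX}^{(m)}=\mathbf{\Sigma}$ and $\bX^\top\tilde{\bX}^{(m)}=\mathbf{\Sigma}-\textnormal{diag}\{s\}$, so it is a legitimate fixed-X knockoff matrix, and that $W^{(m)}=\calW([\bX,\tilde{\bX}^{(m)}],\bY)$ obeys the sign-flip property. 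Then the argument of~\citet{barber2015controlling} yields the fixed-X analogue of~\eqref{eqn:kn_ratio},
\[
\EE\left[\frac{\sum_{j\in\calH_0}\ind\{W_j^{(m)}\ge T^{(m)}\}}{1+\sum_{j\in\calH_0}\ind\{W_j^{(m)}\le -T^{(m)}\}}\right]\le 1,
\]
using that, conditionally on the magnitudes $(|W_j^{(m)}|)_j$, the signs of the null statistics are i.i.d.\ symmetric, and that $T^{(m)}$ is a stopping time for the corresponding masked filtration (which remains true for the early-stopping and offset variants of $T^{(m)}$ by Remark~\ref{remark:generalized_stopping_time}); if $\mathbf{U}^{(m)}$ is itself drawn at random but independently of $\varepsilon$, the bound survives a further expectation by the tower rule.

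From there the route is mechanical. Since $\sum_{k\in[p]}\ind\{W_k^{(m)}\le -T^{(m)}\}\ge\sum_{k\in\calH_0}\ind\{W_k^{(m)}\le -T^{(m)}\}$, the definition~\eqref{eq:e_val_m} of $e_j^{(m)}$ gives $\sum_{j\in\calH_0}e_j^{(m)}\le p\cdot\frac{\sum_{j\in\calH_0}\ind\{W_j^{(m)}\ge T^{(m)}\}}{1+\sum_{j\in\calH_0}\ind\{W_j^{(m)}\le -T^{(m)}\}}$, so the displayed inequality implies $\sum_{j\in\calH_0}\EE[e_j^{(m)}]\le p$ for each $m$. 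Averaging over $m=1,\dots,M$ and using linearity of expectation, the averaged e-values satisfy $\sum_{j\in\calH_0}\EE[e_j^{\textnormal{avg}}]\le p$, i.e.\ they form a valid class of (relaxed) e-values in the sense of~\eqref{eqn:evals_relaxed}. Invoking Theorem~\ref{thm:relaxed_ebh} on $e_1^{\textnormal{avg}},\dots,e_p^{\textnormal{avg}}$ at level $\alpha_{\ebh}$ then gives $\fdr\le\alpha_{\ebh}$, which is the claim.

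I do not expect a genuine obstacle. The one place that needs care is the first step: confirming that the fixed-X FDR-control machinery of~\citet{barber2015controlling} produces the key ratio inequality for every run regardless of how $\mathbf{U}^{(m)}$ is chosen (deterministically, randomly, or differently across runs) and of which admissible stopping rule is used, so that $\sum_{j\in\calH_0}\EE[e_j^{(m)}]\le p$ is available for each $m$ exactly as in the model-X case. Everything downstream of that is identical to the proof of Theorem~\ref{thm:fdr} and requires no new ideas.
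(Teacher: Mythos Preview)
Your proposal is correct and follows essentially the same route as the paper: the paper's proof simply cites \citet[Lemma~A1]{barber2015controlling} to obtain $\sum_{j\in\calH_0}\EE[e_j^{(m)}]\le p$ for each run, then averages and applies Theorem~\ref{thm:relaxed_ebh}, which is exactly what you do with the intermediate ratio inequality spelled out. Your extra care about the choice of $\mathbf{U}^{(m)}$ and the stopping-time variants is sound but not needed beyond what the cited lemma already covers.
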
   
\begin{proof}
By~\citet[Lemma A1]{barber2015controlling}, 
$\sum_{j\in\calH_0} \EE[e_j^{(m)}]\le p$, and
by the linearity of expectation, 
$\sum_{j\in\calH_0}\EE[e^{\textnormal{avg}}_j]\le p$.
The result follows immediately from 
Theorem~\ref{thm:relaxed_ebh}.
\end{proof}

\begin{algorithm}[htbp]
\caption{Derandomized fixed-X knockoffs}\label{alg:aggregate_fix_x}
\begin{algorithmic}[1]
  \REQUIRE  Data $(\bX,\bY)$; nonnegative $s$ satisfying $2\textnormal{diag}\{s\} - \textnormal{diag}\{s\}\Sigma^{-1}\textnormal{diag}\{s\}\succeq 0$;
  parameters $\alpha_{\ebh},\alpha_{\kn} \in (0,1)$ and 
  $M \in \mathbb{N}_+$. \;\\ 
\vspace{0.05in}
\FOR{$m = 1,\ldots,M$}
\STATE Construct the knockoff copy \$\tilde{\bX}^{(m)} = \bX(I_p - \mathbf{\Sigma}^{-1}\textnormal{diag}\{s\}) - \mathbf{U}^{(m)} (2\textnormal{diag}\{s\} - \textnormal{diag}\{s\}\Sigma^{-1}\textnormal{diag}\{s\})^{1/2},\$ where $\mathbf{U}^{(m)}$ is drawn uniformly at random from the set of orthonormal $n\times p$ matrices
orthogonal to $\bX$. \;
\STATE Compute the feature importance statistics: 
$W^{(m)} = \calW^{(m)}\big([\bX,\tilde{\bX}^{(m)}],\bY)$.\;
\STATE Compute the stopping time $T^{(m)}$
according to~\eqref{eq:early_stopping_time}.
\STATE Compute 
the e-value $e_j$ according
to~\eqref{eq:e_val_m}, for all $j\in[p]$.\\
\ENDFOR
\STATE Compute the average e-value $e_j^{\textnormal{avg}} 
= \frac{1}{M}\sum^M_{m=1} e_j^{(m)}$ for each $j\in[p]$.\;
\STATE Compute $\hat{k} = \max\big\{k: e^{\textnormal{avg}}_{(k)} \ge {p}/(\alpha_{\ebh} k)\big\}$,
or $\hat k = 0$ if this set is empty.\;
\vspace{0.05in}
\ENSURE The selected set of discoveries 
$\calS_{\kn\textnormal{--derand}} 
\defn \big\{j \in [p]: e_j^{\textnormal{avg}} \ge p / (\alpha_{\ebh} \hat{k})\big\}$.
\end{algorithmic}
\end{algorithm}

\subsection{Robust knockoffs}
The validity of the model-X knockoffs frame relies 
crucially on the knowledge of $P_X$, but in practice, this distribution might be estimated
rather than known exactly. 
In this section, we investigate the robustness of the 
derandomized knockoffs procedure when the knockoff
copies are sampled without exact knowledge of $P_X$. 
In this section, we will show that when the knockoffs are constructed
using approximate knowledge of $P_X$, Algorithm~\ref{alg:aggregate_knockoff}
achieves approximate FDR control.

The setting of our discussion follows that 
of~\citet{barber2020robust}. Formally, suppose for 
each $j \in [p]$, the researcher only has access to
an estimated conditional distribution for $X_j \given X_{-j}$, 
denoted by $Q_j(\cdot \given X_{-j})$ (instead of 
the ground truth $P_j(\cdot \given X_{-j}) = P_{X_j\given X_{-j}}$). 
Based on these $Q_j$'s, the researcher samples the knockoffs
with $\tilde{X} \given X \sim P_{\tX \given X}$, such that
for any $j\in[p]$, $P_{\tX \given X}$ is pairwise exchangeable 
w.r.t.~$Q_j$. The definition of pairwise exchangeability 
w.r.t.~$Q_j$ is as follows.
\begin{definition}\label{defn:pairwise_exchangeability}
$P_{\tX \given X}$ is pairwise exchangeable with respect
$Q_j$ if for any distribution $D^{(j)}$ on $\RR^p$
with the $j$-th conditional distribution $Q_j$, and 
$(X,\tX) \sim D^{(j)} \times P_{\tX \given X}$,
it holds that
\$
\big(X_j, \tX_j ,X_{-j}, \tX_{-j}\big) 
\eqd
\big(\tX_j, X_j, X_{-j}, \tX_{-j}\big).
\$
\end{definition}
As remarked in~\citet{barber2020robust}, when the conditional
distributions $Q_j$'s are mutually compatible, i.e.~there exists
a joint distribution $Q_X$ on $\RR^p$ whose $j$-th conditional 
distribution coincides with $Q_j$, then the knockoff construction
algorithm~\citep[Algorithm 1]{candes2018panning} 
with $Q_X$ as the input produces knockoffs satisfying
Definition~\ref{defn:pairwise_exchangeability}. When the 
$Q_j$'s are not mutually compatible,~\citet[Section 4]{barber2020robust}
provide examples in which knockoffs satisfying Definition~\ref{defn:pairwise_exchangeability}
can be produced. Moving on, we work under the premise that we 
can generate knockoffs obeying Definition~\ref{defn:pairwise_exchangeability}.

To measure the deviation of $Q_j$ from the ground truth,
we follow~\citet{barber2020robust} and
define the ``empirical KL divergence'' quantity
for any $j\in[p]$ and for each run $m$ of knockoffs:
\$
 \hat{\kl}^{(m)}_j = \sum_i \log 
\bigg(\frac{P_j(\bX_{ij} \given \bX_{i,-j})}
{Q_j(\bX_{ij} \given \bX_{i,-j})}
\cdot \frac{Q_j(\tilde{\bX}_{ij}^{(m)} \given \bX_{i,-j})}
{P_j(\tilde{\bX}_{ij}^{(m)} \given \bX_{i,-j})}
\bigg),~ m\in[M].\$
We then take the maximum over the $M$ runs of knockoffs,
\$ \hat{\kl}^{\max}_j = \max_{m \in [M]} \hat{\kl}^{(m)}_j.
\$
Clearly, the smaller $\hat{\kl}^{\max}_j$ is, the closer
$Q_j$ is to $P_j$, and $\hat{\kl}^{\max}_j = 0$ 
when $P_j = Q_j$. Intuitively, the better
$Q_j$ approximated $P_j$, the more likely
it is to control
the FDR (when $P_j = Q_j$ for all $j\in[p]$,
the FDR is controlled at the desired level).
Such an intuition is formalized
by the following theorem which shows that Algorithm~\ref{alg:aggregate_knockoff}
controls the FDR among the features with small values of $\hat{\kl}^{\max}_j$.

\begin{theorem}\label{thm:robustness}
Fix any $\alpha_{\kn} ,\alpha_{\ebh} \in (0, 1)$, and any number of knockoff copies $M \ge1$.
For any $\varepsilon\ge 0$, 
consider the null variables for which 
$\hat{\kl}^{\max}_j \le \varepsilon$. The fraction of the 
rejections made by Algorithm~\ref{alg:aggregate_knockoff} 
that correspond to such nulls obeys
\$
\EE\bigg[\frac{\big|\big\{j: j \in \calS \cap \calH_0, \hat{\kl}^{\max}_j \le \eps \big\}\big|}
{|\calS|\vee 1}\bigg] \le e^{\eps} \cdot \alpha.
\$
As a result, the FDR can be controlled as
\$
\fdr \le \min_{\eps >0} \bigg\{e^{\eps} \cdot \alpha 
+ \PP\Big(\max_j \hat{\kl}^{\max}_j \ge \eps\Big)\bigg\}.
\$
\end{theorem}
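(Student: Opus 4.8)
The plan is to reduce the entire statement to a single inequality on the aggregated knockoff e-values: for every $\eps\ge 0$,
\[
\sum_{j\in\calH_0}\EE\big[e_j^{\textnormal{avg}}\cdot\ind\{\hat{\kl}^{\max}_j\le\eps\}\big]\ \le\ e^{\eps}\,p .
\]
Granting this, the first displayed bound follows just as in the proof of Theorem~\ref{thm:fdr}: the e-BH inequality~\eqref{eq:ebh_ineq} gives, pointwise, $\frac{\ind\{j\in\calS\}}{|\calS|\vee 1}\le\frac{\alpha_{\ebh}\,e_j^{\textnormal{avg}}}{p}$ for every $j$ (here $\calS=\calS_{\kn\textnormal{--derand}}$, and the left side is $0$ when $\calS=\emptyset$); multiplying by $\ind\{\hat{\kl}^{\max}_j\le\eps\}$, summing over $j\in\calH_0$, and taking expectations yields $\EE\big[\frac{|\{j\in\calS\cap\calH_0:\hat{\kl}^{\max}_j\le\eps\}|}{|\calS|\vee 1}\big]\le\frac{\alpha_{\ebh}}{p}\sum_{j\in\calH_0}\EE[e_j^{\textnormal{avg}}\ind\{\hat{\kl}^{\max}_j\le\eps\}]\le e^{\eps}\alpha_{\ebh}$. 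The ``as a result'' statement then follows by splitting $\calH_0$ at the threshold $\eps$: the nulls with $\hat{\kl}^{\max}_j\le\eps$ contribute at most $e^{\eps}\alpha_{\ebh}$ to the FDR by the bound just proved, while the remaining nulls contribute to the FDP a quantity that is always at most $1$ and vanishes unless some null has $\hat{\kl}^{\max}_j>\eps$, hence has expectation at most $\PP(\max_j\hat{\kl}^{\max}_j\ge\eps)$; minimizing over $\eps>0$ completes that part.

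It remains to prove the displayed e-value inequality, and the first move is to pass to a per-run statement. Since $e_j^{\textnormal{avg}}=\frac1M\sum_{m=1}^M e_j^{(m)}$ with each $e_j^{(m)}\ge 0$, and since $\hat{\kl}^{\max}_j\ge\hat{\kl}^{(m)}_j$ implies $\ind\{\hat{\kl}^{\max}_j\le\eps\}\le\ind\{\hat{\kl}^{(m)}_j\le\eps\}$, it is enough to show, for each $m$,
\[
\sum_{j\in\calH_0}\EE\big[e_j^{(m)}\cdot\ind\{\hat{\kl}^{(m)}_j\le\eps\}\big]\ \le\ e^{\eps}\,p .
\]
Substituting the definition~\eqref{eq:e_val_m} of $e_j^{(m)}$ and using $1+\sum_{k\in[p]}\ind\{W_k^{(m)}\le -T^{(m)}\}\ge 1+\sum_{k\in\calH_0}\ind\{W_k^{(m)}\le -T^{(m)}\}$, this in turn reduces to
\[
\EE\!\left[\frac{\sum_{j\in\calH_0}\ind\{W_j^{(m)}\ge T^{(m)}\}\,\ind\{\hat{\kl}^{(m)}_j\le\eps\}}{1+\sum_{k\in\calH_0}\ind\{W_k^{(m)}\le -T^{(m)}\}}\right]\ \le\ e^{\eps},
\]
a robust analogue of the knockoff inequality~\eqref{eqn:kn_ratio}, to which it reduces when $Q_j=P_j$ (so that $\hat{\kl}^{(m)}_j\equiv 0$) and $\eps=0$.

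This last inequality is where the real work lies, and I would obtain it by adapting the analysis of~\citet{barber2020robust}. Condition on the ``masked'' data --- the magnitudes $(|W_j^{(m)}|)_{j\in[p]}$, the statistics $W_k^{(m)}$ for non-nulls $k\notin\calH_0$, and hence the threshold $T^{(m)}$ (a valid stopping time for this masked filtration, by Remark~\ref{remark:generalized_stopping_time}) together with the magnitudes $|\hat{\kl}^{(m)}_j|$ --- so that the only remaining randomness is the vector of signs $(\sign W_j^{(m)})_{j\in\calH_0}$. When $P_X$ is known exactly these signs are conditionally i.i.d.\ uniform on $\{\pm1\}$ and the classical martingale argument of~\citet{barber2015controlling,candes2018panning} delivers the bound $\le 1$; under the approximate construction, pairwise exchangeability with respect to $Q_j$ (Definition~\ref{defn:pairwise_exchangeability}) instead makes this conditional law a \emph{tilted} version of the uniform one, in which flipping the sign of a null coordinate $j$ rescales the density by a likelihood ratio of the form $e^{\pm\hat{\kl}^{(m)}_j}$. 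Running the martingale argument against this tilted law, rather than the uniform one, should produce precisely the extra multiplicative factor $e^{\eps}$.

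The crux --- and the step I expect to require the most care --- is that the indicator $\ind\{\hat{\kl}^{(m)}_j\le\eps\}$ is \emph{not} measurable with respect to the masked data, since it depends on $\sign W_j^{(m)}$ through the sign of $\hat{\kl}^{(m)}_j$. One therefore has to reveal the null signs in decreasing order of $|W_j^{(m)}|$ and track, coordinate by coordinate, how this indicator couples with the accumulated tilt --- in particular, using that on a coordinate with $|\hat{\kl}^{(m)}_j|>\eps$ the indicator can hold only for the sign carrying the \emph{smaller} likelihood-ratio weight, which is exactly what keeps the total tilt below $e^{\eps}$. This bookkeeping is the technical heart of~\citet{barber2020robust}, so I would either invoke their lemma directly or re-derive it in the present (e-value) notation to close the proof.
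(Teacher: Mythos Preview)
Your reductions are correct and match the paper: establish $\sum_{j\in\calH_0}\EE[e_j^{\textnormal{avg}}\ind\{\hat{\kl}^{\max}_j\le\eps\}]\le e^\eps p$, deduce the first bound via the pointwise e-BH inequality, split the FDP at $\eps$ for the second bound, and pass to a per-run statement via $\ind\{\hat{\kl}^{\max}_j\le\eps\}\le\ind\{\hat{\kl}^{(m)}_j\le\eps\}$.

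Where you diverge is the plan for the per-run inequality. You propose conditioning on all magnitudes and non-null signs and running a \emph{tilted martingale} as the null signs are revealed sequentially. The paper instead uses a \emph{leave-one-out} argument, following~\citet{barber2020robust}: for each null $j$, mask only the sign of $W_j^{(m)}$ to form $W^{(m,j)}$ and the corresponding threshold $T^{(m,j)}$ (so that $T^{(m)}=T^{(m,j)}$ whenever $W_j^{(m)}\ge0$); condition on the unordered pair $\{\bX_j^{(m,+)},\bX_j^{(m,-)}\}$ together with $\bX_{-j},\tilde{\bX}_{-j}^{(m)},\bY$; and apply the single-coordinate likelihood-ratio bound (Lemma~1 of~\citet{barber2020robust}) to get $\PP(W_j^{(m)}>0,\hat{\kl}^{(m)}_j\le\eps\mid\cdot)\le e^\eps\,\PP(W_j^{(m)}<0\mid\cdot)$. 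Summing over $j\in\calH_0$ then requires a separate combinatorial fact---$T^{(m,j)}=T^{(m,\ell)}$ whenever $W_j^{(m)},W_\ell^{(m)}\le-\min(T^{(m,j)},T^{(m,\ell)})$---which aligns the denominators and collapses the sum to a ratio bounded by $1$. There is no sequential revealing and no supermartingale.

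The potential gap in your plan is that pairwise exchangeability with respect to $Q_j$ (Definition~\ref{defn:pairwise_exchangeability}) is a \emph{per-coordinate} statement; it does not give a clean description of the joint conditional law of all null signs as a product of independent tilted Bernoullis, so the ``tilted martingale'' may not be well-posed. The leave-one-out device is exactly what sidesteps this: at each step one conditions on \emph{all} other coordinates (null and non-null alike), leaving a single tilted coin whose bias is controlled by $\hat{\kl}^{(m)}_j$ alone. Your fallback of invoking their lemma directly would amount to adopting this leave-one-out structure, which is what the paper does.
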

The proof of Theorem~\ref{thm:robustness} can be
found in Appendix~\ref{appx:proof_robustness}. 
Note that the FDR inflation depends directly on the
$\hat{\kl}^{\max}_j$'s, which take the form of the maximum
of $M$ conditional independent objects. Consequently,
when $\hat{\kl}_j^{(m)}$ is bounded almost surely, 
the FDR inflation is bounded for any choices of $M$. 
In the numerical experiments to be presented in the next section,
we will see that we generally observe controlled FDR even with estimated 
$P_X$.

\subsection{Multienvironment knockoffs}
\label{sec:multi}
For the purpose of making scientific conclusions, it
is sometimes not sufficient to find associations that
are significant in 
{\em only} one environment---here, environments can
refer to subpopulations, experimental settings or
data sources (see more discussion in~\citet{li2021searching}).
Intuitively, if an association is found to be significant in
all environments (e.g., a genetic
variant found to be associated with a certain disease
in all subpopulations), it is more likely to be the 
``true'' driving factor and less prone to the 
impact of unobserved confounders.

Formally, suppose there are $E$ environments, 
and in each environment $e\in[E]$, the data 
vector $(X,Y)\sim P_{XY}^e$ (we work again in the model-X framework, and so $P_X^e$ is assumed known).
We write the conditional independence hypothesis in environment $e$ as
\@\label{eq:e_null}
H_j^{\ci,e}: Y^e~\indep~X_j^e \given X^e_{-j}.
\@
To find the association that is 
{\em consistently} true across all the environments (i.e., 
robust association) is equivalent to testing the following 
consistent conditional independence hypothesis:
\@\label{eq:cst}
H_j^{\cst}:~\exists~e\in[E]\mbox{ such
that the null }H_{j}^{\ci,e} \mbox{ defined in}~\eqref{eq:e_null}
\mbox{ is true.}
\@
By definition, rejecting $H_j^{\cst}$ suggests an association 
between $X_j$ and $Y$ across all the environments. In 
practice,~\eqref{eq:cst} may be hard to reject especially 
as the number of environments increases.~\citet{li2021searching}
also propose a relaxed version of ~\eqref{eq:cst} that 
tests for partial consistency. For any fixed number 
$r \in [E]$, the partial consistency null hypothesis 
is defined as 
\@\label{eq:pcst}
H_j^{\text{pcst},r}: \Big|\Big\{ e\in[E]: 
H_{j}^{\ci,e}\mbox{ is true} \Big\}\Big| > E-r.
\@
Rejecting~\eqref{eq:pcst} then means that the association 
is true in at least $r$ environments.

To proceed, 
we let the set of nulls $\calH_0 = \{j: H_j^{\cst} \mbox{ is true}\}$ when 
testing $H_j^{\cst}$'s or $\calH_0 = \{j: H_j^{\text{psct}} \mbox{ is true}\}$ when 
testing $H_j^{\text{pcst}}$'s.
Previously,~\citet{li2021searching} 
proposed a variant of knockoffs called
the {\em multi-environment knockoff filter (MEKF)}
to simultaneously test $H_j^{\cst}$'s or $H_j^{\text{pcst}}$'s 
with guaranteed FDR control. To be specific, suppose we have independent
datasets $(\bX^1,\bY^1),\ldots,(\bX^E,\bY^E)$ where $(\bX^e,\bY^e)$
is obtained from environment $e$.
For each $e\in[E]$, the MEKF generates
a knockoff copy $\tilde{\bX}^e$ for $\bX^e$;
it then takes $([\bX^e,\tilde{\bX}^e],Y^e)_{e\in[E]}$
as input and computes the multi-environment 
feature importance statistic
${\bm W}\in \RR^{E\times p}$, whose
definition 
is given as
follows.
\begin{definition}[Definition 1 of \citet{li2021searching}]
$\bW \in \RR^{E\times p}$ are valid multi-environment knockoff
statistics if they satisfy $\bW \eqd \bW \odot {\bm \varepsilon} $
and ${\bm \varepsilon} \in \{\pm 1\}^{E\times p}$ is a random 
matrix with independent entries and rows $\varepsilon^e$  
such that $\varepsilon^e_j = \pm 1$ with probability 1/2 if
$H_j^{\ci,e}$ is true and $\varepsilon^e_j = +1$ otherwise,
for all $j \in[p]$ and $e\in[E]$.
\end{definition}
To construct such $\bW$, one can simply compute
$W^e = \calW([\bX^e,\tilde{\bX}^e], \bY^e)$ for
all $e\in[E]$ and assemble the $W^e$'s into a 
$E\times p$ matrix. More  methods for constructing
$\bW$ can be found in~\citet[Section 4.2]{li2021searching}.
With $\bW$, we define for each $j\in[p]$ the
(normal) feature importance statistic for testing $H_j^{\cst}$ to be
\@\label{eq:mekf}
W_j = \min_{e\in[E]}\big(\sign(\bW_j^e)\big) \cdot 
\prod_{e\in[E]} |\bW_j^e|,
\@
and that for testing $H_j^{\text{psct},r}$ to be 
\@\label{eq:mekf_pcst}
W_j = \sign\Big(\frac{1}{2} - p_j\Big) \cdot 
\prod_{e=1}^r |\bW_j|^{(E-e+1)}.
\@
Above, $|W_j|^{(k)}$ denotes the $k$-th largest 
element among $\{|W_j^e|\}_{e\in[E]}$, and 
\$
p_j = \Psi\Big(n_j^--1,(E-r+1-n_j^0) \vee 0, \frac{1}{2}\Big)
+ U_j \cdot \psi\Big(n_j^-, (E-r+1-n_j^0) \vee 0, \frac{1}{2}\Big),
\$
where $\Psi(x,m,\pi)$ is the binomial cumulative distribution function 
evaluated at $x$, $\psi(x,m,\pi)$ is the corresponding probability mass,
and $U_j \sim \text{Uniform}[0,1]$ that is independent of everything else;
$n_j^-$ and $n_j^0$ denote the number of negative signs and 
zeros in the $j$-the column of $\bW$, respectively.
Finally, applying the knockoff filter to the $W_j$'s
produces a selected set controlling FDR.

The MEKF procedure is random since it depends on
the one-time construction of knockoffs (and the auxiliary $U_j$'s); we can again 
derandomize it with the assistance of e-values. 
Specifically, for each $m\in[M]$, we construct  
the feature importance statistics $W^{(m)}$
as in~\eqref{eq:mekf} or~\eqref{eq:mekf_pcst} 
(based on the $m$-th draw of knockoffs, $\tilde{\bX}^{(m)}$); define then the stopping time
and e-values according to~\eqref{eq:early_stopping_time}
and~\eqref{eq:e_val_m} respectively.
For each $j \in [p]$, we aggregate the e-values from 
different runs by taking their average. 
The selected set is obtained by applying e-BH 
to the $e_j$'s.  We summarize the complete 
derandomized MEKF in Algorithm~\ref{alg:derandom_multienv};
the following theorem verifies the validity
of the derandomized MEKF.
\begin{theorem}
\label{thm:dmekf}
Consider testing consistent conditional independence hypotheses 
defined in~\eqref{eq:cst} or partial consistent conditional independent 
hypotheses defined in~\eqref{eq:pcst} for any $r\in[E]$.
For any $\alpha_{\kn} ,\alpha_{\ebh} \in (0, 1)$, any number of knockoff copies $M \ge1$, 
and any number of environments $E\geq 1$, 
the selected set offered by 
Algorithm~\ref{alg:derandom_multienv}
satisfies $\fdr\leq \alpha_{\ebh}$.
\end{theorem}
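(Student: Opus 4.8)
The plan is to follow the template of the proofs of Theorem~\ref{thm:fdr} and Theorem~\ref{thm:fix_kn}: the whole statement reduces to a single-run claim, namely that for each $m\in[M]$ the e-values $e_j^{(m)}$ of~\eqref{eq:e_val_m} obey the relaxed validity bound $\sum_{j\in\calH_0}\EE[e_j^{(m)}]\le p$. Once this is established, averaging over $m$ together with linearity of expectation gives $\sum_{j\in\calH_0}\EE[e_j^{\textnormal{avg}}]\le p$, and Theorem~\ref{thm:relaxed_ebh} then yields $\fdr\le\alpha_{\ebh}$. So the multi-environment structure only has to be handled inside the single-run bound, and both the consistency null~\eqref{eq:cst} and the partial-consistency null~\eqref{eq:pcst} are treated the same way, with the null set $\calH_0$ chosen accordingly.

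For the single run, I would invoke the structural property established by~\citet{li2021searching}: starting from valid multi-environment knockoff statistics $\bW^{(m)}$, the scalar statistics $W_j^{(m)}$ assembled via~\eqref{eq:mekf} (when testing $H_j^{\cst}$) or via~\eqref{eq:mekf_pcst} (when testing $H_j^{\textnormal{pcst},r}$) satisfy the ordinary knockoff flip-sign property relative to $\calH_0$: conditionally on the magnitudes $\bigl(|W_j^{(m)}|\bigr)_{j\in[p]}$ and on the signs of the non-null coordinates, the signs $\bigl(\sign(W_j^{(m)})\bigr)_{j\in\calH_0}$ are i.i.d.\ uniform on $\{\pm1\}$ (with the usual convention for the event $W_j^{(m)}=0$). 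For a consistency null $j$, this uses the existence of an environment $e$ with $H_j^{\ci,e}$ true, so that the sign flip $\varepsilon_j^e$ is a fair coin independent of the magnitudes and of all other coordinates; for a partial-consistency null it additionally uses that $p_j$ in~\eqref{eq:mekf_pcst} is (super-)uniform thanks to the auxiliary $U_j\sim\textnormal{Uniform}[0,1]$, so that $\sign(\tfrac12-p_j)$ supplies the required symmetry. I would cite this rather than rederive it.

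Given the flip-sign property, the classical knockoff supermartingale argument of~\citet{barber2015controlling,candes2018panning} applies unchanged. The threshold $T^{(m)}$ of~\eqref{eq:early_stopping_time} is a stopping time with respect to the masked filtration in the sense of Remark~\ref{remark:generalized_stopping_time}, since its early-stopping clause depends only on $\sum_{j}\ind\{W_j^{(m)}\ge t\}$; the martingale property for the null signs then delivers the ratio bound
\[
\EE\!\left[\frac{\sum_{j\in\calH_0}\ind\{W_j^{(m)}\ge T^{(m)}\}}{1+\sum_{j\in\calH_0}\ind\{W_j^{(m)}\le -T^{(m)}\}}\right]\le 1,
\]
which is exactly~\eqref{eqn:kn_ratio} for run $m$; multiplying through by $p$ shows $\sum_{j\in\calH_0}\EE[e_j^{(m)}]\le p$, and the proof concludes as in Theorem~\ref{thm:fdr}. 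As in the earlier remarks, one may use different knockoff constructions and different $\calW^{(m)}$ across the runs without affecting the argument.

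The real work, and the only genuine obstacle, is the single-run flip-sign property for the two transformations~\eqref{eq:mekf}--\eqref{eq:mekf_pcst} — in particular the bookkeeping of ties and zeros ($\bW_j^e=0$, environments in which $\varepsilon_j^e$ is deterministically $+1$) and the precise role of the randomizer $U_j$ in~\eqref{eq:mekf_pcst}. This is precisely what~\citet{li2021searching} establish; everything downstream of it — the stopping-time check, the ratio bound, the averaging, and the appeal to Theorem~\ref{thm:relaxed_ebh} — is identical to the model-X case.
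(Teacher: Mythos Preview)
Your proposal is correct and follows essentially the same route as the paper: reduce to the single-run bound $\sum_{j\in\calH_0}\EE[e_j^{(m)}]\le p$, obtain it from the results of \citet{li2021searching}, average over $m$, and apply Theorem~\ref{thm:relaxed_ebh}. The only difference is the level at which you cite \citet{li2021searching}: the paper invokes their Theorem~1 (for $H_j^{\cst}$) and Proposition~7 (for $H_j^{\textnormal{pcst},r}$) directly for the ratio bound, whereas you cite the underlying flip-sign property and rederive the ratio bound via the supermartingale argument and Remark~\ref{remark:generalized_stopping_time}---this is a harmless unpacking of the same step.
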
   
\begin{proof}
When testing $H_j^\cst$'s, by Theorem 1 of~\citet{li2021searching}, we
have for any $m\in[M]$ that
$\sum_{j\in\calH_0}\EE[e_j^{(m)}] \le p$.
When testing $H_j^{\text{psct}}$'s, by Proposition 7 of~\citet{li2021searching},
we
have for any $m\in[M]$ that
$\sum_{j\in\calH_0}\EE[e_j^{(m)}] \le p$.
In either case, by the linearity of expectation, $\sum_{j\in\calH_0}\EE[e_j^{\textnormal{avg}}]\le p$.
Applying Theorem~\ref{thm:relaxed_ebh} completes
the proof.
\end{proof}

\begin{algorithm}[htbp]
\caption{Derandomized MEKF}\label{alg:derandom_multienv}
\begin{algorithmic}[1]
  \REQUIRE  Data $(\bX,\bY)$; 
  parameters $\alpha_{\ebh},\alpha_{\kn} \in (0,1)$ and 
  $M \in \mathbb{N}_+$\;\\ 
\vspace{0.05in}
\FOR{$m = 1,\ldots,M$}
\STATE Construct the knockoff copy $\tilde{\bX}^{(e,m)}$
for all $e\in[E]$.\;
\STATE Compute the multi-environment feature importance statistics: 
$\bW^{(m)} = \calW\big([\bX,\tilde{\bX}^{(m)}],\bY)$.\;
\STATE Transform the multi-environment feature importance
statistics $\bW^{(m)}$ into the feature importance statistics $W^{(m)}$
according to~\eqref{eq:mekf} when testing $H_j^{\cst}$'s or 
to~\eqref{eq:mekf_pcst} when testing $H_j^{\text{pcst}}$'s.\;
\STATE Compute the stopping time $T^{(m)}$
according to~\eqref{eq:early_stopping_time}.\;
\STATE Compute 
the e-value $e_j$ according
to~\eqref{eq:e_val_m}, for all $j\in[p]$.\\
\ENDFOR
\STATE Compute the average e-value $e_j^{\textnormal{avg}} 
= \frac{1}{M}\sum^M_{m=1} e_j^{(m)}$ for each $j\in[p]$.\;
\STATE Compute $\hat{k} = \max\big\{k: e^{\textnormal{avg}}_{(k)} \ge {p}/
(\alpha_{\ebh} k)\big\}$,
or $\hat k = 0$ if this set is empty.\;
\vspace{0.05in}
\ENSURE The selected set of discoveries 
$\calS_{\kn\textnormal{--derand}} 
\defn \big\{j \in [p]: e_j^{\textnormal{avg}} \ge p / (\alpha_{\ebh} \hat{k})\big\}$.
\end{algorithmic}
\end{algorithm}

\subsection{Knockoffs with side information}
Imagine a situation where each hypothesis $H_j$
is associated with some prior/side information 
$u_j\in\RR^r$. For example, in a genome-wide association
study (GWAS), $H_j$ corresponds to the association
between genetic variant $j$ and the response;
$u_j$ can be the prior knowledge of this genetic variant
a scientist has from previous related work.
For structured hypothesis testing, $u_j$ can be 
the physical location corresponding to the hypothesis,
and scientists know a priori that signals are sparse/clustered. 
In these examples, incorporating prior information
in the multiple testing procedure can potentially
boost the detecting power.  

Consider first a special case where $u_j \in \RR_+$
reflects the possibility of $j$ being a 
non-null (a higher value suggests an increased 
chance). We explicitly make use of $u_j$ when constructing
the e-values: for each $j \in [p]$ and
$m \in [M]$, define
\@
\label{eq:weighted_eval}
e_j^{(m)} \defn \frac{ pu_j \cdot \ind\big\{W_j^{(m)} \ge T^{(m)}\big\}}
{u_j + \sum_{k\in[p]} u_k \cdot \ind \big\{W_k^{(m)} \le -T^{(m)}\big\}},
\@
and aggregate the e-values across $m\in[M]$ 
by taking their average.
We then apply e-BH to the newly defined 
e-values and obtain a selected set of discoveries.
Intuitively, a large value of $u_j$ can boost the 
e-value of hypothesis $j$, thereby increasing the 
chance of selecting $j$ and improving the power
of our procedure. We refer to this weight-assisted 
multiple testing procedure the {\em weighted version}
of derandomized knockoffs with side information.

Alternatively,  we may have side information $u_j$ that
is of higher dimensions, or  it may not be explicit 
how the side information helps selection (it needs to 
be learned from the data). We can 
apply the adaptive knockoffs procedure~\citep{ren2020knockoffs}
and similarly aggregate the 
e-values associated with different knockoff copies.
Concretely, the adaptive knockoffs procedure generates
knockoffs and computes feature importance statistics
as in the original knockoffs procedure; it then
sequentially screens the hypotheses in an order determined
by the side information and the partially masked feature importance
statistics. At each step $k \in \{0,1,\ldots,p\}$,
let $P(k)$ denote the set of unscreened features (there
are in all $p-k$ of them) with positive feature importance
statistics and $N(k)$ those with negative feature importance
statistics. Define
\@
\label{eq:adaptive_stopping_time}
T = \inf\bigg\{k\in\{0,1,\ldots,p\}: 
  \frac{1 + |N(k)| }{|P(k)| \vee 1}
\le \alpha_\kn{}\bigg\},
\@
and the selected set of adaptive knockoffs 
is $P(T)$. When there are $M$ knockoff runs,
we define for any $m\in[M]$ correspondingly 
the set of unscreened features with positive $W_j$'s 
to be $P_m(k)$, the set of those with negative $W_j$'s
to be $N_m(k)$, and the stopping time to be $T^{(m)}$. 
The e-value for this $m$ and $j$ can be constructed as 
\@
\label{eq:adaptive_eval}
e^{(m)}_j = \frac{p\cdot \ind\big\{j\in P_m(T^{(m)})\big\}}
{1 + |N_m(T^{(m)})|}.
\@
Averaging the e-values over $m\in[M]$ and
applying e-BH to the $e_j$'s gives the set
of discoveries.
We refer to this method of utilizing side information as
the {\em adaptive version} of derandomized knockoffs
with side information.

The complete procedure of both versions
for derandomized knockoffs with side information
is summarized in Algorithm~\ref{alg:weighted_aggregate_knockoff}. 
The following theorem establishes 
the validity of both versions. 

\begin{theorem}
\label{thm:weighted_dkn}
For any $\alpha_{\kn} ,\alpha_{\ebh} \in (0, 1)$, any number of knockoff copies $M \ge1$, 
and any fixed side information $u\in\RR_+^p$, the selected set offered by both
versions described in
Algorithm~\ref{alg:weighted_aggregate_knockoff}
satisfies $\fdr\leq \alpha_{\ebh}$.
\end{theorem}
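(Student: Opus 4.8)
The plan is to reduce, exactly as in the proofs of Theorems~\ref{thm:fdr},~\ref{thm:fix_kn} and~\ref{thm:dmekf}, to a single-run bound of the form $\sum_{j\in\calH_0}\EE[e_j^{(m)}]\le p$, proved separately for the weighted construction~\eqref{eq:weighted_eval} and the adaptive construction~\eqref{eq:adaptive_eval}. Once this holds for every $m\in[M]$, linearity of expectation gives $\sum_{j\in\calH_0}\EE[e_j^{\textnormal{avg}}]\le p$, and then Theorem~\ref{thm:relaxed_ebh} yields $\fdr\le\alpha_{\ebh}$ for the set returned by e-BH in Algorithm~\ref{alg:weighted_aggregate_knockoff}. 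So the whole content of the proof is the two single-run bounds, and since the $M$ runs are conditionally i.i.d.\ and each uses a validly constructed knockoff copy, it suffices to treat a single generic run.

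For the adaptive version this is essentially immediate from the validity of the adaptive knockoffs filter. The threshold $T^{(m)}$ in~\eqref{eq:adaptive_stopping_time} is a stopping time with respect to the sequential filtration generated by the side information together with the partially masked statistics $W^{(m)}$, and the FDR-control analysis of \citet{ren2020knockoffs} shows $\EE\big[\,|\calH_0\cap P_m(T^{(m)})|\,/\,(1+|N_m(T^{(m)})|)\,\big]\le 1$. By the definition~\eqref{eq:adaptive_eval} of $e_j^{(m)}$, and since $\sum_{j\in\calH_0}\ind\{j\in P_m(T^{(m)})\}=|\calH_0\cap P_m(T^{(m)})|$, pulling the sum through the expectation gives exactly $\sum_{j\in\calH_0}\EE[e_j^{(m)}]\le p$.

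For the weighted version I would establish the weighted analogue of the knockoff ratio bound~\eqref{eqn:kn_ratio}. By~\eqref{eq:weighted_eval}, $\sum_{j\in\calH_0}e_j^{(m)} = p\sum_{j\in\calH_0}u_j\ind\{W_j^{(m)}\ge T^{(m)}\}/\big(u_j+\sum_{k\in[p]}u_k\ind\{W_k^{(m)}\le -T^{(m)}\}\big)$; dropping the non-null terms from the denominator's sum, and then removing the (vacuous) $k=j$ term, gives $\sum_{j\in\calH_0}e_j^{(m)}\le p\sum_{j\in\calH_0}u_j\ind\{W_j^{(m)}\ge T^{(m)}\}/\big(u_j+\sum_{k\in\calH_0,\,k\ne j}u_k\ind\{W_k^{(m)}\le -T^{(m)}\}\big)$. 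It then remains to show that the expectation of this last ratio-sum is at most one. For this I would condition on the magnitudes $(|W_j^{(m)}|)_j$, the (external) weights $u$, and the signs of the non-null statistics: under a valid knockoff construction and a valid $\calW$, the null signs $(\sign W_j^{(m)})_{j\in\calH_0}$ are then i.i.d.\ symmetric, and $T^{(m)}$ is a stopping time for the corresponding masked filtration (Remark~\ref{remark:generalized_stopping_time}, or its weighted counterpart). Processing the nulls in decreasing order of magnitude and applying optional stopping to the weighted positive-versus-negative count process---with the per-hypothesis offset $u_j$ playing the role that the additive constant $1$ plays in~\eqref{eqn:kn_ratio}, since $u_j$ is precisely the weight that null $j$ would have added to the negative count had its sign flipped---then yields the bound $\le 1$; this is the weighted counterpart of~\eqref{eqn:kn_ratio}, closely following the analysis of \citet{ren2020knockoffs}.

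The main obstacle is this last step for the weighted version: verifying that, with the $j$-dependent denominator offset $u_j$, the relevant weighted counting process is a supermartingale so that the optional-stopping argument goes through. Everything else---the reduction through Theorem~\ref{thm:relaxed_ebh}, the averaging over $m$ via linearity of expectation, and the adaptive case---is bookkeeping of exactly the kind already carried out in Theorems~\ref{thm:fdr},~\ref{thm:fix_kn} and~\ref{thm:dmekf}.
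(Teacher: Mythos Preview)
Your overall reduction---single-run bound $\sum_{j\in\calH_0}\EE[e_j^{(m)}]\le p$, average over $m$, then Theorem~\ref{thm:relaxed_ebh}---matches the paper exactly, and your treatment of the adaptive version is essentially the paper's: it drops non-null terms from the denominator and cites Theorem~1 of \citet{ren2020knockoffs} for the ratio bound.

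For the weighted version, however, your proposed route diverges from the paper and the step you flag as ``the main obstacle'' is a real gap. You want to bound $\sum_{j\in\calH_0} u_j\ind\{W_j\ge T\}\big/\big(u_j+\sum_{k\in\calH_0,\,k\ne j}u_k\ind\{W_k\le -T\}\big)$ via optional stopping on a single ``weighted positive-versus-negative count process.'' But this is not one ratio: each summand carries its own offset $u_j$, so there is no single supermartingale whose stopped value equals this sum. The martingale in \citet{barber2015controlling} and its adaptive extension in \citet{ren2020knockoffs} rely on the offset being a fixed constant ($+1$); there is no evident weighted analogue, and you have not supplied one.

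The paper instead uses a leave-one-out argument. For each null $j$ it introduces the masked vector $W^{(m,j)}=(W_1,\dots,|W_j|,\dots,W_p)$ and the threshold $T^{(m,j)}$ it induces; since $T^{(m)}=T^{(m,j)}$ on $\{W_j\ge 0\}$, conditioning on $W^{(m,j)}$ and using the symmetric coin flip for $\sign(W_j)$ converts the numerator indicator $\ind\{W_j\ge T^{(m,j)}\}$ into $\ind\{W_j\le -T^{(m,j)}\}$. A short lemma (the paper's Lemma~\ref{lemma:tktl}, identical to Lemma~6 of \citet{barber2020robust}) then shows that whenever $W_j\le -T^{(m,j)}$ and $W_\ell\le -T^{(m,\ell)}$ both hold, the two masked thresholds coincide; this lets one rewrite all the denominators with a common index set, after which the sum over $j\in\calH_0$ is trivially $\le p$. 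Your intuition that ``$u_j$ is the weight $j$ would have contributed to the negative count had its sign flipped'' is exactly the leave-one-out idea---but the execution requires the masked stopping times $T^{(m,j)}$ and the threshold-coincidence lemma, not a supermartingale.
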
   
The proof of Theorem~\ref{thm:weighted_dkn} can be 
found in Appendix~\ref{sec:proof_of_wdkn}.
Intuitively, the weighted version makes explicit 
use of the side information but only applies 
in a special case; on the other hand, the adaptive
version is more general, but need to pay the price
of learning when utilizing the side information.
As such, it remains interesting to investigate the
pros and cons of these two versions in different scenarios.

\begin{algorithm}[htbp]
\caption{Derandomized knockoffs with side information}\label{alg:weighted_aggregate_knockoff}
\begin{algorithmic}[1]
\REQUIRE  Data $(\bX,\bY)$; parameters $\alpha_{\ebh},\alpha_{\kn} \in (0,1)$,
$M \in \mathbb{N}_+$ and $u\in \RR_+^p$. \;\\ 
\vspace{0.05in}
\FOR{$m = 1,\ldots,M$}
\STATE Sample the knockoff copy $\tilde{\bX}^{(m)}$.\;
\STATE Compute the feature importance statistics: $W^{(m)} = \calW\big([\bX,\tilde{\bX}^{(m)}],\bY)$.\;
\STATE \texttt{Weighted version}: compute the stopping time $T^{(m)}$
according to~\eqref{eq:early_stopping_time}, and
the e-values according
to~\eqref{eq:weighted_eval}.\\
\texttt{Adaptive version}: compute the stopping time $T^{(m)}$
according to~\eqref{eq:adaptive_stopping_time},
and the e-values 
according to~\eqref{eq:adaptive_eval}.\;
\ENDFOR
\STATE Compute the average e-value $e_j^{\textnormal{avg}} = \frac{1}{M}\sum^M_{m=1} e_j^{(m)}$ for each $j\in[p]$.\;
\STATE Compute $\hat{k} = \max\big\{k: e^{\textnormal{avg}}_{(k)} \ge {p}/(\alpha_{\ebh} k)\big\}$,
or $\hat k = 0$ if this set is empty.\;
\vspace{0.05in}
\ENSURE The selected set of discoveries 
$\calS_{\kn\textnormal{--derand}} 
\defn \big\{j \in [p]: e_j^{\textnormal{avg}} \ge p / (\alpha_{\ebh} \hat{k})\big\}$.
\end{algorithmic}
\end{algorithm}

\section{Proofs for Section~\ref{sec:extensions_appendix}}

\subsection{Proof of Theorem~\ref{thm:robustness}}
\label{appx:proof_robustness}
Fix $m\in[M]$. For any $j\in\calH_0$, we additionally define
the ``masked'' feature importance statistics 
$W^{(m,j)} = (W^{(m)}_1,\ldots,|W^{(m)}_j|,\ldots,W^{(m)}_p)$,
and the stopping time induced by $W^{(m,j)}$ as 
\$
T^{(m,j)} = \inf\bigg\{t > 0: \frac{1 + 
  \sum_{k \in [p]}\ind\{W^{(m,j)}_k \le -t\}}
  {\sum_{\ell \in [p]} \ind\{W^{(m,j)}_{\ell}\ge t \}} 
\le \alpha_{\kn}\bigg\}.
\$
Observe that when $W^{(m)}_j \ge 0$,  it holds that
$T^{(m)} = T^{(m,j)}$. Consequently,
\$
\EE\big[e_j^{(m)} \cdot \ind\{\hat{\kl}^{\max}_j \le \eps\} \given W^{(m,j)} \big] 
=  p\cdot \EE\bigg[\frac{\ind\{W_j^{(m)} \ge T^{(m)}, \hat{\kl}^{\max}_j \le \eps\}}
{1 + \sum_{k \in[p]} \ind\{W_k^{(m)} \le -T^{(m)}\}}
\Biggiven W^{(m,j)}\bigg]\\
=  p\cdot \EE\bigg[\frac{\ind\{W_j^{(m)} \ge T^{(m,j)}, \hat{\kl}^{\max}_j \le \eps\}}
{1 + \sum_{k \in[p]} \ind\{W_k^{(m)} \le -T^{(m,j)}\}}
\Biggiven W^{(m,j)}\bigg]\\
\stepa{=}  p\cdot \frac{\ind\{|W^{(m)}_j| \ge T^{(m,j)}\}}
{1 + \sum_{k \neq j} \ind\{W^{(m)}_k \le -T^{(m,j)}\}}
\cdot \PP\big(W^{(m)}_j > 0, \hat{\kl}^{\max}_j\le \eps \given W^{(m,j)}\big),
\$
where step (a) follows from the fact that $T^{(m,j)}$
is a function of $W^{(m,j)}$. Define now $\{\bX_j^{(m,+)},\bX_j^{(m,-)}\}$
the unordered set of $(\bX_j,\tilde{\bX}^{(m)}_j)$ such that
when $\bX_j = \bX_j^{(m,+)}$ and $\tilde{\bX}^{(m)}_j = \bX_j^{(m,-)}$,
$W_j > 0$ and vice versa. Let 
\$
\rho^{(m)}_j = \sum_{i=1}^n \log\bigg(\frac{P_j\big(\bX^{(m,+)}_{ij} \given \bX_{i,-j}\big)}
{Q_j\big(\bX^{(m,+)}_{ij} \given \bX_{i,-j} \big)}
\cdot
\frac{Q_j\big(\bX^{(m,-)}_{ij} \given \bX_{i,-j}\big)}
{P_j\big(\bX^{(m,-)}_{ij} \given \bX_{i,-j} \big)} \bigg).
\$
Then
\$
& \PP\Big(W^{(m)}_j > 0, \hat{\kl}^{\max}_j \le \eps \biggiven \big\{\bX_j^{(m,+)}, \bX^{(m,-)}_j \big\},
\bX_{-j}, \tilde{\bX}_{-j}^{(m)},\bY \Big)\\
\le & 
\PP\Big(W^{(m)}_j > 0, \hat{\kl}^{(m)}_j \le \eps \biggiven \big\{\bX_j^{(m,+)}, \bX^{(m,-)}_j \big\},
\bX_{-j}, \tilde{\bX}_{-j}^{(m)},\bY \Big)\\
= & \PP\Big(\bX_j = \bX_j^{(m,+)}, \tilde{\bX}_j = \bX^{(m,-)}, \rho_j^{(m)} \le \eps
\biggiven \big\{\bX_j^{(m,+)}, \tilde{\bX}^{(m,-)}_j\big\},
\bX_{-j}, \tilde{\bX}_{-j}^{(m)},\bY\Big)\\
\stepa{=} &e^{\rho^{(m)}_j} \cdot \ind\{\rho^{(m)}_j \le \eps\}  
\cdot \PP\Big(\bX_j = \bX_j^{(m,-)}, \tilde{\bX}_j = \bX^{(m,+)}
\biggiven \big\{\bX_j^{(m,+)}, \tilde{\bX}^{(m,-)}_j\big\},
\bX_{-j}, \tilde{\bX}_{-j}^{(m)},\bY\Big)\\
\le & e^{\eps}
\cdot \PP\Big(\bX_j = \bX_j^{(m,-)}, \tilde{\bX}_j = \bX^{(m,+)}
\biggiven \big\{\bX_j^{(m,+)}, \tilde{\bX}^{(m,-)}_j\big\},
\bX_{-j}, \tilde{\bX}_{-j}^{(m)},\bY\Big)\\
= & e^{\eps}
\cdot \PP\Big(W^{(m)}_j < 0
\biggiven \big\{\bX_j^{(m,+)}, \tilde{\bX}^{(m,-)}_j\big\},
\bX_{-j}, \tilde{\bX}_{-j}^{(m)},\bY\Big),
\$
where step (a) follows from Lemma 1 of~\citet{barber2020robust}.
Next, we shall make
use of the following lemma that connects $T^{(m,j)}$ to $T^{(m,\ell)}$, whose 
proof is exactly the same as Lemma 6 of~\citet{barber2020robust}, so we 
omit the proof here.
\begin{lemma}\label{lemma:tktl}
For any $j,\ell$, if $W^{(m)}_j \le - \min(T^{(m,j)},T^{(m,\ell)})$ and 
$W^{(m)}_{\ell} \le - \min(T^{(m,j)}, T^{(m,\ell)})$, then
$T^{(m,j)} = T^{(m,\ell)}$.
\end{lemma}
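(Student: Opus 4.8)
The plan is to exploit a symmetry between the two partially-unmasked statistic vectors $W^{(m,j)}$ and $W^{(m,\ell)}$. Recall that $W^{(m,j)}$ agrees with $W^{(m)}$ in every coordinate except that its $j$-th entry is $|W^{(m)}_j|$ rather than $W^{(m)}_j$, and likewise for $W^{(m,\ell)}$; hence $W^{(m,j)}$ and $W^{(m,\ell)}$ agree on every coordinate $k \notin \{j,\ell\}$, and differ only in that in positions $(j,\ell)$ the first carries $\big(|W^{(m)}_j|,\, W^{(m)}_\ell\big)$ while the second carries $\big(W^{(m)}_j,\, |W^{(m)}_\ell|\big)$. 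Writing $R_w(t) = \big(1 + \sum_{k\in[p]} \ind\{w_k \le -t\}\big)\big/\sum_{k\in[p]} \ind\{w_k \ge t\}$ for $w\in\RR^p$ and $t>0$, so that $T^{(m,i)} = \inf\{t>0 : R_{W^{(m,i)}}(t) \le \alpha_{\kn}\}$, I would first assume without loss of generality that $\tau := \min(T^{(m,j)},T^{(m,\ell)}) = T^{(m,j)} \le T^{(m,\ell)}$, reducing the claim to $T^{(m,\ell)} \le \tau$, and I would dispose of the case $\tau = +\infty$ immediately, since then the hypotheses $W^{(m)}_j \le -\tau$ and $W^{(m)}_\ell \le -\tau$ cannot hold.

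The key step is to show that $R_{W^{(m,j)}}(t) = R_{W^{(m,\ell)}}(t)$ for every $t$ with $0 < t \le \tau$; this is exactly where the appearance of $\min$ in the hypothesis is used. From $W^{(m)}_j \le -\tau$ and $W^{(m)}_\ell \le -\tau$ we get $|W^{(m)}_j| \ge \tau \ge t$ and $|W^{(m)}_\ell| \ge \tau \ge t$. Hence in $W^{(m,j)}$ the coordinate $j$, with value $|W^{(m)}_j|$, clears the threshold $t$ and so adds $1$ to the denominator and $0$ to the numerator of $R_{W^{(m,j)}}(t)$, whereas coordinate $\ell$, with value $W^{(m)}_\ell \le -\tau \le -t$, adds $0$ to the denominator and $1$ to the numerator; in $W^{(m,\ell)}$ precisely the same two contributions occur, now produced by coordinates $\ell$ and $j$ respectively. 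Since all other coordinates are identical across the two vectors, the numerators agree and the denominators agree, giving the claimed equality. In passing this also shows the denominator is at least $1$ at $t = \tau$, so both ratios are finite and well defined there.

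To conclude, I would use the definition of $T^{(m,j)}$ to obtain $R_{W^{(m,j)}}(\tau) \le \alpha_{\kn}$, whence the equality just proved gives $R_{W^{(m,\ell)}}(\tau) \le \alpha_{\kn}$, so that $T^{(m,\ell)} \le \tau = T^{(m,j)}$; combined with the reduction this yields $T^{(m,j)} = T^{(m,\ell)}$. I expect the main obstacle to be purely a matter of careful book-keeping around the infima defining the stopping times: one must know that the infimum is attained, i.e.\ that $R_{W^{(m,j)}}(T^{(m,j)}) \le \alpha_{\kn}$ genuinely holds (it suffices to restrict attention to the finitely many thresholds among $\{|W^{(m)}_k|\}_{k\in[p]}$, exactly as in the original knockoff construction), and one must be careful that the symmetry has been established only on $(0,\tau]$ and therefore may be invoked at $t=\tau$ but not beyond. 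This is the argument of Lemma~6 of \citet{barber2020robust}, transcribed into the present notation.
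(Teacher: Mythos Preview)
Your proposal is correct and is precisely the argument of Lemma~6 of \citet{barber2020robust}, which is what the paper defers to (the paper omits its own proof and simply cites that lemma). The symmetry you exploit---that on $(0,\tau]$ the pair of coordinates $(j,\ell)$ contributes exactly one count to the numerator and one to the denominator in both $R_{W^{(m,j)}}$ and $R_{W^{(m,\ell)}}$---is the entire content of the lemma, and your handling of the boundary issues (attainment of the infimum via the finite set $\{|W^{(m)}_k|\}$, and the vacuous $\tau=+\infty$ case) is appropriate.
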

Using the above and the tower property, we have
\$
\EE\big[e_j^{(m)} \cdot \ind \{\hat{\kl}^{\max}_j \le \eps\}\big]
\le & p\cdot e^{\eps} \cdot \EE\bigg[\frac{\ind\{W^{(m)}_j \le -T^{(m)}_j\}}
{1+ \sum_{k\neq j} \ind \{W_k^{(m)} \le -T_j^{(m)}\}}\bigg]\\
= &p\cdot e^{\eps} \cdot \EE\bigg[\frac{\ind\{W^{(m)}_j \le -T^{(m)}_j\}}
{1+ \sum_{k\neq j} \ind \{W_k^{(m)} \le -T_k^{(m)}\}}\bigg].
\$
The last equality is due to Lemma~\ref{lemma:tktl}. Summing over
$ m\in [M]$ and $j\in\calH_0$, we have
\$
\sum_{j\in \calH_0} \EE\big[e_j^{\textnormal{avg}}\cdot \ind\{\hat{\kl}^{\max}_j \le\eps\}\big]
= &\frac{1}{M}\sum_{j\in \calH_0} \sum_{m=1}^M\EE\big[e_j^{(m)} \cdot \ind\{\hat{\kl}^{\max}_j\le\eps\}\big]\\
\le& \frac{p}{M}\sum^M_{m=1} e^{\eps}
\EE\bigg[\frac{\sum_{j\in\calH_0} \ind \{W^{(m)}_j \le -T_j^{(m)}\}}
{\sum_{k\in\calH_0} \ind\{W_k^{(m)} \le -T^{(m)}_k \}}\bigg] \\
\le & p\cdot e^{\eps}.
\$
Finally, 
\$
\EE\bigg[\frac{\big|\big\{j: j\in \calS_{\ebh} \cap \calH_0, \hat{\kl}^{\max}_j \le \eps\big\}\big|}{|\calS_{\ebh}|\vee 1}\bigg]
= & \sum_{j\in\calH_0} \EE\bigg[\frac{\ind\{e_j^{\textnormal{avg}} \geq \frac{p}{\alpha|\calS_{\ebh}|}\}\cdot \ind\{\hat{\kl}^{\max}_j\le \eps\}}
{|\calS_{\ebh}| \vee 1}\bigg]\\
= & \sum_{j\in\calH_0}\sum^p_{k=1}
\EE\bigg[\frac{\ind\{e_j^{\textnormal{avg}}\ge \frac{p}{\alpha k}, |\calS_{\ebh}| = k\}\cdot \ind\{\hat{\kl}^{\max}_j\le \eps\}}
{k}\bigg]\\
\le & \frac{\alpha}{p} \sum_{j\in \calH_0}\sum_{k=1}^p
\EE\big[e_j^{\textnormal{avg}} \cdot \ind\{|\calS_{\ebh}| = k\}\cdot\ind\{ \hat{\kl}^{\max}_j \le \eps\}\big]\\
\le & \frac{\alpha}{p}\sum_{j\in \calH_0}\EE[e_j^{\textnormal{avg}} \cdot \ind\{\hat{\kl}^{\max}_j\le \eps\}]\\ 
\le & e^{\eps}\cdot \alpha,
\$
completing the proof.

\subsection{Proof of Theorem~\ref{thm:weighted_dkn}}
\label{sec:proof_of_wdkn}
By Theorem~\ref{thm:relaxed_ebh}, for both versions of the algorithm it suffices to check that the $e_j$'s satisfy
the relaxed e-value condition~\eqref{eqn:evals_relaxed}.
\paragraph{Weighted version}
We continue using the notation for masked feature importance
statistics and the corresponding stopping time.
For any $m\in[M]$, $j\in \calH_0$ and the e-value
defined in~\eqref{eq:weighted_eval}, we have
\$
e_j^{(m)} = \frac{pu_j \cdot \ind\{W_j^{(m)} \ge T^{(m)}\}}
{u_j + \sum_{\ell \neq j} u_\ell  \cdot \ind\{W_{\ell}^{(m)} \le -T^{(m)}\}} 
= \frac{pu_j \cdot \ind\{W_j^{(m)} \ge T^{(m,j)}\}}
{u_j + \sum_{\ell \neq j} u_{\ell}  \cdot \ind\{W_{\ell}^{(m)} \le -T^{(m,j)}\}} 
\$
Since $T^{(m,j)}$ is fully deterministic given $W^{(m,j)}$,
we have
\@
\label{eq:weighted_eq1}
\EE\big[e_j^{(m)} \biggiven W^{(m,j)}\big]
= & \PP\big(W_j^{(m)} > 0 \given W^{(m,j)}\big) \cdot 
\frac{pu_j \cdot \ind \big\{|W_j^{(m)}| \ge T^{(m,j)}\big\}}
{u_j + \sum_{\ell \neq j} u_{\ell} \cdot
\ind \big\{ W_{\ell}^{(m)} \le -T^{(m,j)}\big\}} \nonumber\\
\stackrel{\rm (a)}{=} &\PP\big(W_j < 0 \given W^{(m,j)}\big) \cdot 
\frac{pu_j \cdot \ind \big\{|W_j^{(m)}| \ge T^{(m,j)}\big\}}
{u_j + \sum_{\ell \neq j} u_{\ell} \cdot
\ind \big\{ W_{\ell}^{(m)} \le -T^{(m,j)}\big\}}\nonumber\\ 
= & \EE \bigg[\frac{p u_j \cdot \ind\big\{W_j^{(m)} \le- T^{(m,j)}\big\}}
{u_j + \sum_{\ell \neq j} u_{\ell} \cdot 
\ind\big\{W^{(m)}_{\ell} \le -T^{(m,j)}\big\}} \bigggiven W^{(m,j)}\bigg],
\@
where step (a) uses the property
of knockoffs that conditional
on the magnitudes $(|W_1|,|W_2|,\ldots,|W_p|)$,
the signs of the null $W_j$'s
are i.i.d.~coin flips (c.f.~\citet[Lemma 3.3]{candes2018panning}).
By Lemma~\ref{lemma:tktl},
\$
\eqref{eq:weighted_eq1} = 
\EE \bigg[\frac{p u_j \cdot \ind\big\{W_j^{(m)} \le- T^{(m,j)}\big\}}
{u_j + \sum_{\ell \neq j} u_{\ell} \cdot 
\ind\big\{W^{(m)}_{\ell} \le -T^{(m,\ell)}\big\}} \bigggiven W^{(m,j)}\bigg]
= \EE \bigg[ \frac{p u_j \cdot \ind\big\{W_j^{(m)} \le- T^{(m,j)}\big\}}
{\sum_{\ell \in [p]} u_{\ell} \cdot 
\ind\big\{W^{(m)}_{\ell} \le -T^{(m,\ell)}\big\}} \bigggiven W^{(m,j)} \bigg] .
\$
Summing over $j \in \calH_0$, we have 
\$
\sum_{j\in\calH_0}\EE\big[e_j^{(m)}\big] = 
p \cdot \EE\Big[\frac{\sum_{j\in\calH_0} u_j \cdot \ind\{W_j^{(m)} \le -T^{(m,j)}\}}
{\sum_{\ell \in [p]} u_{\ell} \cdot \ind\{W_{\ell} \le -T^{(m,j)}\}}\Big] 
\le p \cdot \EE\Big[\frac{\sum_{j\in\calH_0} u_j \cdot \ind\{W_j^{(m)} \le -T^{(m,j)}\}}
{\sum_{\ell \in \calH_0} u_{\ell} \cdot \ind\{W_{\ell} \le -T^{(m,j)}\}}\Big] 
=p.
\$
We then average over $m\in[M]$ and obtain that
\$
\sum_{j\in\calH_0} \EE[e_j^{\textnormal{avg}}] = 
\sum_{j\in\calH_0} \EE\Big[\frac{1}{M} \sum^M_{m=1} e_j^{(m)}\Big]
= \frac{1}{M}\sum^M_{m=1}\sum_{j\in\calH_0} \EE\big[e_j^{(m)}\big]
\le p.
\$

\paragraph{Adaptive version}
For any $m\in[M]$ and 
the e-value defined in~\eqref{eq:adaptive_eval},
\$
\sum_{j\in\calH_0}\EE\big[e_j^{(m)}\big]
= p \cdot \sum_{j \in \calH_0} 
\EE\bigg[\frac{\ind\big\{j \in P_m(T^{(m)})\big\}}
{1+ |N_m(T^{(m)})|}\bigg] 
\le   p \cdot \EE\bigg[\frac{\big|\big\{j\in\calH_0, j\in P_m(T^{(m)})\big\}\big|}
{1+ \big|\big\{j\in\calH_0, j\in N_m(T^{(m)}) \big\}\big| }\bigg]
\le p,
\$
where the last inequality follows from Theorem 1 of~\citet{ren2020knockoffs}.
We then have $\sum_{j\in\calH_0} \EE[e_j^{\textnormal{avg}}] \leq p$ by linearity of expected value.

\section{Additional simulations}
\label{sec:add_simulations}
In this section, we collect results from  additional simulations
to evaluate our proposed method under a variety of scenarios.

\subsection{The effect of $c$ and $\alpha_{\kn}$}
Under the two simulation settings in Section~\ref{sec:simulation},
we additionally investigate the joint effect of $c$ and $\alpha_{\kn}$
on the performance of derandomized knockoffs. Here, we consider 
$\alpha_{\ebh} = 0.1$ and $0.2$; for the former choice, we let 
$\alpha_{\kn}$ range in $\{0.01,0.02,\ldots,0.2\}$ and in $\{0.02,0.04,\ldots,0.4\}$
for the latter. In both cases, $c \in \{0.1,0.2,\ldots,2\}$. 
Figure~\ref{fig:heatmap} and~\ref{fig:heatmap_2} are heatmaps
of the proposed procedure's power resulting from different choices of $(c,\alpha_{\kn})$ 
with $\alpha_{\ebh} =0.1$ and $\alpha_{\ebh} = 0.2$, respectively.
Figure~\ref{fig:alpha_2} further plots the power and FDR as functions
of $\alpha_{\kn}$ when $\alpha_{\ebh}=0.2$.
In all cases, we observe that 
derandomized knockoffs consistently performs well with 
the choice $c=1$ and $\alpha_{\kn} = \alpha_{\ebh}/2$, 
justifying our intuition that $\alpha_{\kn} = \alpha_{\ebh}/2$
is a good default choice.

\begin{figure}[h]
\centering 
\rotatebox{90}{Gaussian}
~
\begin{minipage}{0.3\textwidth}
\includegraphics[width = \textwidth]{./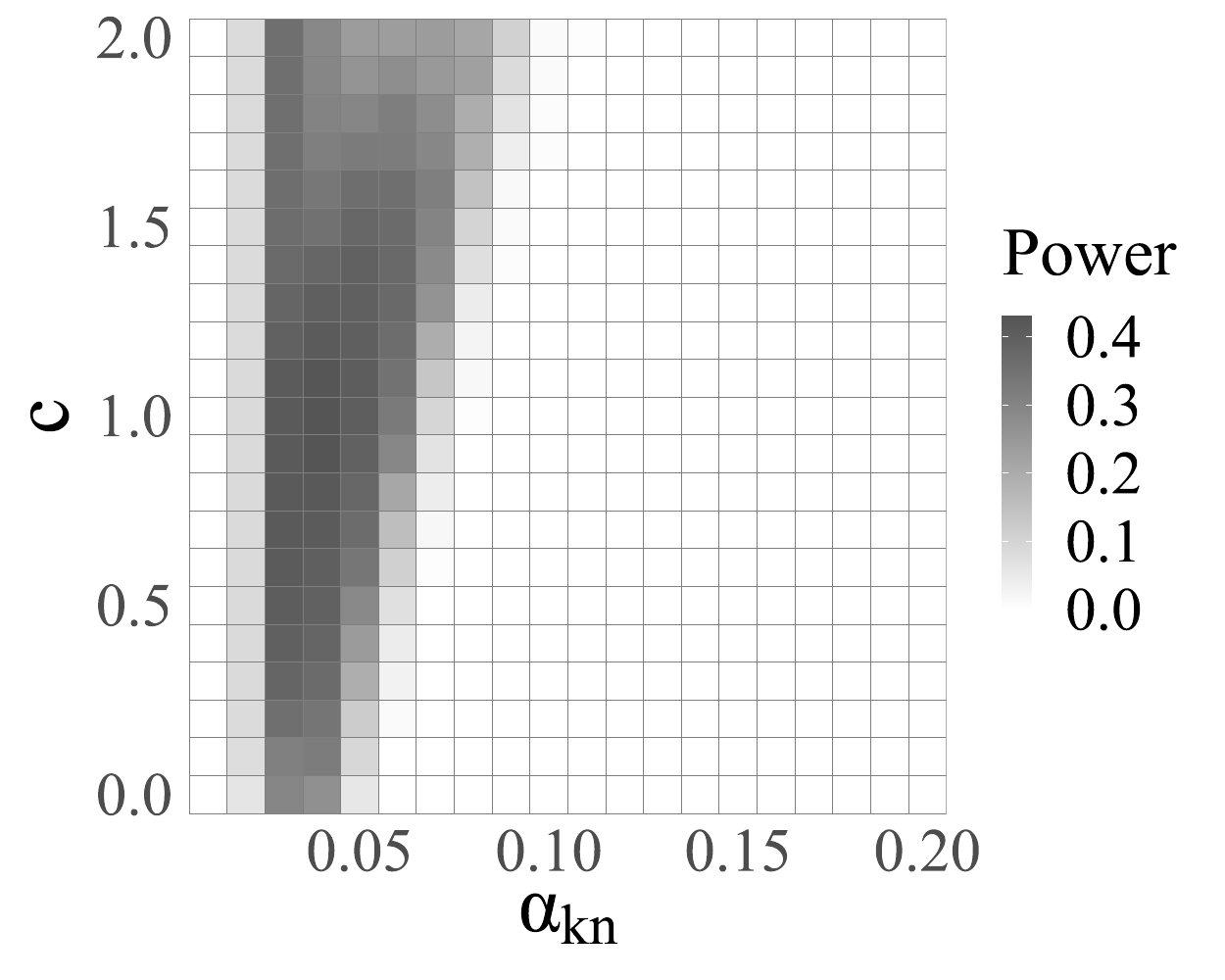}
\end{minipage}
\begin{minipage}{0.3\textwidth}
\includegraphics[width = \textwidth]{./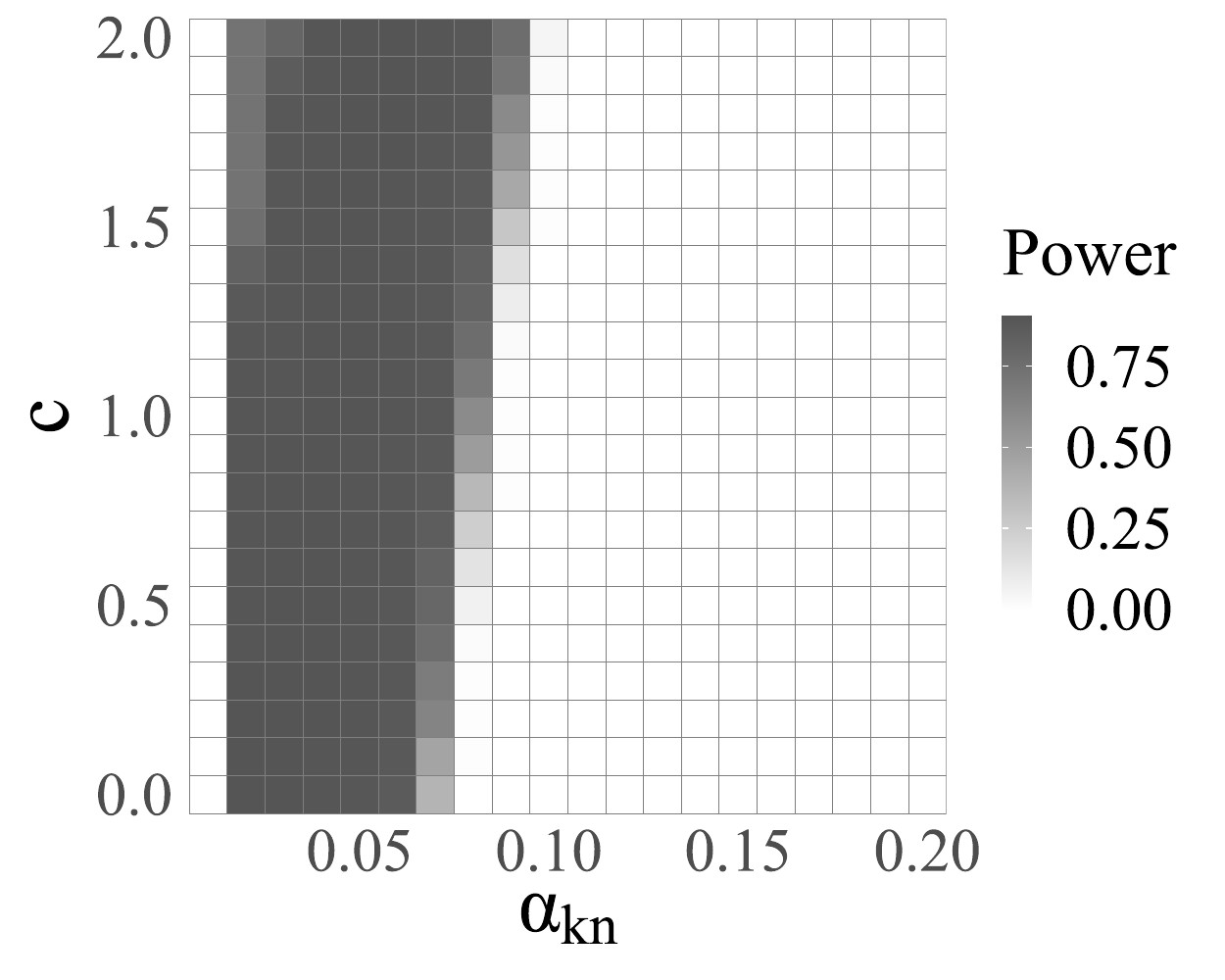}
\end{minipage}
\begin{minipage}{0.3\textwidth}
\includegraphics[width = \textwidth]{./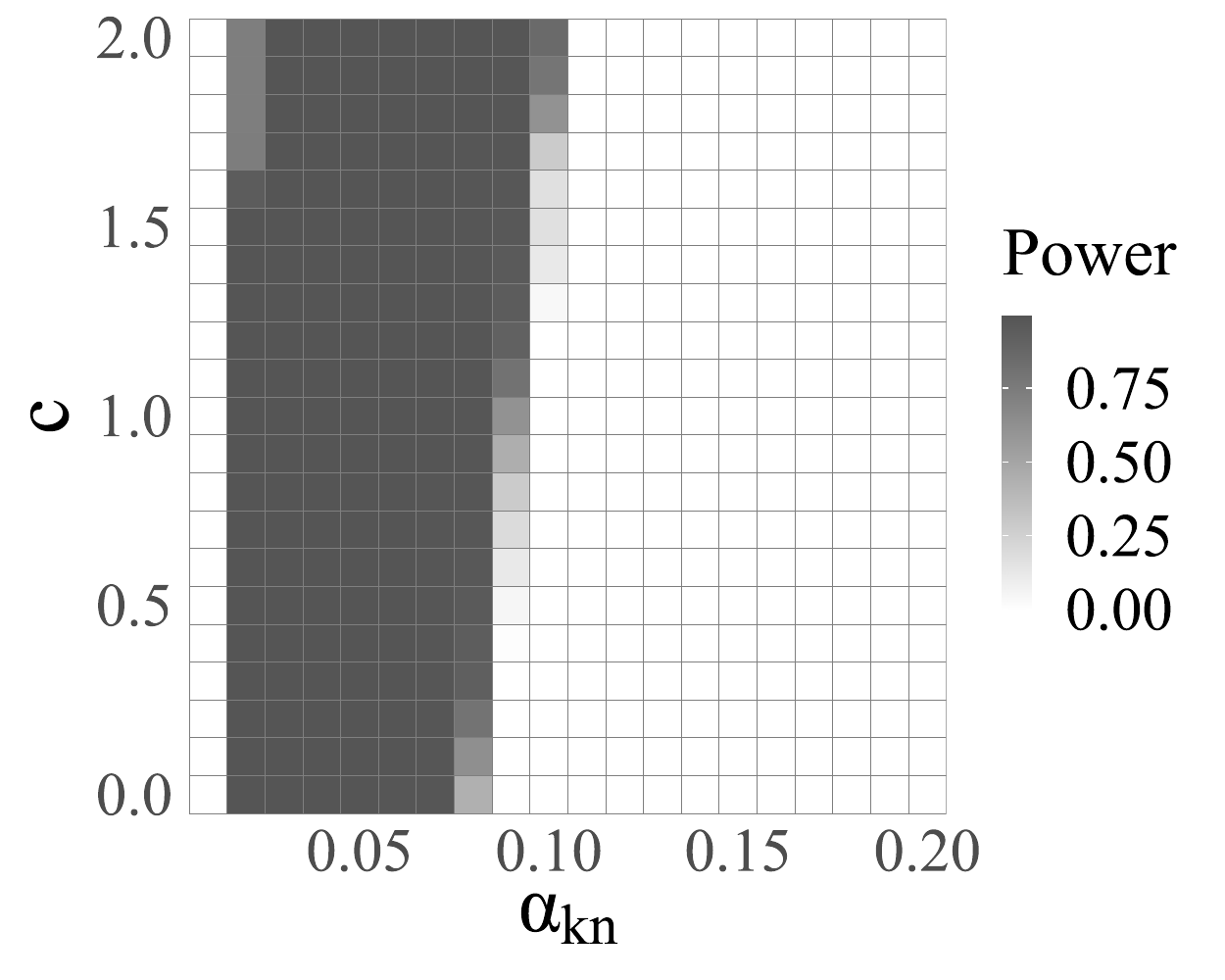}
\end{minipage}\\
\rotatebox{90}{Logistic}
~
\begin{minipage}{0.3\textwidth}
\includegraphics[width = \textwidth]{./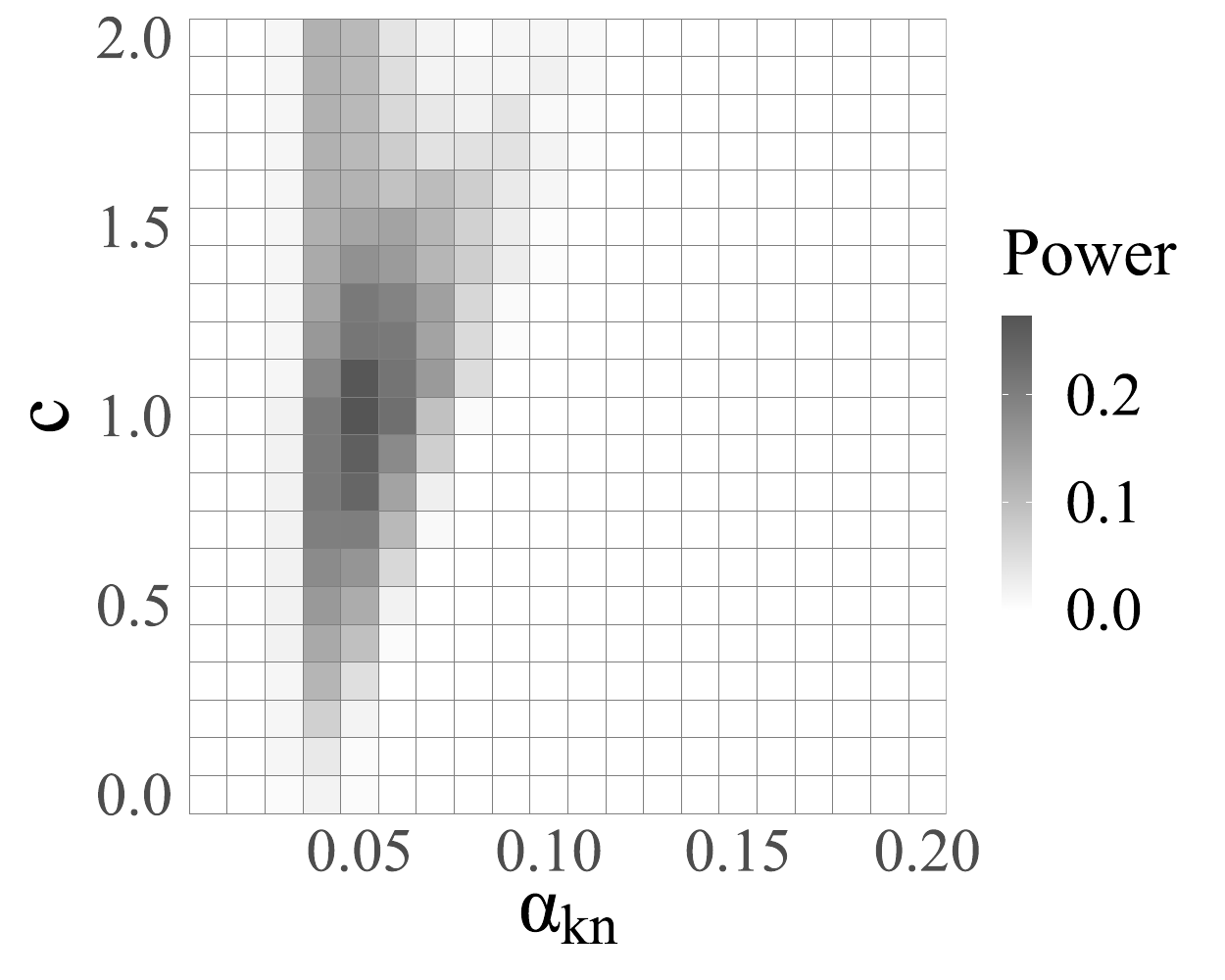}
\end{minipage}
\begin{minipage}{0.3\textwidth}
\includegraphics[width = \textwidth]{./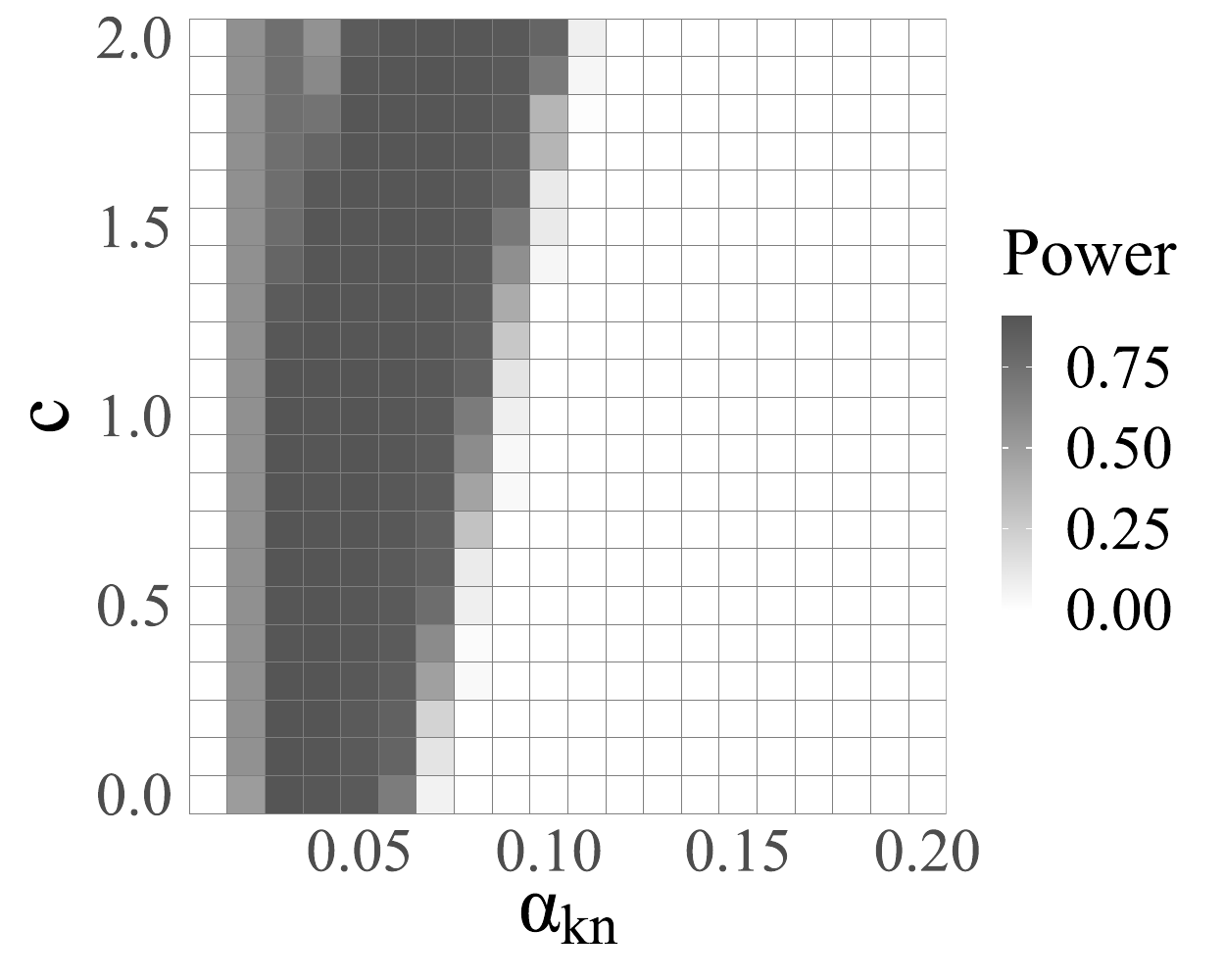}
\end{minipage}
\begin{minipage}{0.3\textwidth}
\includegraphics[width = \textwidth]{./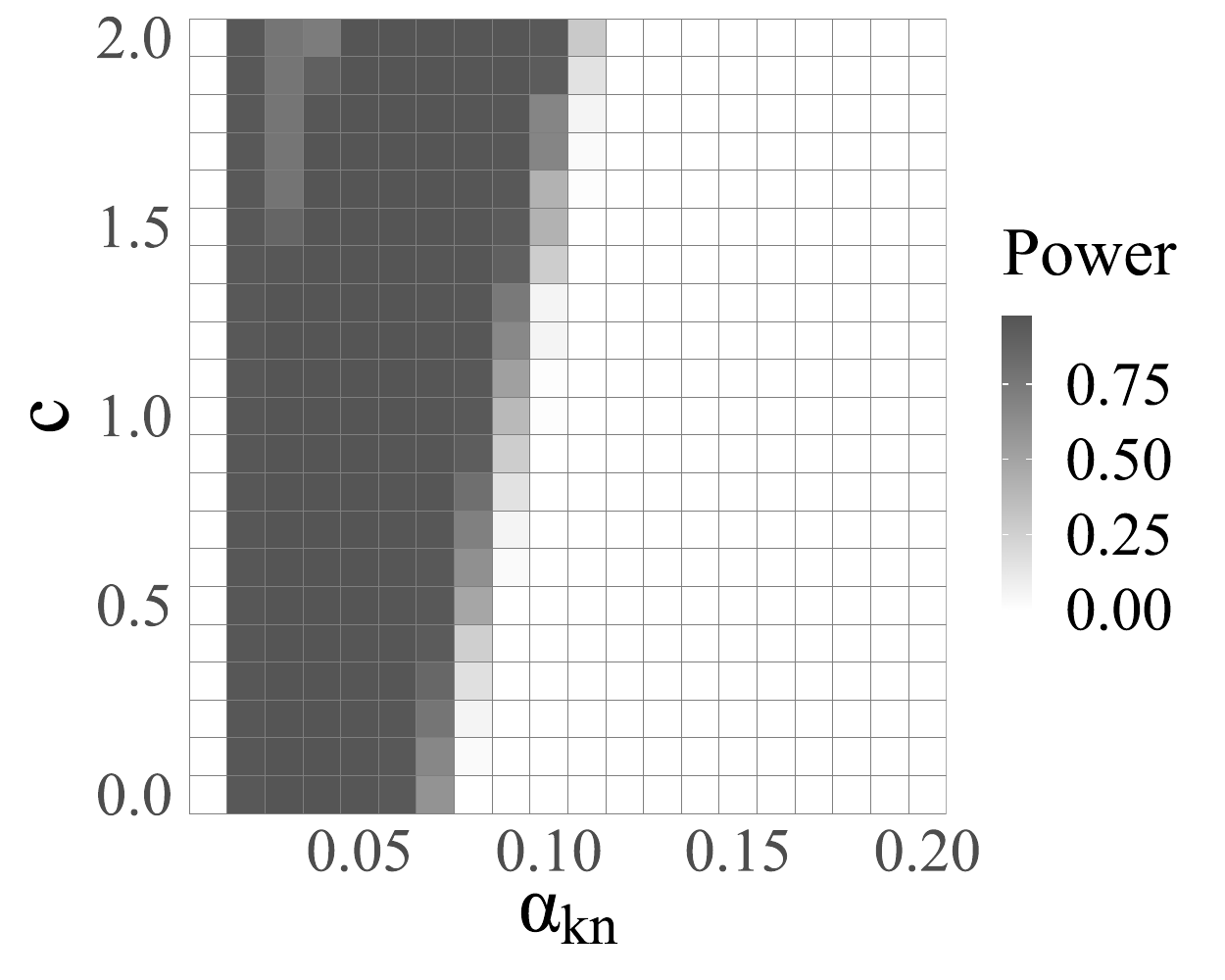}
\end{minipage}
\caption{Heatmaps of the power of derandomized knockoffs with different choices 
of $(c,\alpha_{\kn})$. The left, middle and right columns correspond to low, medium 
and high signal amplitude, respectively. A darker color represents a higher value. 
In these experiments, we set $\alpha_{\ebh} = 0.1$, and we observe that power is consistently 
high around $\alpha_{\kn}=0.05$, justifying our intuition that $\alpha_{\kn} = \alpha_{\ebh}/2$
is a good default setting.}
\label{fig:heatmap}
\end{figure}

\begin{figure}[h]
\centering 
\rotatebox{90}{Gaussian}
~
\begin{minipage}{0.3\textwidth}
\includegraphics[width = \textwidth]{./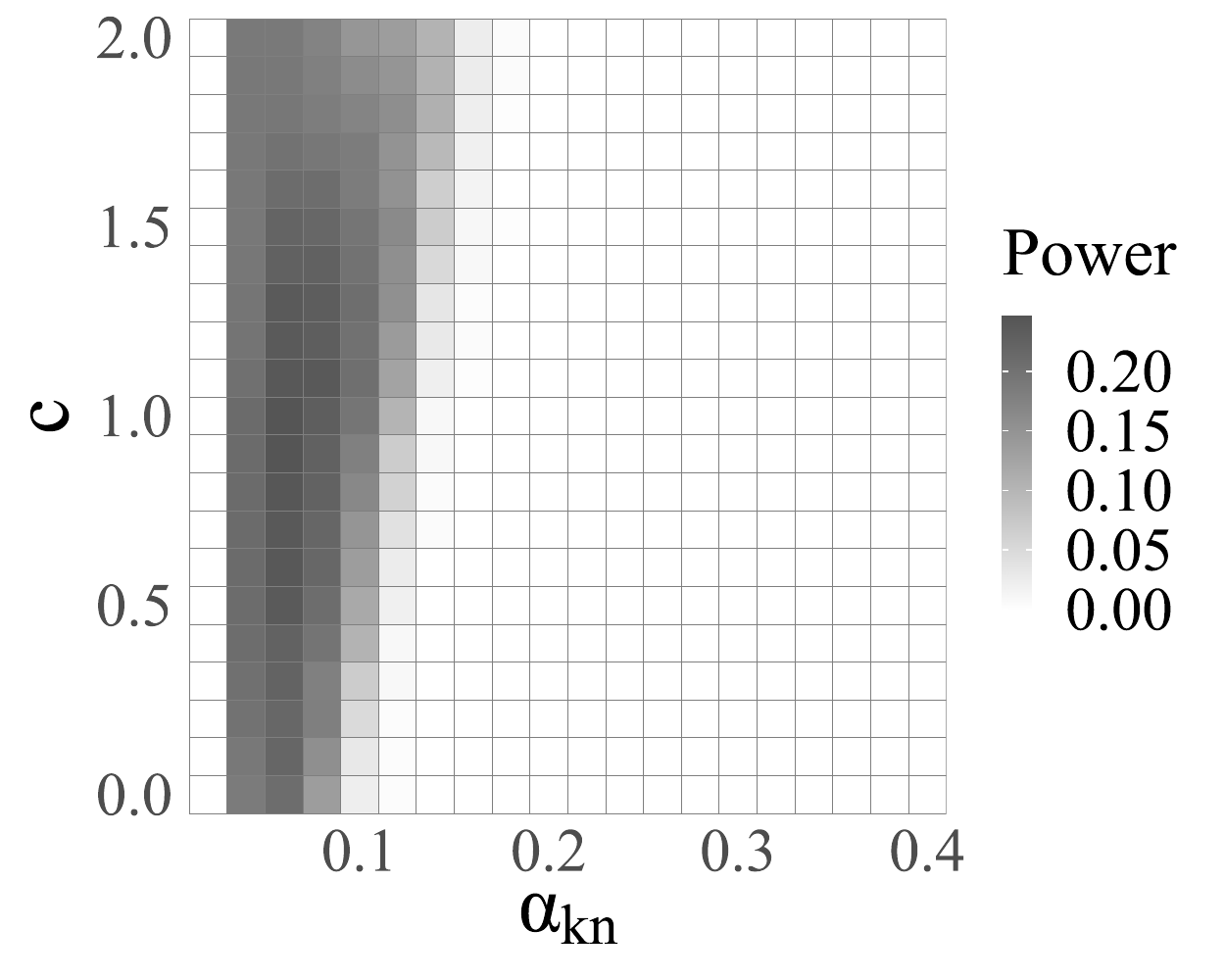}
\end{minipage}
\begin{minipage}{0.3\textwidth}
\includegraphics[width = \textwidth]{./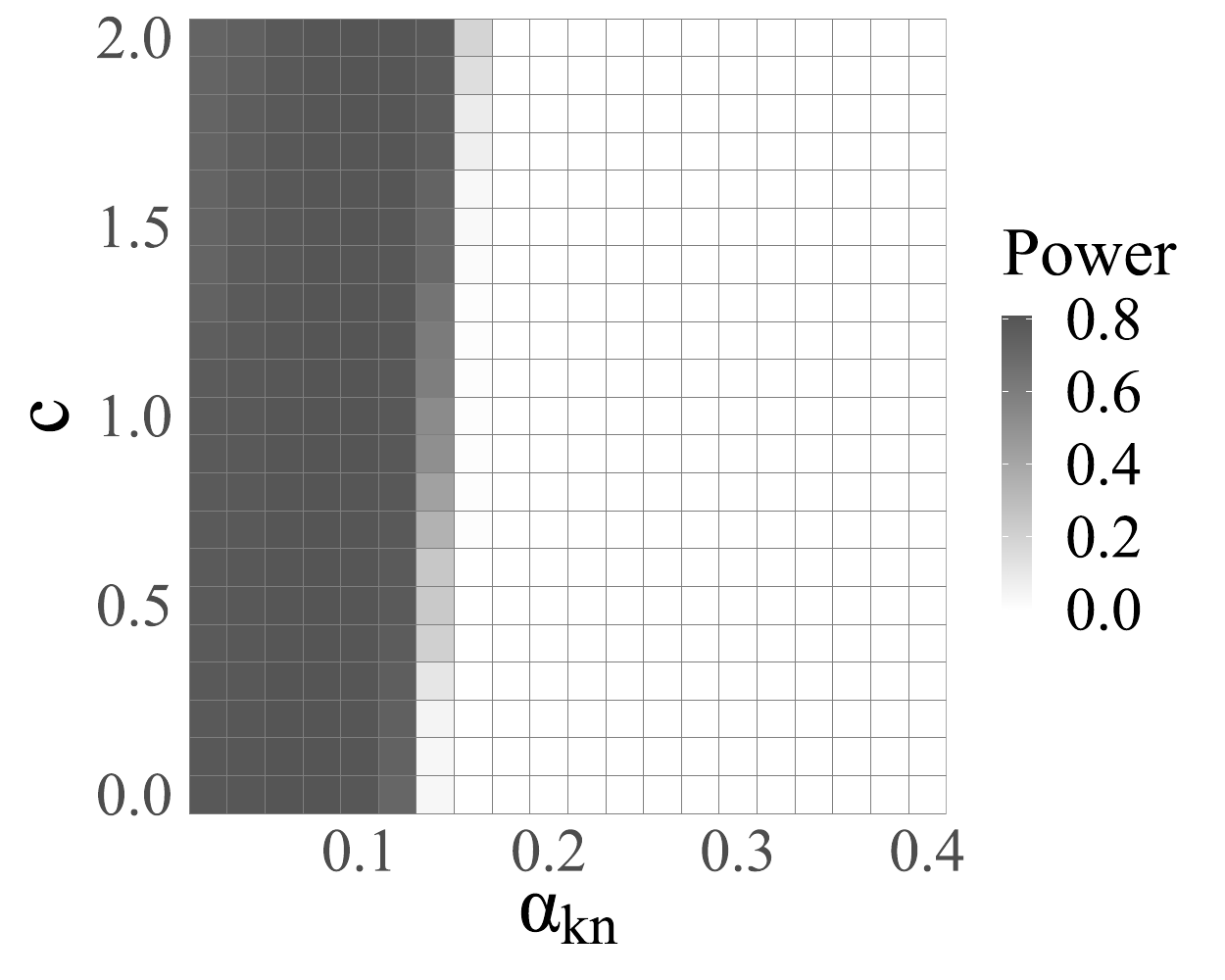}
\end{minipage}
\begin{minipage}{0.3\textwidth}
\includegraphics[width = \textwidth]{./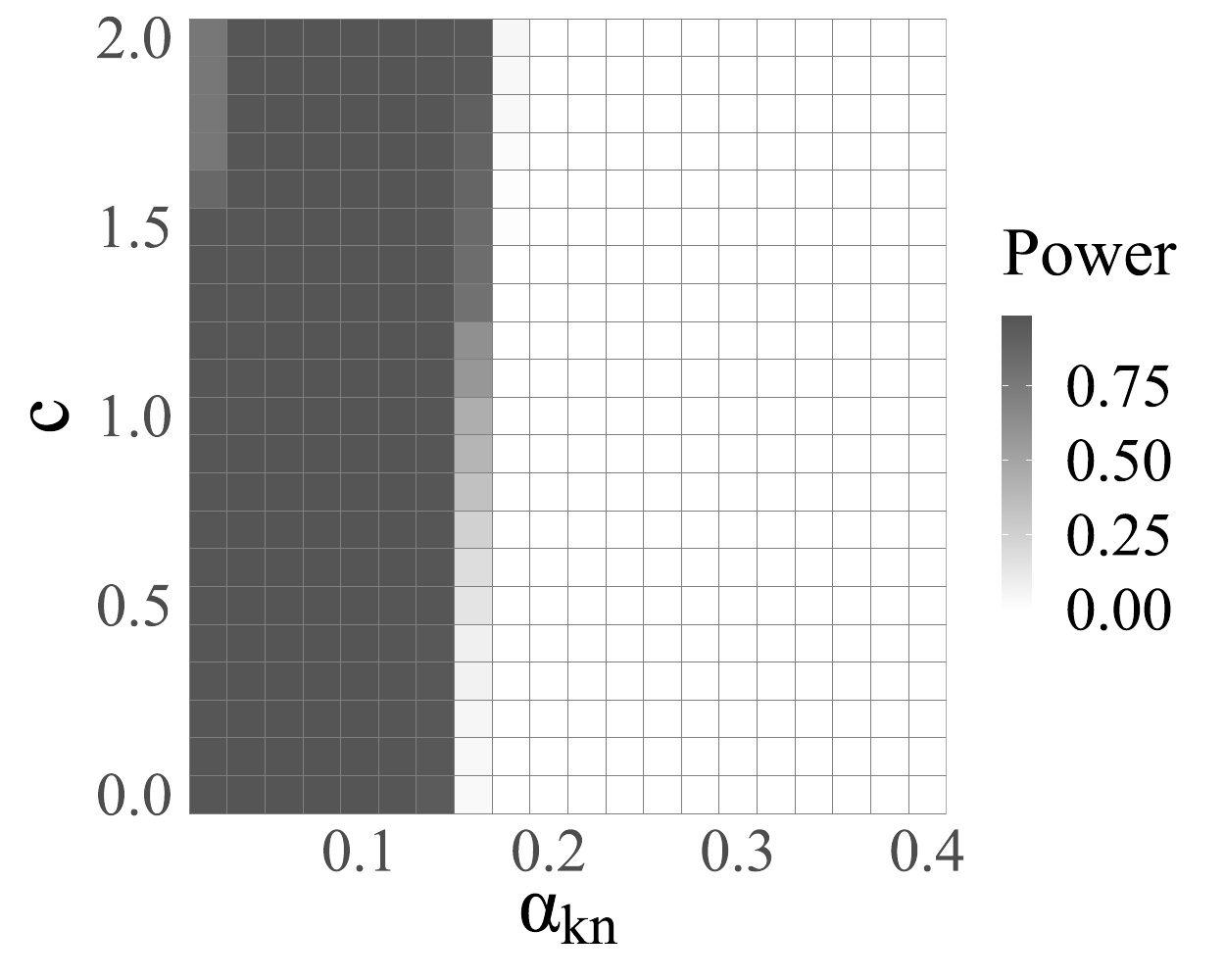}
\end{minipage}\\
\rotatebox{90}{Logistic}
~
\begin{minipage}{0.3\textwidth}
\includegraphics[width = \textwidth]{./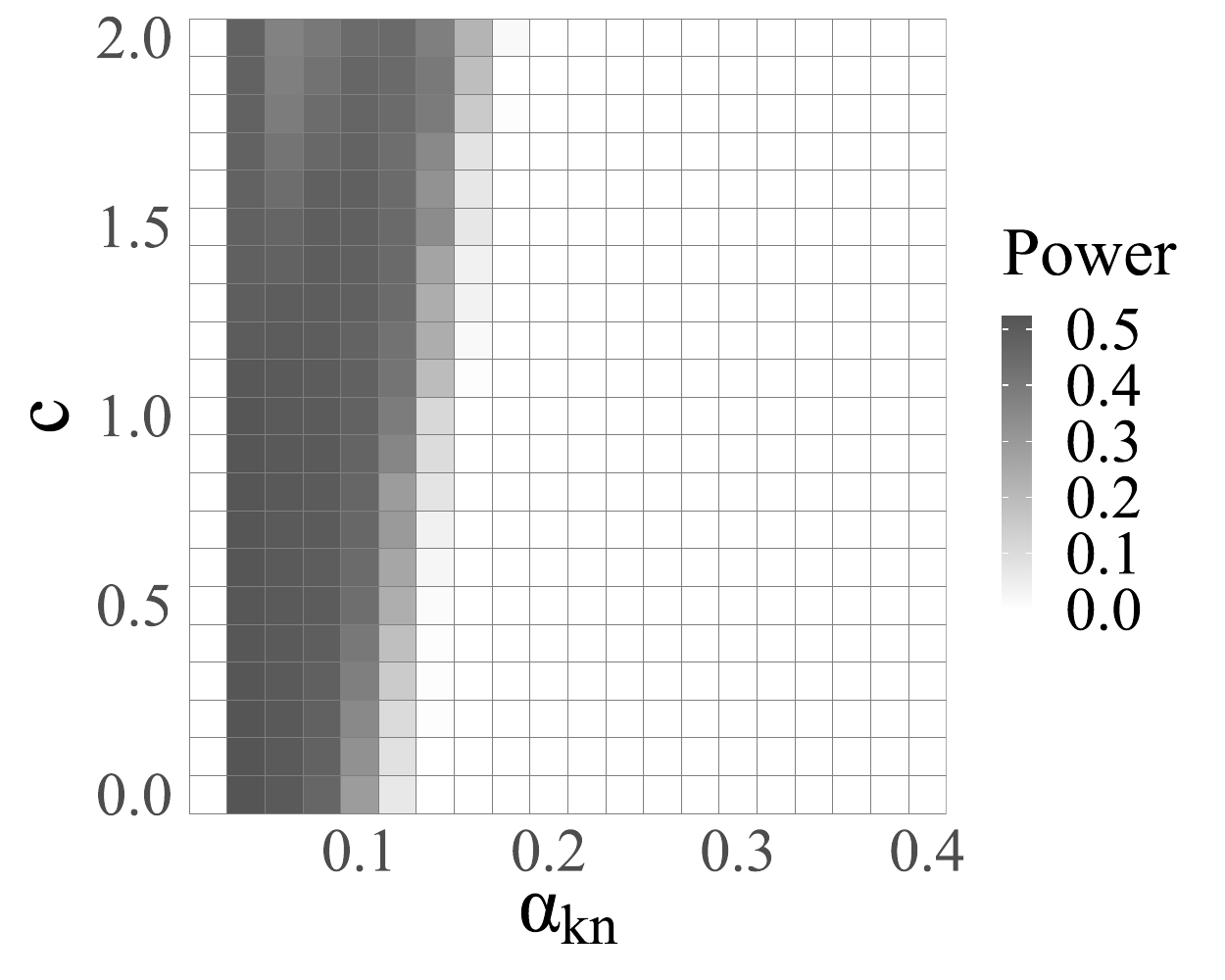}
\end{minipage}
\begin{minipage}{0.3\textwidth}
\includegraphics[width = \textwidth]{./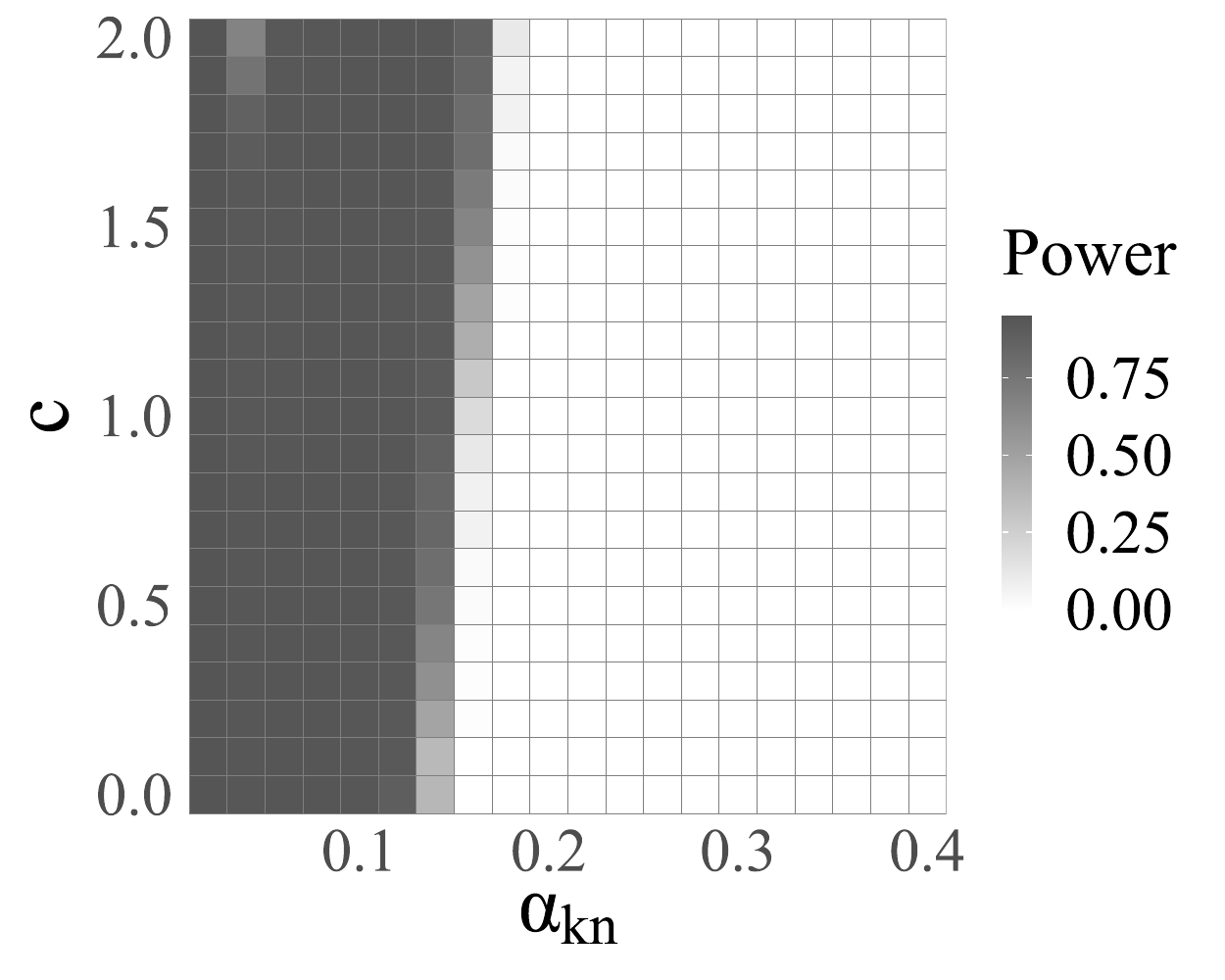}
\end{minipage}
\begin{minipage}{0.3\textwidth}
\includegraphics[width = \textwidth]{./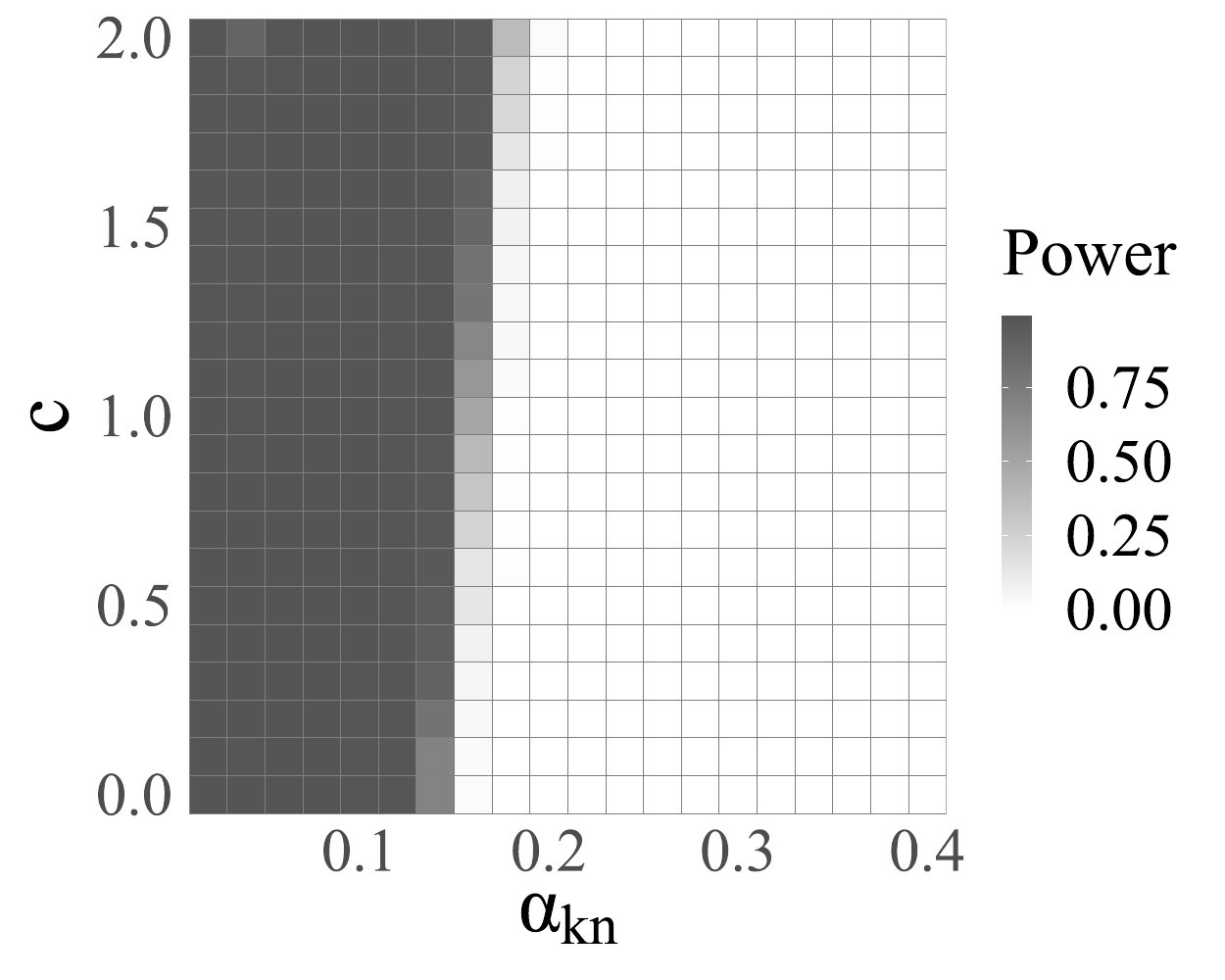}
\end{minipage}
\caption{Heatmaps of the power of derandomized knockoffs with different choices 
of $(c,\alpha_{\kn})$. The target FDR level $\alpha_{\ebh} = 0.2$, and the power is 
consistently high around $\alpha_{\kn} = 0.1$. 
The other details are the same as in Figure~\ref{fig:heatmap}.}
\label{fig:heatmap_2}
\end{figure}

\begin{figure}[h]
\centering 
\rotatebox{90}{Gaussian}
\begin{minipage}{0.45\textwidth}
\centering
\includegraphics[width = 0.8\textwidth]{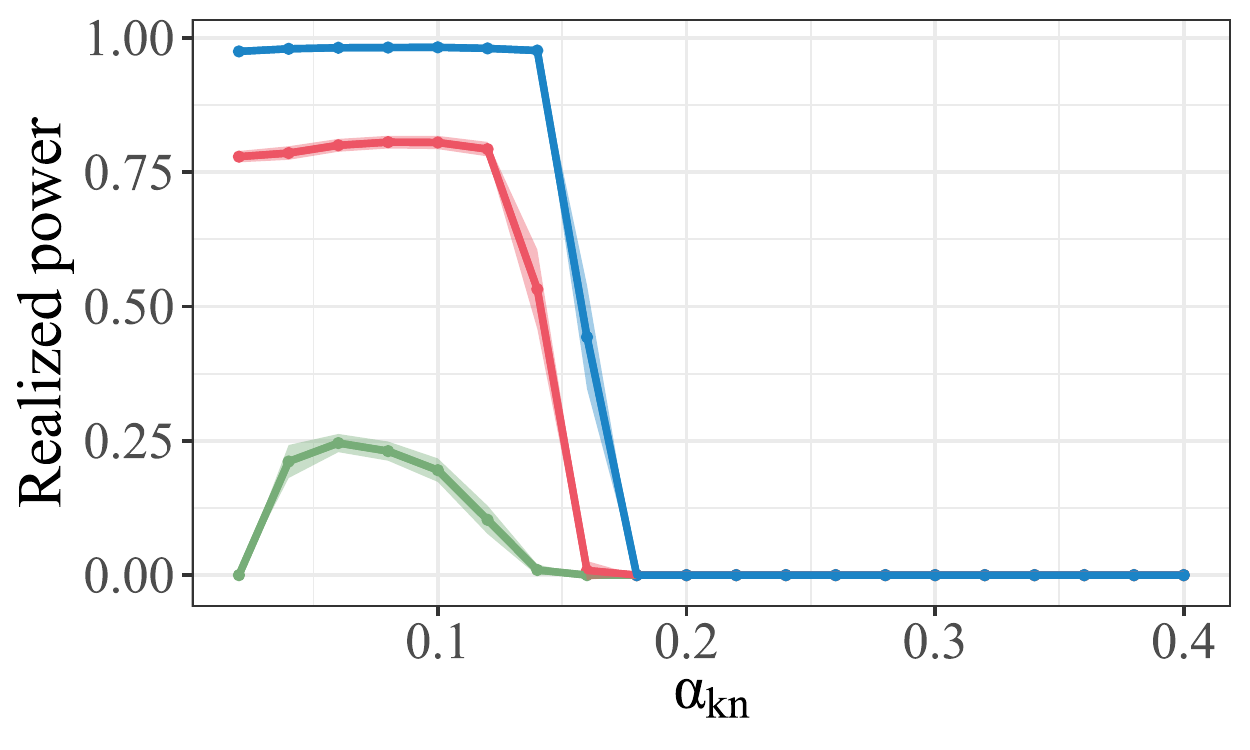}
\end{minipage}
\begin{minipage}{0.45\textwidth}
\centering
\includegraphics[width = 0.8\textwidth]{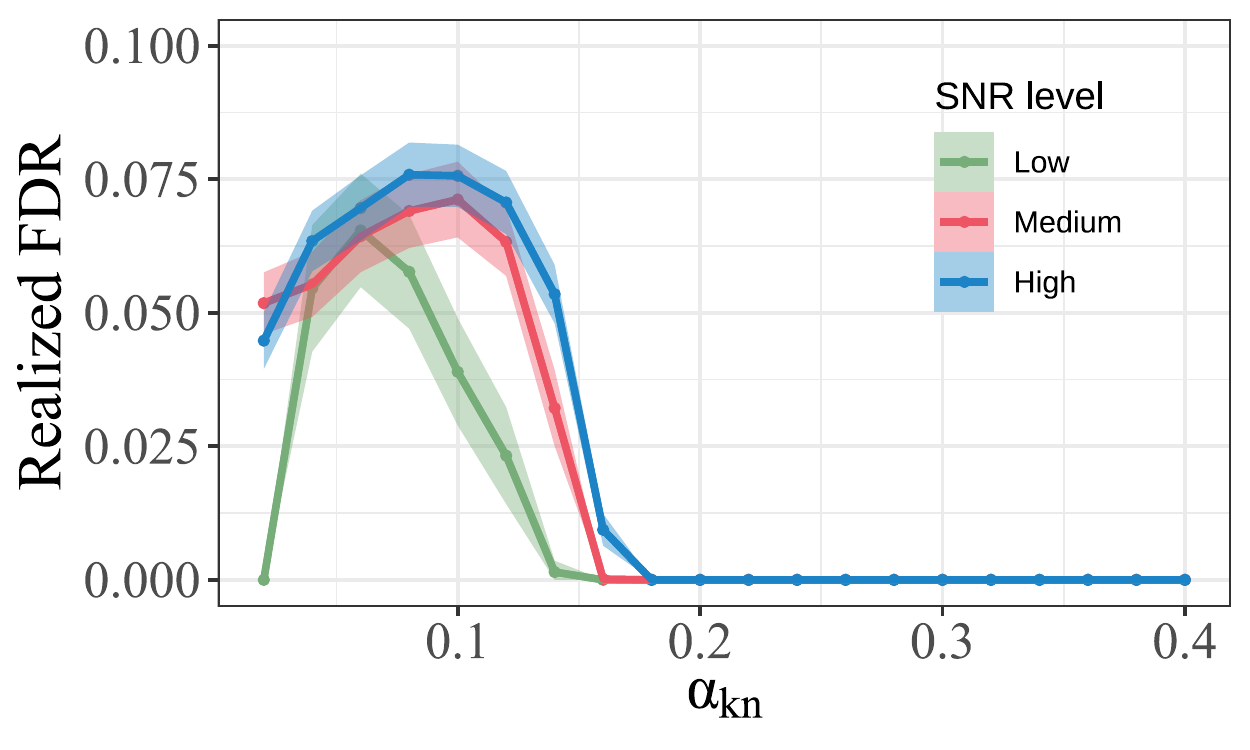}
\end{minipage}\\
\rotatebox{90}{Logistic}
\begin{minipage}{0.45\textwidth}
\centering
\includegraphics[width = 0.8\textwidth]{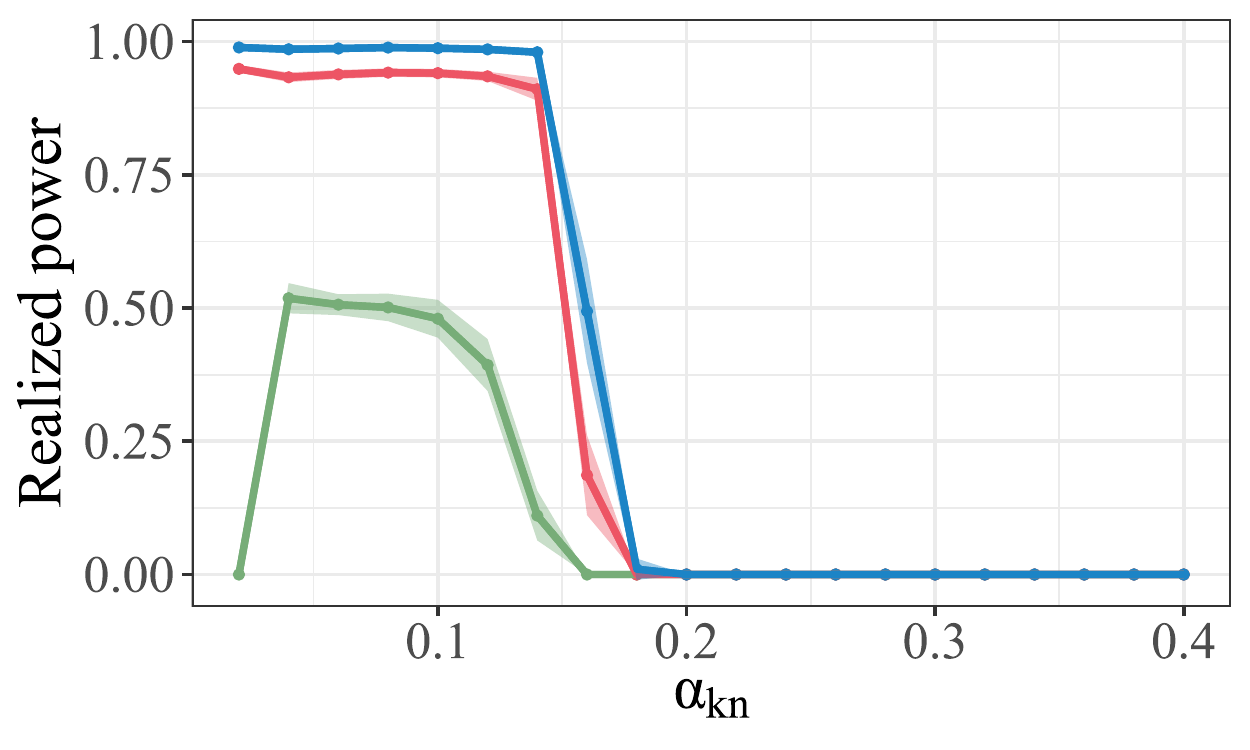}
\end{minipage}
\begin{minipage}{0.45\textwidth}
\centering
\includegraphics[width = 0.8\textwidth]{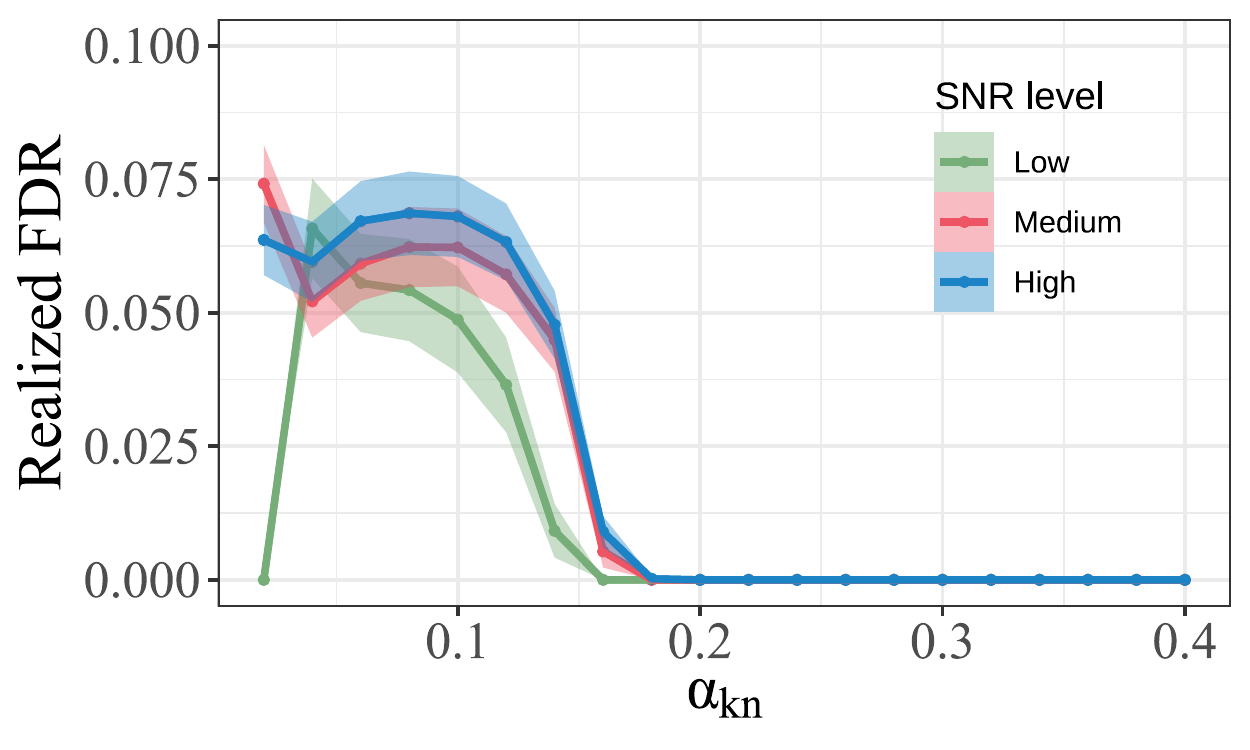}
\end{minipage}
\caption{Realized power (left) and FDR (right) of derandomized 
knockoffs as a function of the parameter $\alpha_{\kn}$ for the simulation 
data experiments.  The target FDR level $\alpha_{\ebh} = 0.2$. The details 
are otherwise the same as in Figure~\ref{fig:alpha}.}
\label{fig:alpha_2}
\end{figure}
\subsection{A high-dimensional setting}
Consider $n=800$ samples and $p=1000$ features, with
$|\calH_0| = 100$. The marginal distribution of the 
features $P_X = \calN(0,\Sigma)$ where 
$\Sigma_{jk} = 0.5^{|j-k|}$ for all $j,k\in[p]$.
The model of $Y\given X$ is the same Gaussian
linear model as in Section~\ref{sec:simulation};
the coefficient vector $\beta$ is given by
\$
\beta = \Big(\underbrace{0,\ldots,0}_{9},
\frac{\bar{\beta}_1}{\sqrt{n}},
\underbrace{0,\ldots,0}_{9}, 
-\frac{\bar{\beta}_2}{\sqrt{n}},
 \ldots, 
\underbrace{0,\ldots,0}_{9},
\frac{\bar{\beta}_{99}}{\sqrt{n}},
\underbrace{0,\ldots,0}_{9},
-\frac{\bar{\beta}_{100}}{\sqrt{n}}),
\$
where as before $\bar{\beta}$ is sampled 
from $\calN(A,I_{100})$ and fixed throughout 
the trials.
The signal amplitude $A$ ranges in $\{4,5,6,7,8,9\}$, and 
the implementation of original 
and derandomized knockoffs is the same as that 
in Section~\ref{sec:simulation}. 
Figure~\ref{fig:simulation_highdim} and~\ref{fig:simulation_highdim_prob}
present the results under the high-dimensional setting,
where we see a similar pattern: derandomized 
knockoffs exhibits comparable (or even slightly higher) power 
when the signal strength is moderately strong, while there 
is some power loss in the weak-signal regime. 
Derandomized knockoffs achieves lower FDR,
and decreased variability (especially for the conditional variability).
When we compare the marginal and conditional selection probability of 
individual hypotheses by both methods, we can also observe that the original knockoffs 
tends to have a higher selection probability for the null features that are rarely selected 
by the original knockoffs.

\begin{figure}[ht]
\centering
\rotatebox{90}{\hspace{-7ex}High-dimensional}
~
\begin{minipage}{0.3\textwidth}
\centering
\includegraphics[width = \textwidth]{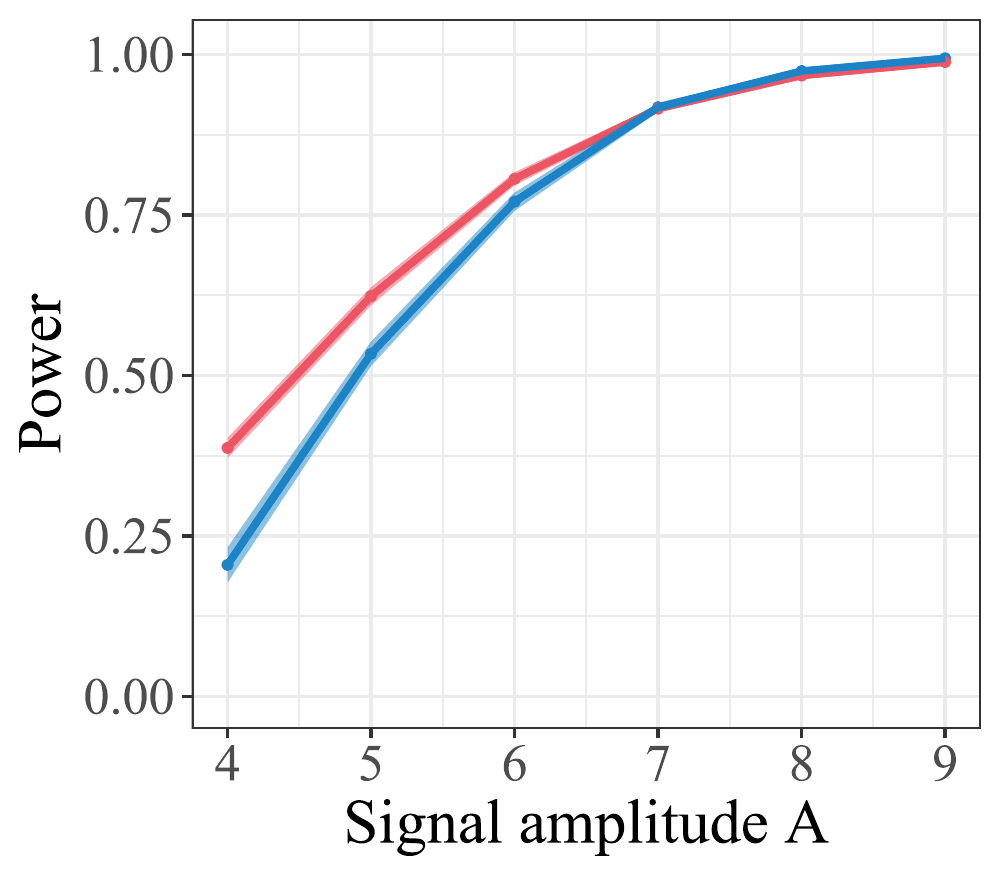}
\end{minipage}
\begin{minipage}{0.3\textwidth}
\centering
\includegraphics[width = \textwidth]{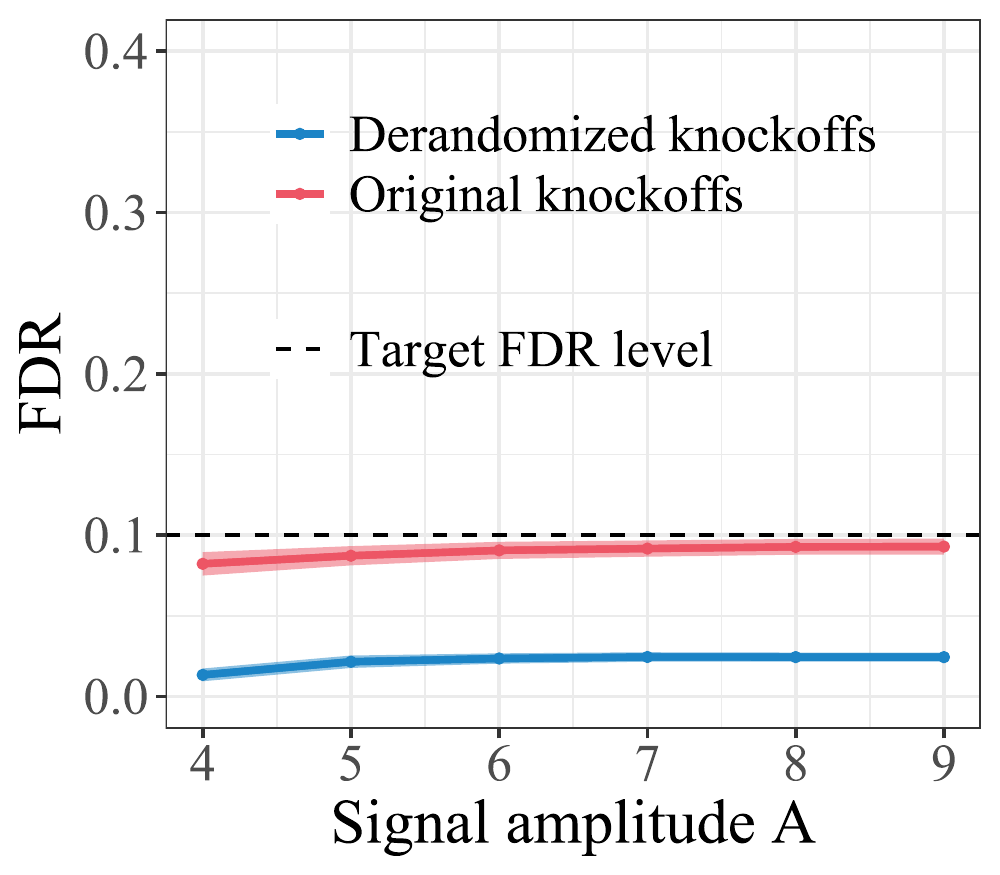}
\end{minipage}
\begin{minipage}{0.3\textwidth}
\centering
\includegraphics[width = \textwidth]{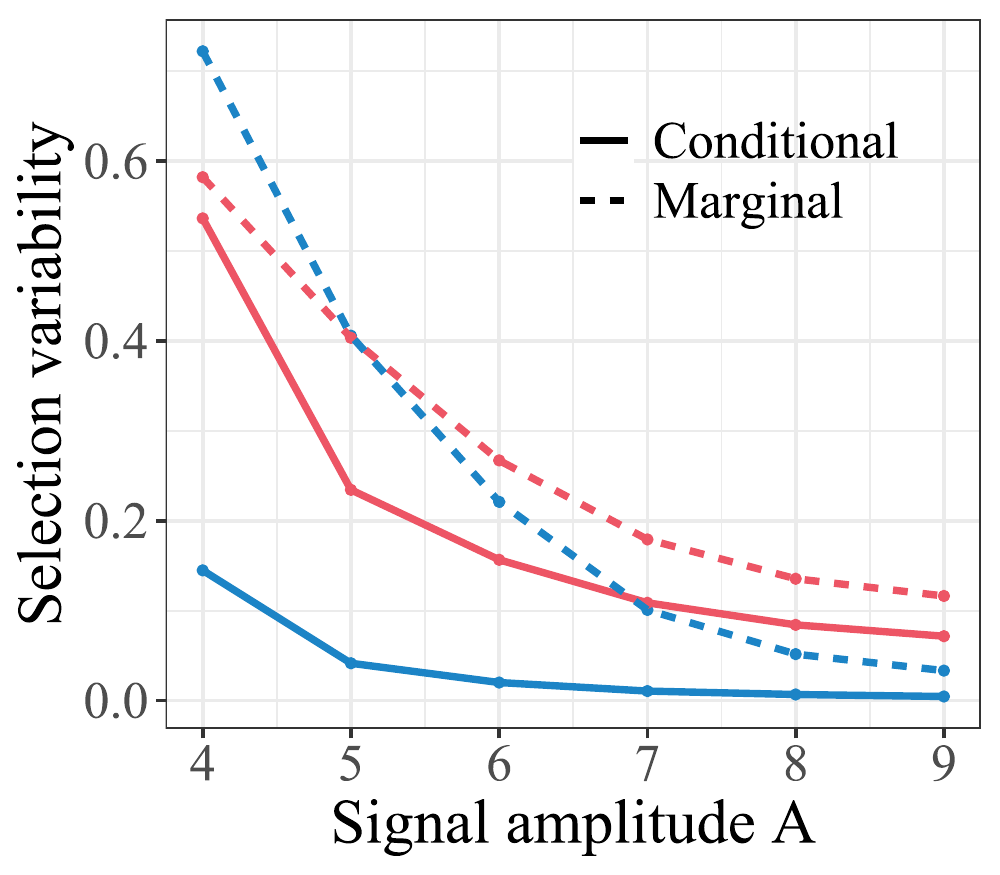}
\end{minipage}\\
\centering
\caption{Power, FDR,
and selection variability, for the high-dimensional simulated data experiment.
The other details are the same as in Figure~\ref{fig:simulation}.}
\label{fig:simulation_highdim}
\end{figure}

\begin{figure}[h] 
\centering 
\includegraphics[width=\textwidth]{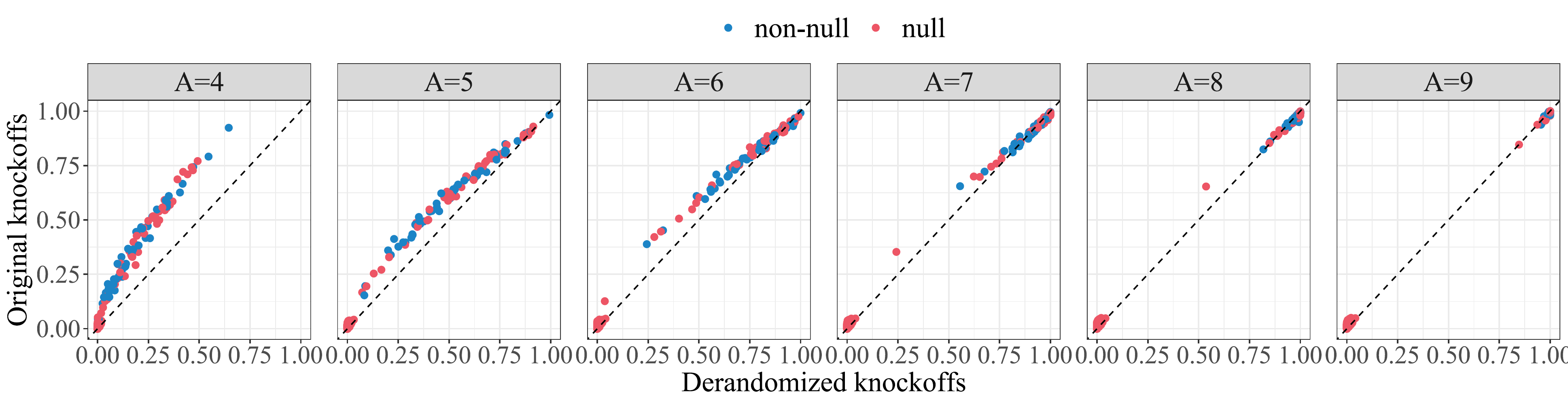}
\includegraphics[width=\textwidth]{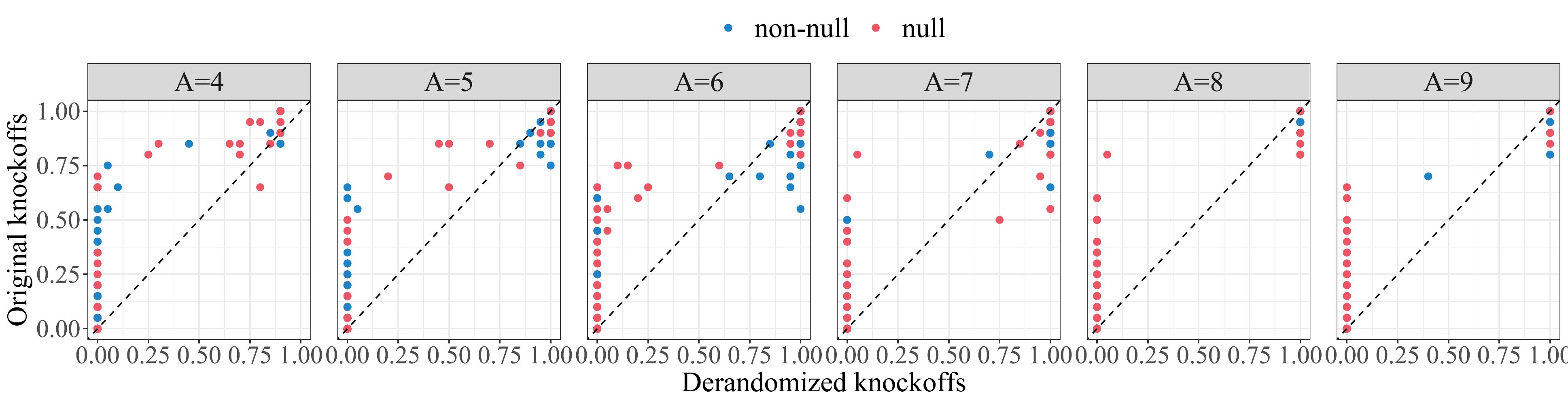}
\caption{Top: the marginal selection probability $\hat{p}_j$
by original knockoffs versus that by derandomized knockoffs.
Bottom: the conditional selection probability $\hat{p}_{j,1}$
by original knockoffs versus that by derandomized knockoffs. 
The results are from simulations under the high-dimensional setting.
The details are otherwise the same as in Figure~\ref{fig:sel_linear}.}
\label{fig:simulation_highdim_prob}
\end{figure}

\subsection{Comparison under simulation settings in~\citet{candes2018panning}}
We now compare our proposed method with the original knockoffs method
under two simulation settings considered in~\citet{candes2018panning}.
In both settings, there are $n=3000$ samples. For the low-dimensional setting, 
there are $p=1000$ features and in the high-dimensional setting, $p=6000$.
The marginal distribution of the features is $P_X = \calN(0,I_p)$ and the 
model $Y\given X$ is the same Gaussian linear model as in Section~\ref{sec:simulation},
where the coefficient vector $\beta$ contains $60$ nonzero entries whose location 
and signs are randomly determined and fixed throughout the trials. The nonzero entries
all have the same magnitude $\frac{A}{\sqrt{n}}$, where $A \in \{2,3,4,5\}$ in the low-dimensional 
setting and $A\in \{3,4,5,6\}$ in the high-dimensional one.
The implementation of original and derandomized knockoffs is the same as 
in Section~\ref{sec:simulation}, and  that of original knockoffs also 
coincides with~\citet{candes2018panning}, except that~\citet{candes2018panning}
use a selection threshold slightly different from~\eqref{eqn:define_knockoff} and 
is designed for controlling a relaxed version of FDR.

Figure~\ref{fig:original} shows the power and FDR of the two methods 
under the two settings. We again see that derandomized knockoffs exhibits 
comparable power as original knockoffs when the signals are reasonably strong,
and there is some power loss in the weak-signal regime. The realized FDR of derandomized knockoffs 
is consistently lower than that of original knockoffs.

\begin{figure}[h]
\centering
\rotatebox{90}{\hspace{-7ex}Low-dimensional}~
\begin{minipage}{0.45\textwidth}
\centering
\includegraphics[width = 0.9\textwidth]{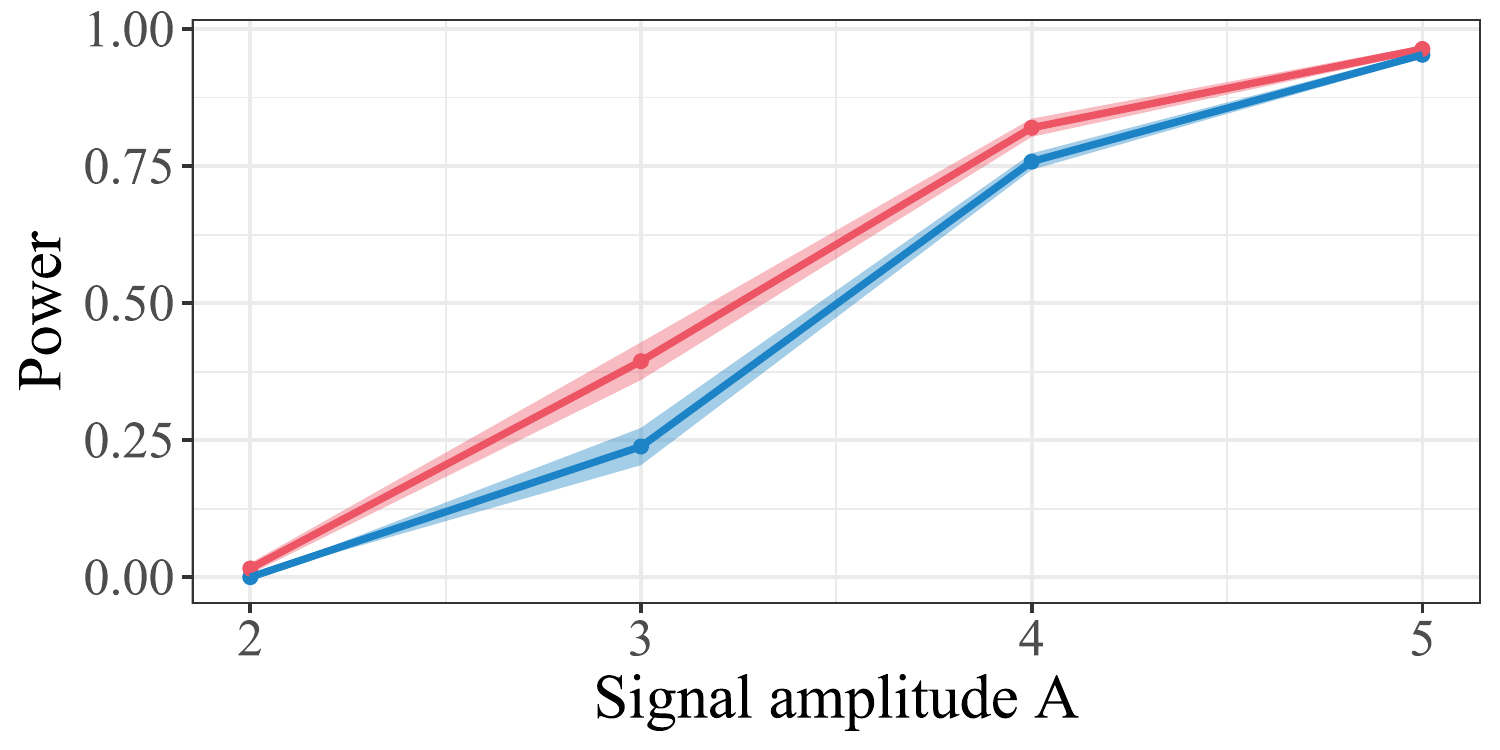}
\end{minipage}
\begin{minipage}{0.45\textwidth}
\centering
\includegraphics[width = 0.9\textwidth]{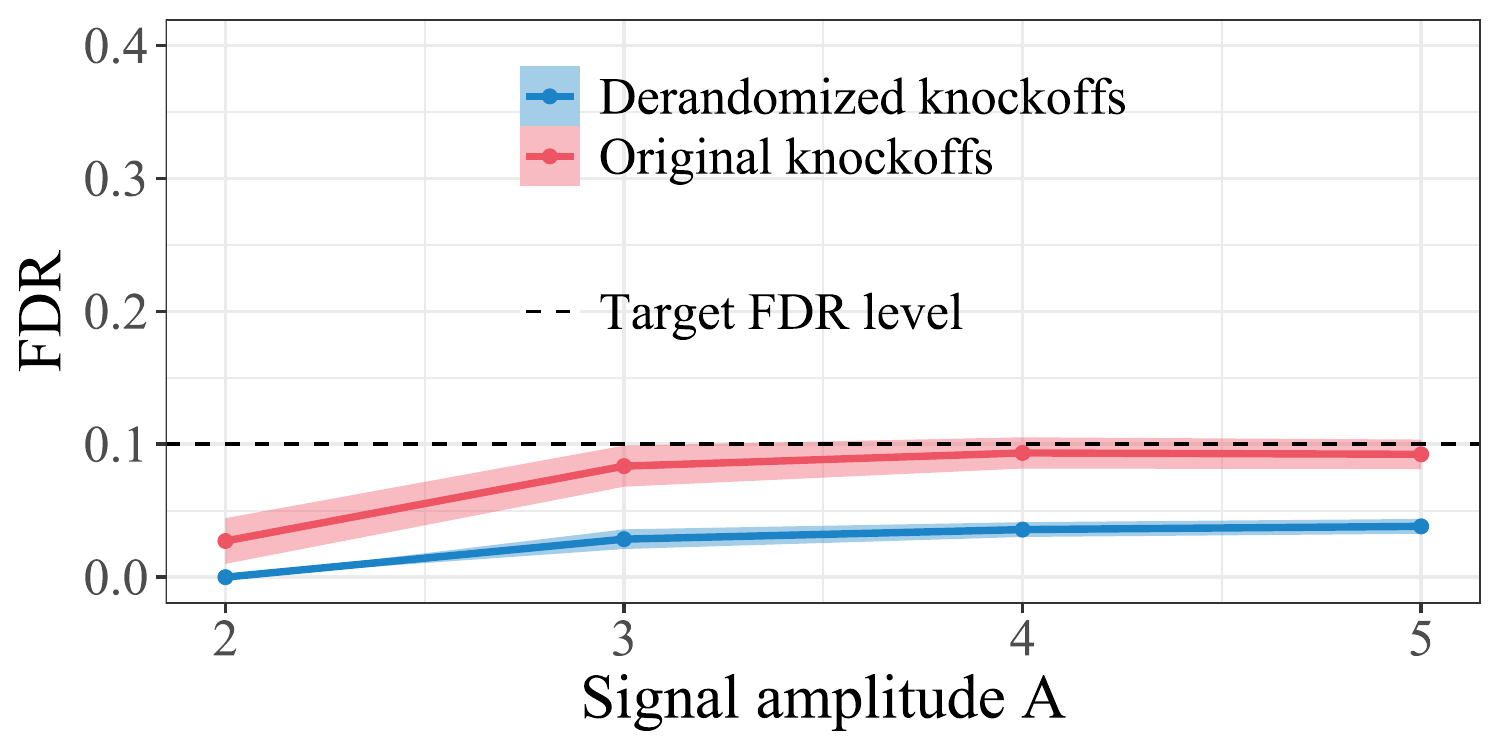}
\end{minipage}\\
\rotatebox{90}{\hspace{-7ex}High-dimensional}~
\begin{minipage}{0.45\textwidth}
\centering
\includegraphics[width = 0.9\textwidth]{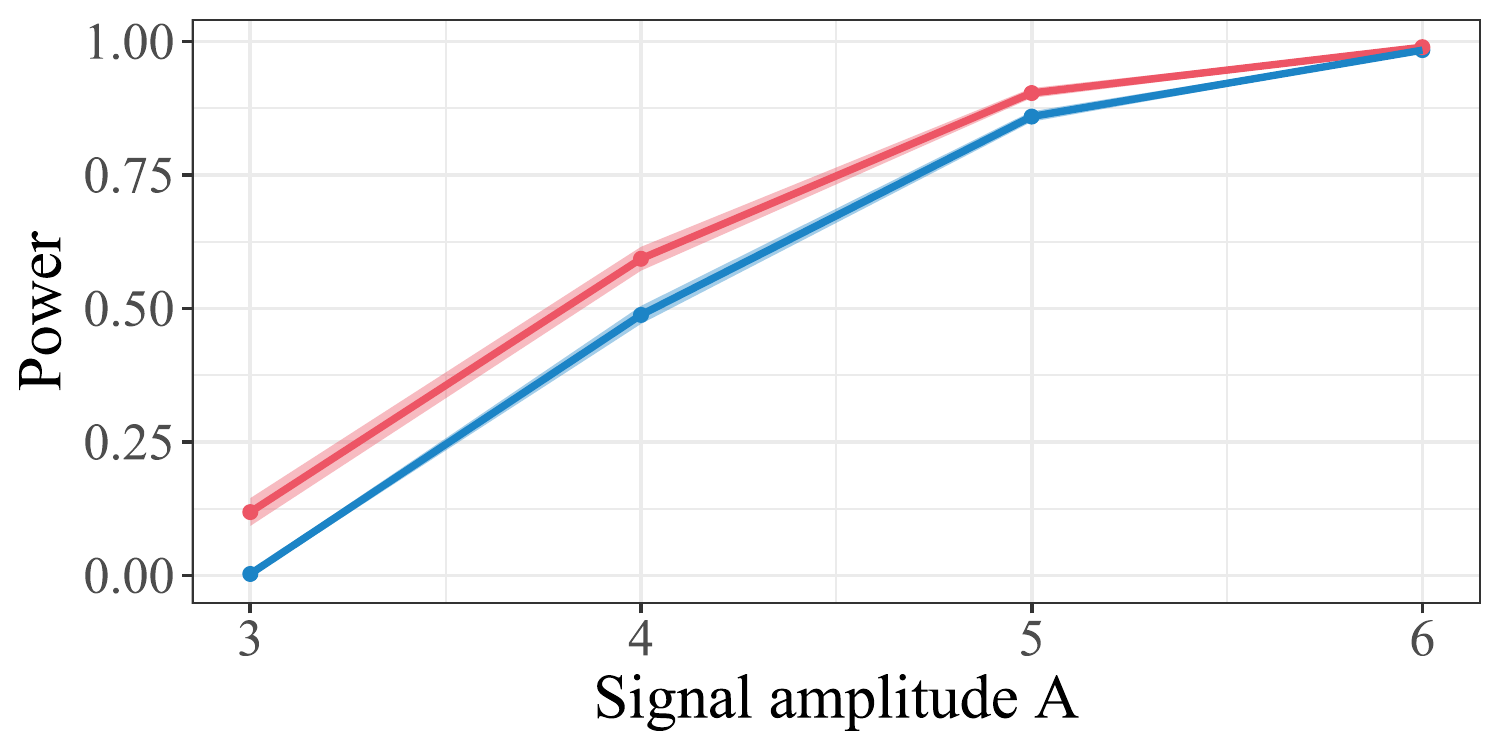}
\end{minipage}
\begin{minipage}{0.45\textwidth}
\centering
\includegraphics[width = 0.9\textwidth]{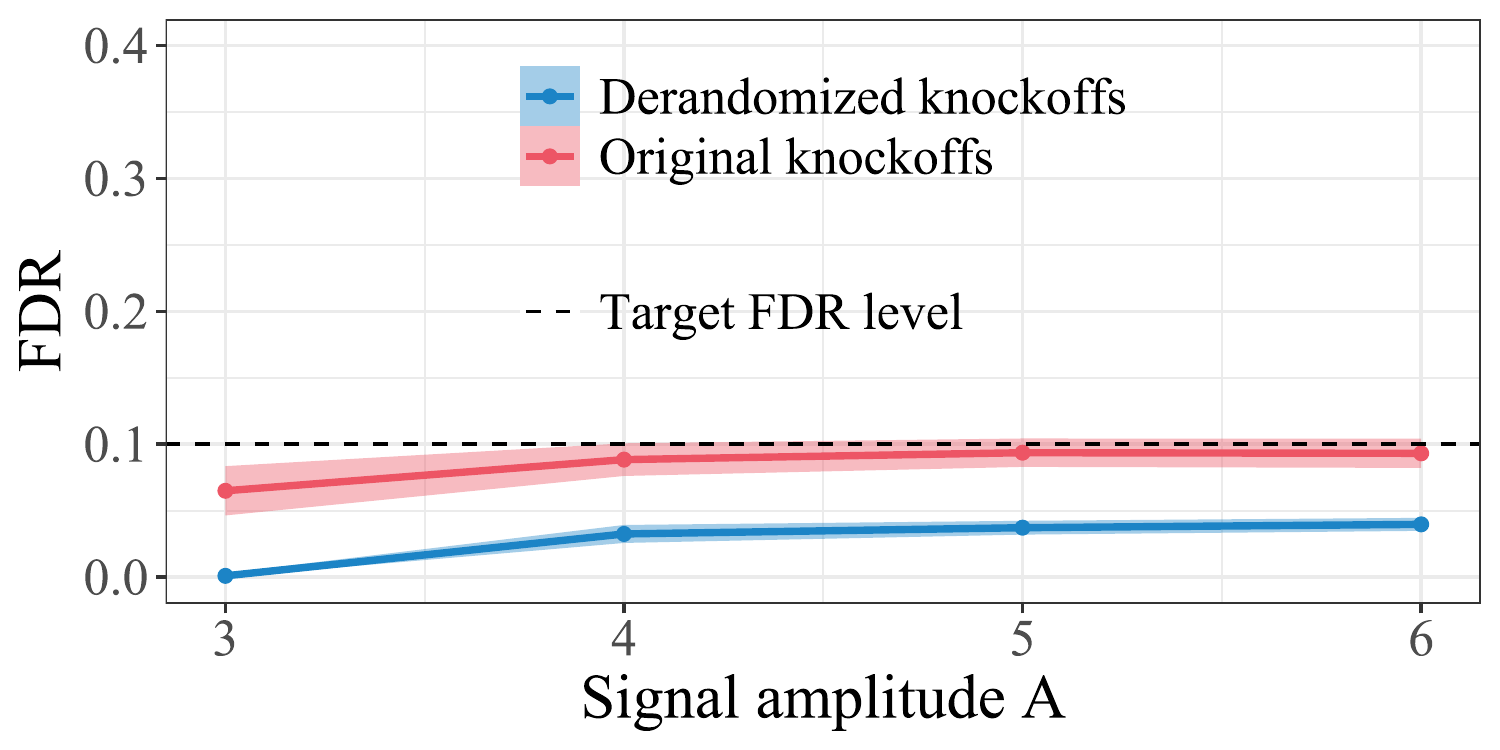}
\end{minipage}
\caption{Power and FDR under the low-dimensional simulation setting (top)
and the high-dimensional simulation setting (bottom)
considered in~\citet{candes2018panning}. Shading indicates
error bars, and the results are averaged over 100 independent 
trials.}
\label{fig:original}
\end{figure}

\subsection{Comparison with the derandomization
scheme of~\citet{ren2021derandomizing}}
As discussed earlier,~\citet{ren2021derandomizing}
propose a scheme for derandomizing the knockoffs
by thresholding the (unweighted) selection frequency. 
The algorithm proposed there is designed for the
purpose of per family-wise error (PFER) control; it
is nevertheless of interest to compare the two 
methods empirically. Hereafter, we refer to the 
method of~\citet{ren2021derandomizing} as the 
{\em PFER version}, and the method proposed in this 
paper the {\em FDR version}. We compare the two methods
under the Gaussian linear model defined in Section~\ref{sec:simulation}.
For both methods, we generate $M=50$ knockoff copies
given the dataset $(\bX,\bY)$. In the $m$-th run of
the PFER version, we generate a knockoff copy $\tilde{\bX}^{(m)}$
and compute the the feature importance statistics
$W^{(m)} = \calW([\bX,\tilde{\bX}^{(m)}],\bY)$ as
in the FDR version. We then apply the filter 
of~\citet{janson2016familywise} (designed for 
the PFER control) to $W^{(m)}$ 
with a sequence of parameters $v \in \calV \defn\{1,2,3,4\}$.
Here, the parameter $v$ suggests the target PFER level 
the filter controls (i.e.~$\EE[|\calS^{(m)} \cap \calH_0|]\le v$),
and with each $v$, we obtain a selected set 
$\hat{\calS}^{(v,m)}$. Next, we compute 
$\Pi_j^{(v)} = \frac{1}{M}\sum_{m=1}^M 
\ind\{j\in\hat{\calS}^{(v,m)}\}$ for all
$v \in \calV$ and $j\in[p]$. Finally, 
the set of discoveries corresponding to a parameter
$v$ is given by $\hat{\calS}^{(v)} = \{j: \Pi_j^{(v)} \ge 0.5\}$.
For each $v$, the aggregated selected set $\hat{\calS}^{(v)}$
has PFER controlled at level $2v$, but is not guaranteed to control
the FDR; in order to compare it with our proposed method,
we compute the realized power and the realized 
FDR of these selected sets,
plotting the power as a function of the 
FDR (i.e., the ROC curve) in Figure~\ref{fig:roc_curve}. 
We implement the derandomized knockoffs in the 
same way as in Section~\ref{sec:simulation},
where we take $\alpha_{\ebh} = 0.1$ and $\alpha_{\kn} = 0.05$.
The realized power and FDR of the derandomized
knockoffs is marked as a rectangle in Figure~\ref{fig:roc_curve}.
We can see from the figures that at the same level of realized
FDR, the FDR version (i.e., the derandomized procedure proposed in our present work) achieves comparable power
with the previous version of \cite{ren2021derandomizing}.

\begin{figure}[ht]
\centering
\includegraphics[width = \textwidth]{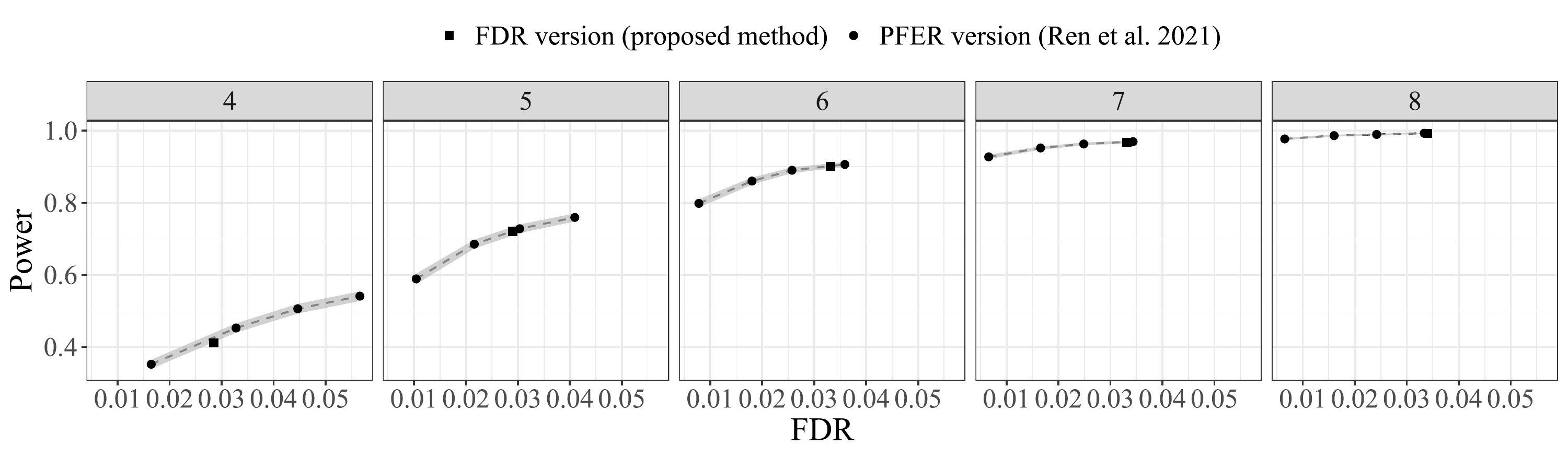}
\caption{The dashed line represents the realized power of 
the PFER version as a function of the realized FDR; the rectangle 
represents the realized power and FDR resulting from the FDR version
(with the target FDR level $\alpha_{\ebh} = 0.1$).}
\label{fig:roc_curve}
\end{figure}
\subsection{Robustness to the estimation error of $P_X$}\label{sec:sim_robust}
We empirically evaluate the robustness 
of the derandomized knockoffs procedure.
In this experiment, $n=600$ and $p=100$; the covariate 
distribution $P_X=\calN(0,\Sigma)$, where $\Sigma_{ij} = 0.5^{|i-j|}$.
$Y\given X$ follows a Gaussian linear model, with the coefficient vector being
\$ 
\beta = (\tfrac{A}{\sqrt{n}},0,-\tfrac{A}{\sqrt{n}},0,\ldots),
\$
and $A \in \{4.5,5,5.5,6,6.5\}$.
The implementation of derandomized knockoffs is the same as in 
Section~\ref{sec:simulation}, except that
now we assume no knowledge of 
covariance matrix $\Sigma$; instead, to estimate $\Sigma$, we are given an additional $n_0$ many
unlabeled samples (i.e., we observe $X$ but do not observe $Y$).
We construct knockoffs with $Q_X = \calN(0,\hat{\Sigma})$,
where $\hat{\Sigma}$ is the sample covariance
matrix computed with the $n_0$ unlabeled 
samples; we then apply the original knockoffs
and derandomized knockoffs, where
the derandomized version is implemented according to 
Algorithm~\ref{alg:aggregate_knockoff}
with the same parameters used in Section~\ref{sec:simulation}.
Figure~\ref{fig:robustness} describes
the realized power and FDR achieved by the original
knockoffs and the proposed method under the 
Gaussian linear model setting. In particular, the covariance
matrix is estimated with $n_0\in\{600, 700, 800, 900, 1000\}$.
Here, we observe the FDR to be controlled at the desired level for both methods,
with lower FDR for derandomized knockoffs as before. Derandomized
knockoffs shows power that is comparable to the original method across most settings,
although the power
is lower in one setting (the smallest value of $n_0$, where the estimated distribution of $X$ is most
unreliable).
\begin{figure}[ht]
\centering
\includegraphics[width = \textwidth]{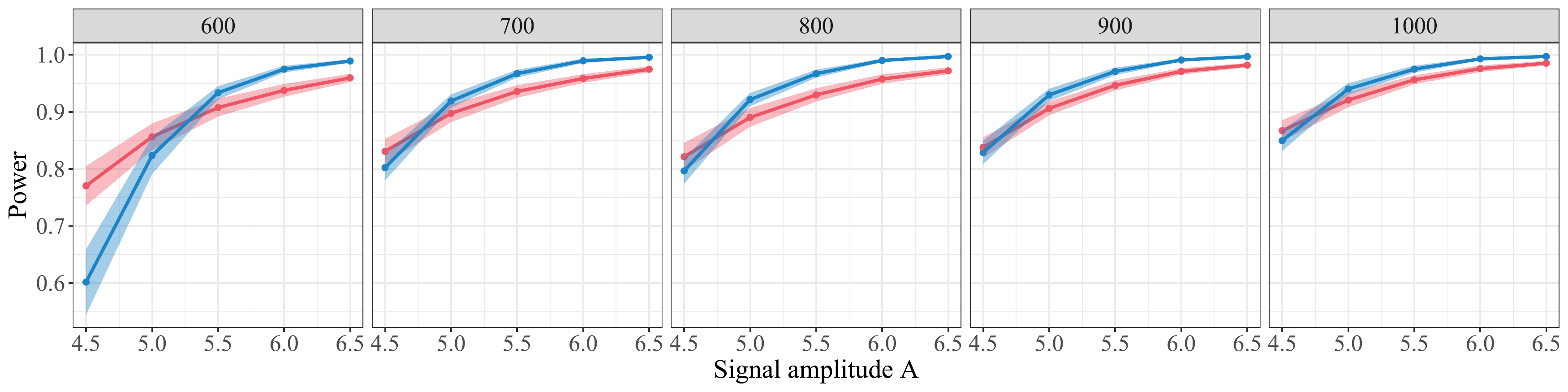}
\includegraphics[width = \textwidth]{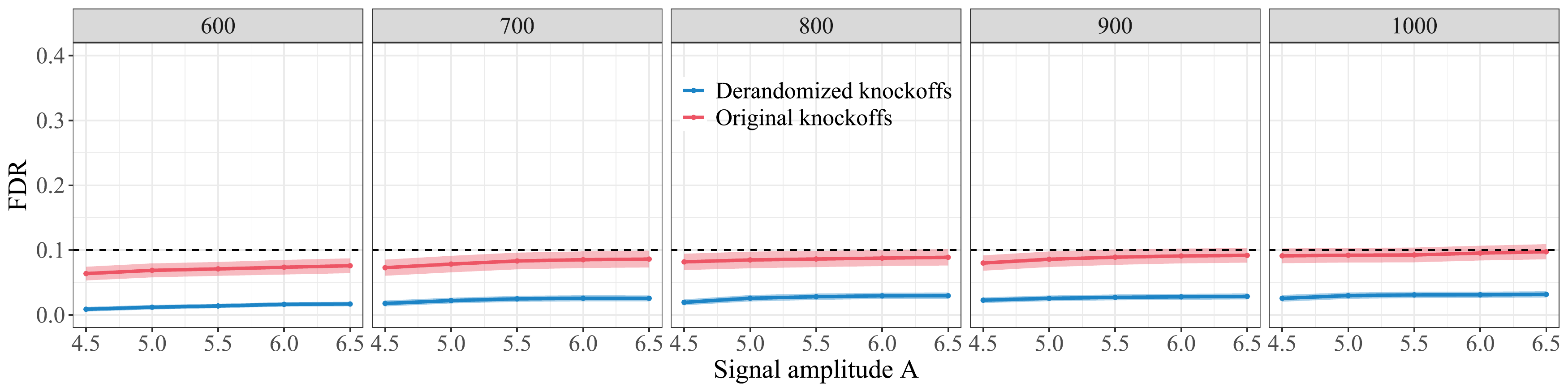}
\caption{Power (top) and FDR (bottom), for the Gaussian linear simulation
data experiments with estimated $P_X$. Each panel corresponds to a size
of unlabeled data for estimating $P_X$.}
\label{fig:robustness}
\end{figure}

\subsection{A multiple-environment setting}
We evaluate the performance of the algorithm
proposed in Section~\ref{sec:multi} in a 
multi-environment simulation setting. Suppose
$p=100$, and there are two environments. In
environment $e \in[2]$, we assume 
$X^e \sim \calN(0,\Sigma)$, where $\Sigma_{jk} = 0.5^{|j-k|}$
for all $j,k\in[p]$;
the response follows from a environment-specific
linear model:
$Y^e \sim \calN((X^e)^\top \beta^{(e)},\Sigma)$,
where $\beta^{(e)}$ is the coefficient vector. We
construct the coefficient vectors as:
\$\beta^{(1)} = (\underbrace{\tfrac{A}{\sqrt{n}},
\tfrac{A}{\sqrt{n}},\ldots,\tfrac{A}{\sqrt{n}}}_{50},0,\ldots,0),\quad
\beta^{(2)} = (\underbrace{\tfrac{A}{\sqrt{n}},\tfrac{A}{\sqrt{n}},\ldots,
\tfrac{A}{\sqrt{n}}}_{60},0,\ldots,0),
\$
and the signal amplitude $A$ ranges in $\{3,3.5,4,4.5,5,5.5\}$.
Here, $\{1,2,\ldots,50\}$ is the set of nonnulls, and 
the FDR target is $0.1$. We implement MEKF with the code
from~\url{https://github.com/lsn235711/MEKF_code}, where 
the empirical prior is used when constructing the multi-environment
feature importance statistics. The derandomized MEKF is
implemented with the same details, and we take $\alpha_{\ebh}
=0.1$ and $\alpha_{\kn} = 0.05$.
Figure~\ref{fig:multi} shows the power and FDR resulting 
from MEKF and the derandomized MEKF in this simulated 
setting---the derandomized version shows higher
power and lower FDR.

\begin{figure}[ht]
\centering
\begin{minipage}{0.35\textwidth}
\centering
\includegraphics[width = \textwidth]{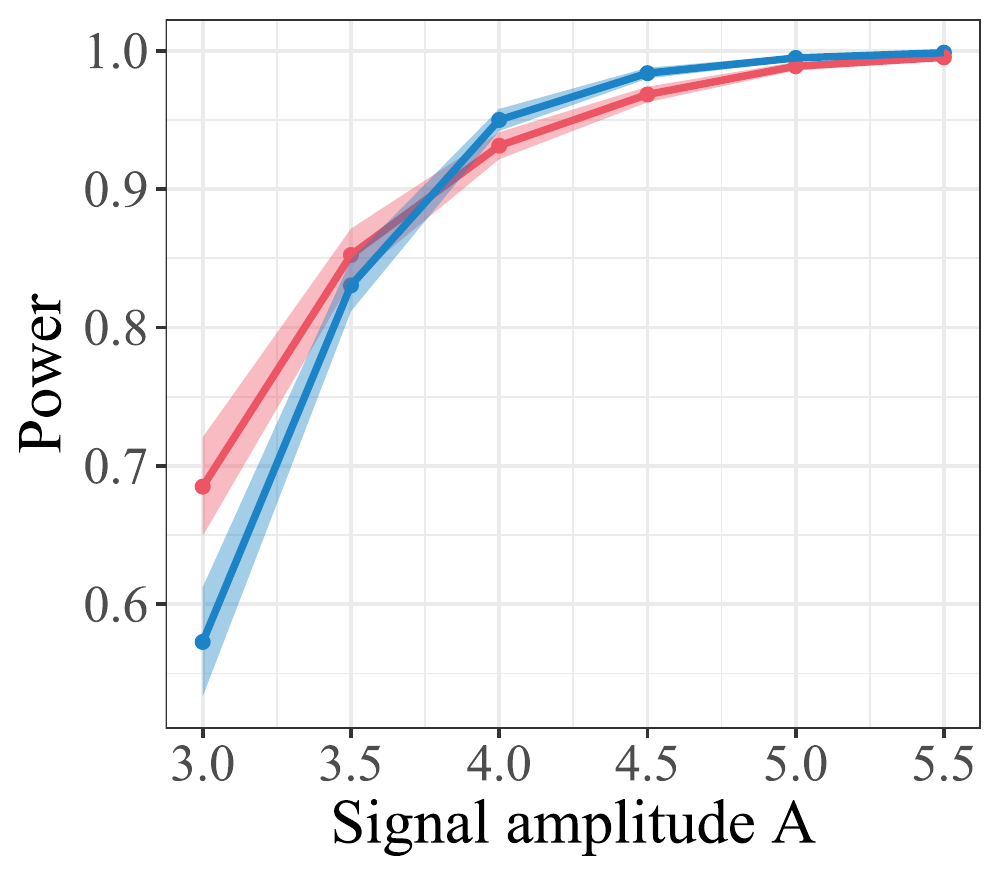}
\end{minipage}
\begin{minipage}{0.35\textwidth}
\centering
\includegraphics[width = \textwidth]{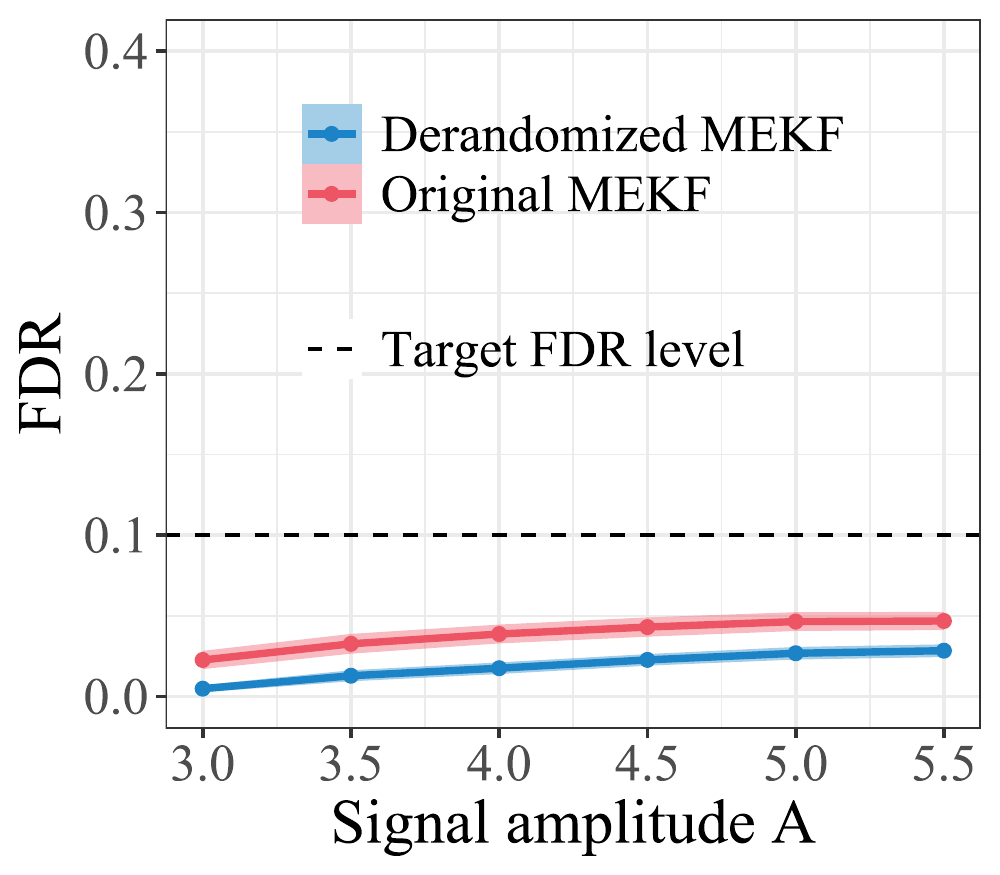}
\end{minipage}
\caption{Power and FDR, for the multi-environment simulated setting.
The other details are the same as in Figure~\ref{fig:simulation}.}
\label{fig:multi}
\end{figure}

\subsection{Side information}
We set up an experiment to illustrate how side
information can help improve the power of 
derandomized knockoffs.
The simulation setting is the same as the 
Gaussian linear model in Section~\ref{sec:sim_robust}
with a few changes. Here, we assume known $\Sigma$, 
and $A \in \{3,3.5,4,4.5,5,5.5\}$; the nonnulls are placed on 
$\{1,2,\ldots,50\}$ while the remaining features $\{51,\dots,100\}$ are null
(the index hence carries information, with lower values of $j$ corresponding to signals).
We apply the weighted version of derandomized knockoffs
with side information, using $u_j = \exp(-j)$.
The other implementation details are exactly the same
as in Section~\ref{sec:simulation}.
Figure~\ref{fig:sideinfo} plots the power and FDR
from the original knockoffs, derandomized knockoffs,
and the weighted version of derandomized knockoffs 
with side information. Clearly, the weighted version
achieves a significantly higher power by incorporating
side information, while still controlling FDR.

\begin{figure}[ht]
\centering
\begin{minipage}{0.35\textwidth}
\centering
\includegraphics[width = \textwidth]{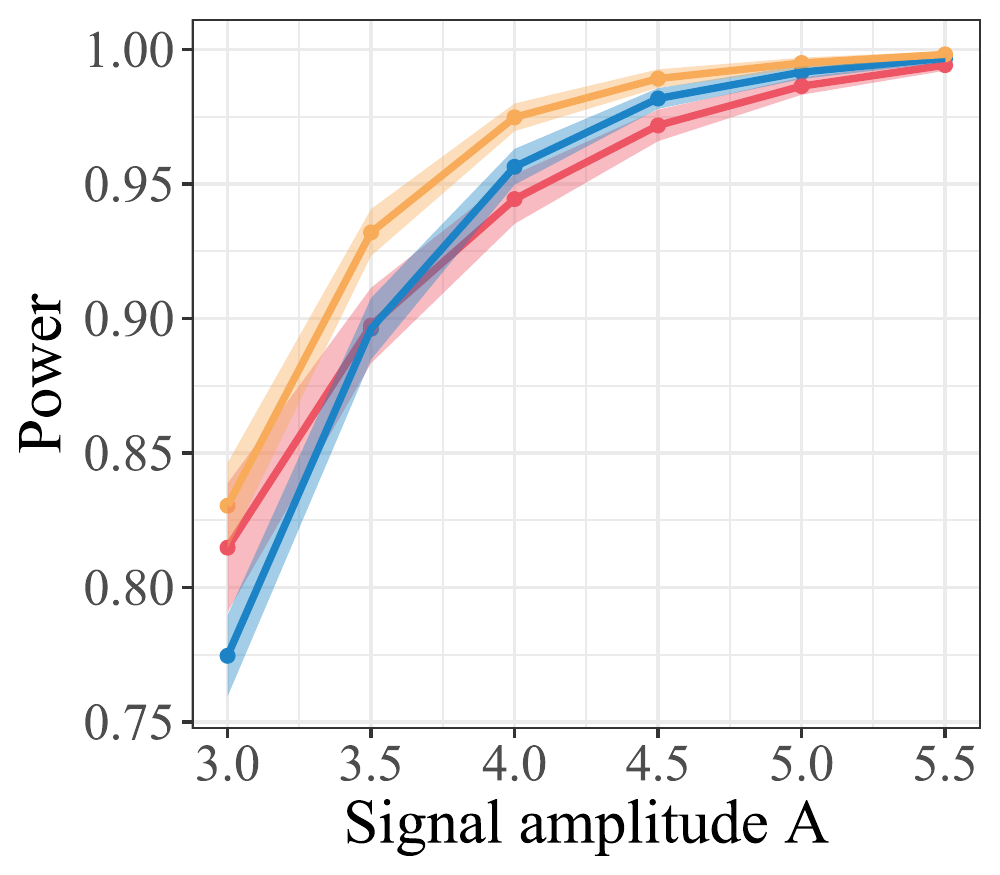}
\end{minipage}
\begin{minipage}{0.35\textwidth}
\centering
\includegraphics[width = \textwidth]{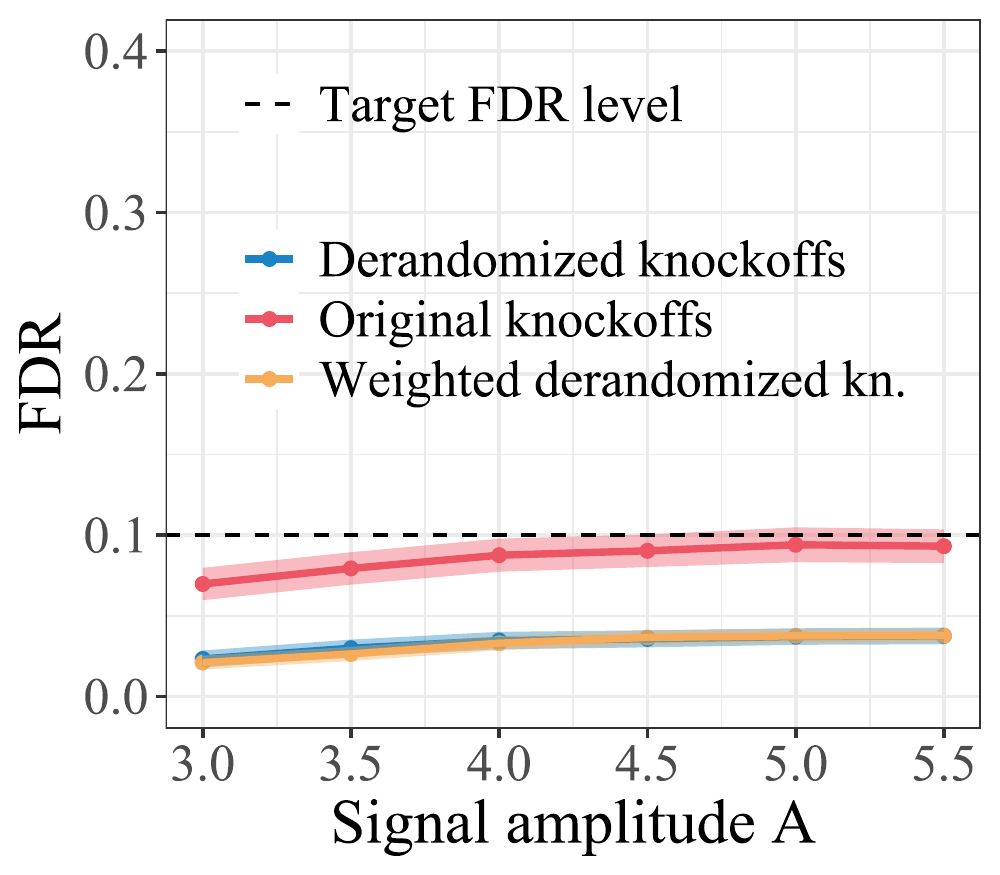}
\end{minipage}
\caption{Power and FDR, for the simulated experiments with side information.
The other details are the same as in Figure~\ref{fig:simulation}.}
\label{fig:sideinfo}
\end{figure}

\section{Additional results from the analysis of the HIV dataset}
\label{sec:discovery_list}

For the HIV dataset, over 50 independent runs of both methods,
the average number of discoveries made by the derandomized knockoffs
procedure  is $63.5$, and the average number of discoveries made by
the original knockoffs is $77.0$. As we have observed in the 
simulations, the derandomized knockoffs procedure often shows
a lower FDR while achieving comparable power with the 
original knockoffs---in other words, the original knockoffs would often
make more (false) discoveries to achieve the same detecting
power as the derandomized knockoffs. To examine whether this may be the case
in our real data example as well, we compare the discoveries
made by both methods with the reference found at
\url{https://hivdb.stanford.edu/dr-summary/comments/PI/}.
To be specific, for each method we take the set of mutations 
selected for more than half of the times; this results in
a set of 63 discoveries for the derandomized knockoffs, and
a set of 74 discoveries for the original version (which includes
all 63 of the discoveries made by derandomized knockoffs).

Table~\ref{tab:list_mut} contains a complete 
list of the 63 mutations selected in at least 50\% of the runs of the derandomized
knockoffs procedure, and Table~\ref{tab:list_mut_orig} shows the list
of the 74 mutations selected
 in at least 50\% of the runs of the original
knockoffs procedure.
In each table, the column ``Annotation'' contains 
the comments on the corresponding mutations from the reference,
where ``Major'', ``Accessory'' and ``Other'' refer to 
major effect, accessory effect and other effect on the drug 
resistance to PI (i.e., the discovery is confirmed); ``NA'' means no comments are found from the 
reference. We say a mutation is ``verified'' by the reference
if the annotation is not ``NA''. For derandomized knockoffs, Out of the 63 discoveries,
38 mutations are verified. For original knockoffs, out of the 74 discoveries, 39 are verified.
We thus see approximately the same number of
verified mutations made by the two methods, even though the original knockoffs method
makes substantially more discoveries overall---this can be viewed as
evidence that the derandomized knockoffs procedure maintains comparable detection
power while decreasing FDR and decreasing variability.

\newcommand{\ac}{Accessory}
\newcommand{\ot}{Other}
\begin{table}
\caption{
\label{tab:list_mut}
The complete list of mutations discovered
by derandomized knockoffs at FDR level $0.1$. 
The annotation is the reported effect 
from~\url{https://hivdb.stanford.edu/dr-summary/comments/PI/}.}
\centering
\fbox{\begin{tabular}{cc}
  Mutation & Annotation\\
\hline
6W & NA\\
10F & \ac\\
10I & \ot\\
10L & NA\\
10V & \ot\\
12P & NA\\
14R & NA\\
16A & NA\\
20I & \ot\\
20R & \ot\\
20T & \ac\\
20V & \ot\\
22V & NA\\
24I & \ac\\
32I & Major\\
33F & \ac\\
33L & NA\\
34Q & NA\\
36I & NA\\
36L & NA\\
36M & NA\\
37Q & NA\\
43T & \ac\\
46I & Major\\
46L & Major\\
47V & Major\\
48M & Major\\
48V & Major\\
50L & Major\\
50V & Major\\
53L & \ac\\
54L & \ac\\
\end{tabular}
\hspace{2ex}
\begin{tabular}{cc}
Mutation & Annotation\\
\hline
54M & Major\\
54S & Major\\
54T & Major\\
54V & Major\\
58E & \ac\\
61D & NA\\
62V & NA\\
63P & NA\\
64L & NA\\
69K & NA\\
71A & NA\\
71T & \ot\\
71V & \ot\\
72L & NA\\
73T & \ac\\
76V & Major\\
77I & NA\\
82A & Major\\
82F & Major\\
82S & Major\\
82T & Major\\
84A & Major\\
84V & Major\\
88D & \ac\\
89I & NA\\
89V & \ac\\
90M & Major\\
92K & NA\\
92Q & NA\\
93I & NA\\
95F & NA\\
&\\ 
\end{tabular}
}
\vspace{1em}
\end{table}

\begin{table}
\caption{
\label{tab:list_mut_orig}
The complete list of mutations selected for 
more than $50\%$ times (out of $50$ independent runs) by the
original knockoffs procedure
at FDR level $0.1$. The annotation
is the reported effect from~\url{https://hivdb.stanford.edu/dr-summary/comments/PI/}.}
\centering
\fbox{%
\begin{tabular}{cc}
Mutation & Annotation\\
\hline
6W & NA\\
10F & \ac\\
10I & \ot\\
10L & NA\\
10V & \ot\\
12P & NA\\
14R & NA\\
16A & NA\\
20I & \ot\\
20K & NA\\
20R & \ot\\
20T & \ac\\
20V & \ot\\
22V & NA\\
24I & \ac\\
32I & Major\\
33F & \ac\\
33L & NA\\
34Q & NA\\
35D & NA\\
36I & NA\\
36L & NA\\
36M & NA\\
37Q & NA\\
39Q & NA\\
43T & \ac\\
46I & Major\\
46L & Major\\
47V & Major\\
48M & Major\\
48V & Major\\
50L & Major\\
50V & Major\\
53L & \ac\\
54L & Major\\
54M & Major\\
54S & Major\\
\end{tabular}
\hspace{2ex}
\begin{tabular}{cc}
  Mutation & Annotation\\
  \hline
54T & Major\\
54V & Major\\
58E & \ac\\
61D & NA\\
61E & NA\\
62V & NA\\
63P & NA\\
64L & NA\\
67C & NA\\
67E & NA\\
67Y & NA\\
69K & NA\\
71A & NA\\
71I & \ot\\
71T & \ot\\
71V & \ot\\
72L & NA\\
72M & NA\\
73T & \ac\\
76V & Major\\
77I & NA\\
82A & Major\\
82F & Major\\
82S & Major\\
82T & Major\\
84A & Major\\
84V & Major\\
88D & \ac\\
89I & NA\\
89V & \ac\\
90M & Major\\
91S & NA\\
92K & NA\\
92Q & NA\\
93I & NA\\
93L & NA\\
95F & NA\\
\end{tabular}
}
\vspace{1em}
\end{table}

\clearpage
\bibliographystyle{rss}
\bibliography{ref}

\section*{List of figure legends}
\begin{itemize}
\item Figure~\ref{fig:alpha}:  Realized power (left) and FDR (right) of derandomized 
knockoffs as a function of the parameter $\alpha_{\kn}$ for the simulation 
data experiments. The offset parameter $c=1$. 
Shading for the power and FDR plots indicates error bars.
The target FDR level $\alpha_{\ebh} = 0.1$.
Results are averaged over $100$ independent trials.
\item Figure~\ref{fig:simulation}: Power, FDR,
and selection variability, for the simulated data experiments.
Shading for the power and FDR plots indicates error bars.
Results are averaged over $100$ independent trials. 
\item Figure~\ref{fig:sel_linear}: Top: the marginal selection probability $\hat{p}_j$
by original knockoffs versus that by derandomized knockoffs.
Bottom: the conditional selection probability $\hat{p}_{j,1}$
by original knockoffs versus that by derandomized knockoffs. 
The results are from simulations under the Gaussian linear model.
Each point corresponds to a feature, where the blue ones are 
non-nulls and the red ones are nulls. For the given dataset, 
many null features never selected by derandomized knockoffs have
large selection probability by the original knockoffs.
\item Figure~\ref{fig:sel_logistic}: The results are from simulations under the logistic model.
The other details are the same as in Figure~\ref{fig:sel_logistic}. 
\item Figure~\ref{fig:realdata}: Results for the HIV data experiment. Left: boxplot of the number of discoveries for original and derandomized knockoffs,
over 50 independent trials. Right: histogram of the selection probability for each feature $j\in[p]$, over 50 independent
trials, for original and derandomized knockoffs.
\item Figure~\ref{fig:heatmap}: Heatmaps of the power of derandomized knockoffs with different choices 
of $(c,\alpha_{\kn})$. The left, middle and right columns correspond to low, medium 
and high signal amplitude, respectively. A darker color represents a higher value. 
In these experiments, we set $\alpha_{\ebh} = 0.1$, and we observe that power is consistently 
high around $\alpha_{\kn}=0.05$, justifying our intuition that $\alpha_{\kn} = \alpha_{\ebh}/2$
is a good default setting.
\item Figure~\ref{fig:heatmap_2}: Heatmaps of the power of derandomized knockoffs with different choices 
of $(c,\alpha_{\kn})$. The target FDR level $\alpha_{\ebh} = 0.2$, and the power is 
consistently high around $\alpha_{\kn} = 0.1$. 
The other details are the same as in Figure~\ref{fig:heatmap}.
\item Figure~\ref{fig:alpha_2}: Realized power (left) and FDR (right) of derandomized 
knockoffs as a function of the parameter $\alpha_{\kn}$ for the simulation 
data experiments.  The target FDR level $\alpha_{\ebh} = 0.2$. The details 
are otherwise the same as in Figure~\ref{fig:alpha}. 
\item Figure~\ref{fig:simulation_highdim}: Power, FDR,
and selection variability, for the high-dimensional simulated data experiment.
The other details are the same as in Figure~\ref{fig:simulation}.
\item Figure~\ref{fig:simulation_highdim_prob}: Top: the marginal selection probability $\hat{p}_j$
by original knockoffs versus that by derandomized knockoffs.
Bottom: the conditional selection probability $\hat{p}_{j,1}$
by original knockoffs versus that by derandomized knockoffs. 
The results are from simulations under the high-dimensional setting.
The details are otherwise the same as in Figure~\ref{fig:sel_linear}.
\item Figure~\ref{fig:original}: Power and FDR under the low-dimensional simulation setting (top)
and the high-dimensional simulation setting (bottom)
considered in~\citet{candes2018panning}. Shading indicates
error bars, and the results are averaged over 100 independent 
trials.
\item Figure~\ref{fig:roc_curve}: The dashed line represents the realized power of 
the PFER version as a function of the realized FDR; the rectangle 
represents the realized power and FDR resulting from the FDR version
(with the target FDR level $\alpha_{\ebh} = 0.1$).
\item Figure~\ref{fig:robustness}: Power (top) and FDR (bottom), for the Gaussian linear simulation
data experiments with estimated $P_X$. Each panel corresponds to a size
of unlabeled data for estimating $P_X$.
\item Figure~\ref{fig:multi}: Power and FDR, for the multi-environment simulated setting.
The other details are the same as in Figure~\ref{fig:simulation}. 
\item Figure~\ref{fig:sideinfo}: Power and FDR, for the simulated experiments with side information.
The other details are the same as in Figure~\ref{fig:simulation}.
\end{itemize}

\end{document}